\crefname{claim}{Claim}{Claims}
\crefname{condition}{Condition}{Conditions}
\newcommand{\oma}{\bar{\mathcal{A}}}
\newcommand{\pmax}{\bar{v}}
\newcommand{\cS}{\mathcal{S}}
\newcommand{\cU}{\mathcal{U}}
\newcommand{\cN}{\mathcal{N}}
\newcommand{\cV}{\mathcal{V}}
\newcommand{\attributevalues}[1]{\mathcal{V}^{#1}}
\newcommand{\attributecriteria}[1]{\mathcal{Y}_{#1}}
\newcommand{\bp}{\mathbf{p}}
\newcommand{\bq}{\mathbf{q}}
\newcommand{\pvec}{\mathbf{p}}
\newcommand{\nearestminprice}{p^{\min}(\cS)}
\newcommand{\nopriceconsiderationset}{\overline{\mathcal{N}}_c}
\newcommand{\npaplus}[2]{\overline{\lambda}_{{#1}\succ{#2}}}
\newcommand{\equalpref}[2]{\overline{\lambda}_{{#1}={#2}}}
\newcommand{\Prob}[1]{\mathbb{P}\left(#1\right)}
\mathchardef\mhyphen="2D 
\DeclareMathOperator*{\argmin}{arg\,min}
\DeclareMathOperator*{\argmax}{arg\,max}
\newcommand{\Halmos}{{}}
\newtheorem{theorem}{Theorem}
\numberwithin{theorem}{section}
\newtheorem{definition}[theorem]{Definition}
\newtheorem{remark}[theorem]{Remark}
\newtheorem{lemma}[theorem]{Lemma}
\newtheorem{proposition}[theorem]{Proposition}
\newtheorem{corollary}[theorem]{Corollary}
\newtheorem{example}[theorem]{Example}
\newtheorem{assumption}{Assumption}
\begin{document}
	\title{Price Competition Under A Consider-Then-Choose Model With Lexicographic Choice}
        \author[1]{Siddhartha Banerjee}
        \author[2]{Chamsi Hssaine}
        \author[3]{Vijay Kamble}
        \affil[1]{Cornell University, Ithaca, NY}
        \affil[2]{University of Southern California, Marshall School of Business, Los Angeles, CA}
        \affil[3]{University of Illinois at Chicago, Chicago, IL}

	\date{}
	\maketitle

	\begin{abstract}
        The sorting and filtering capabilities offered by modern e-commerce platforms significantly impact customers' purchase decisions, as well as the resulting prices set by competing sellers on these platforms. Motivated by this practical reality, we study price competition under a flexible choice model: Consider-then-Choose with Lexicographic Choice (CLC). In this model, a customer first forms a consideration set of sellers based on (i) her willingness-to-pay and (ii) an arbitrary set of criteria on items' non-price attributes; she then chooses the highest-ranked item according to a lexicographic ranking in which items with better performance on more important attributes are ranked higher. We provide a structural characterization of equilibria in the resulting game of price competition, and derive an economically interpretable condition, which we call gradient dominance, under which equilibria can be computed efficiently. For this subclass of CLC models, we prove that distributed gradient-based pricing dynamics converge to the set of equilibria. Extensive numerical experiments show robustness of our theoretical findings when gradient dominance does not hold.
	\end{abstract}
	

\section{Introduction}\label{sec:intro}

The rise of the e-commerce platform economy has introduced {fundamental} changes to the environment in which sellers price their goods and services. In particular, the supply diversity available on these platforms has impacted customer behavior in two important, related ways:
\begin{enumerate}
\item Customers are now faced with the burden of choice, with a plethora of seller offerings differing across a range of attributes such as price, quality, and popularity.
\item Limited in time and information, customers increasingly rely on platforms' sorting and filtering functionalities to search for offerings \citep{ansari2003customization,granka2004eye,joachims2005accurately,ghose2009empirical,gao2022joint}, and often resort to ``fast and frugal'' decision-making heuristics to make a final selection \citep{kohli2007representation, bansal_product_2009,gallego2023constrained}.
\end{enumerate}

The goal of this paper is to study equilibrium pricing outcomes that arise on e-commerce platforms as a result of these salient features of customer behavior. Tractably predicting market outcomes in these environments is important for platform operators, who can use the resulting insights to guide design decisions. For instance, platform operators can use predicted outcomes to design more effective search functionalities; they can use these predictions to monitor anomalies and identify potential collusion between sellers; in the case of equilibrium multiplicity, they can find ways to nudge sellers in the direction of welfare-optimal equilibria. However, the complexity of these environments emerging from product diversity, customer preference diversity, and platform design makes analyzing outcomes of price competition particularly challenging. This complexity further exacerbates the basic concern that computing equilibria is notoriously hard, even in discrete two-player games \citep{daskalakis2009complexity}. Hence, predicting market outcomes in these intricate environments requires models of price competition that achieve two conflicting objectives: (i) they must be flexible enough to capture platform-mediated customer choice behavior on these platforms, and (ii) they must be conducive to tractable equilibrium computation and analysis.  

Toward attaining these objectives, we consider a model of purchase behavior that is specifically tailored to capture customers' response to platforms' filtering and sorting capabilities, and that generalizes lexicographic choice, a decision-making heuristic for multi-attribute choice that has been extensively studied and empirically validated in the behavioral economics and marketing literature \citep{alexis1968consumer,fishburn1974exceptional,parkinson1979information, kohli2007representation,jedidi2008inferring,kohli2019randomized}. In this model --- which we call {\it Consider-then-Choose with Lexicographic Choice} --- customer decision-making is composed of two phases. The first phase, which models the use of platforms' filtering capabilities, involves the formation of a consideration set of options. In this phase, customers narrow down the set of feasible options by specifying (i) desirable non-price features (e.g., brand, rating range, size and specifications, etc.) and (ii) an upper bound on the price, reflecting their budget or maximum willingness-to-pay. The second phase involves the selection from this set according to the lexicographic choice heuristic. Specifically, customers have heterogeneous preference orderings over item attributes (e.g., price, quality, number of reviews), capturing which attributes they consider more important than others, and a (weak) ranking of items along each attribute. They first rank items with respect to their most preferred attribute, and choose the most desirable product according to that attribute. If they are indifferent amongst multiple products within this attribute-specific ranking, they evaluate the topmost products according to their second-most preferred attribute, choosing the top product for that attribute, and so on. 

The CLC model generalizes the basic lexicographic choice heuristic by incorporating the consideration phase in customers' decision-making process.
In addition to naturally modeling platforms' filtering functionalities, the notion that customers form a consideration set before choosing from a large range of options has also been empirically validated in the literature. Indeed, {\it consider-then-choose} models have been studied in areas spanning marketing, behavioral economics, and operations management \citep{jagabathula2017nonparametric, aouad2020assortment, jagabathula2020inferring}. The consideration phase moreover addresses a weakness of the vanilla lexicographic heuristic, illustrated by the following example. Consider for instance a setting in which there are two items defined by price and average rating, with one item priced at \$99 with a 2-star average rating, and one item priced at \$100 with a 5-star average rating. Under the basic lexicographic preference model, all customers who place higher importance on price over rating would choose the \$99 item. In practice, however, the choice of a drastically lower-quality item with only a marginally better price may be difficult to justify, even if the customer places a higher importance on price. The consideration phase addresses this weakness by allowing customers to specify acceptable ranges of different attributes (e.g., quality), for the formation of their consideration set. Once such guardrails on the acceptability of choices is specified, it becomes much more reasonable for the customer to resort to the simple lexicographic choice heuristic to make her final purchase decision. In the example above, the lower-rated item may simply not be included in the consideration set, despite its lower price, because its quality is lower than what is acceptable to the customer.

In addition to the strong empirical foundations of its two phases, the CLC class encompasses several anecdotally reported purchase behaviors. One prominent example is the ``{\it Best-I-Can-Afford}'' (BICA) model, in which items are defined by two attributes, price and quality, and customers have heterogeneous budgets (or willingness-to-pay) for the item. Customers only include items within their budget in their consideration set, and subsequently choose the highest-quality item in this consideration set (breaking any ties according to price, and then arbitrarily). Such behavior is common among quality-sensitive customers, as well as delegated purchasers (such as customers who can claim the purchase as a business expense). \Cref{ex:attributes} below illustrates that the CLC model encompasses a wide range of behaviors, in addition to BICA.

\begin{example}\label{ex:attributes}
{\it Consider a set of third-party sellers listing athletic T-shirts for women, with attributes $\{\texttt{brand}, \texttt{average rating},  \texttt{number of reviews}, \texttt{price}\}$. The brand attribute takes values in the set $\{\texttt{Nike}, \texttt{Adidas}, \texttt{Puma}\}$; each T-shirt has an average rating in $[0,5]$; and the number of reviews is in $\mathbb{N}$. The following three customers belong to the CLC class:
\begin{itemize}
\item {\bf A price-sensitive customer.} This customer is willing to pay at most \$25 and requires an average rating of at least 3. Of all such items, the customer chooses the cheapest one. If there are many lowest-priced items, she chooses the one with the highest rating, and breaks final ties by number of reviews.
\item {\bf A quality-sensitive customer.} This customer is willing to pay at most \$80 and requires an average rating of at least 4.5 with at least 100 reviews. Of all such items, she chooses the one with the highest average rating, breaking ties according to the number of reviews, followed by price. 
\item {\bf A loyal customer.} This customer is willing to pay \$100 and only considers \texttt{Nike} products. Of all such products, she purchases the highest rated one, breaking ties according to the number of reviews, followed by price.
\end{itemize}
}
\end{example}

The CLC model is a compelling candidate for our goal of studying price equilibria in these environments, given the heterogeneity of purchase behaviors it captures, and its clear correspondence to platform search functionalities. However, the wide heterogeneity of behaviors it subsumes makes the analysis of these outcomes seem a priori challenging. Given this, we seek to answer the following research questions: {\it Is equilibrium computation tractable under CLC choice? Do there exist mechanisms by which predicted equilibria naturally arise on these platforms, where pricing is inherently decentralized and algorithmic?}

\subsection{Main contributions}

Our first main finding is that for {any} CLC model, a unifying property emerges in the resulting game of price competition that has important implications for tractability of equilibrium computation. {Under this property, which we call the {\it pseudo-competitive property}, the revenue of any seller, given other sellers' prices, is invariant under changes to any strictly lower competing price, as long as this price remains strictly lower (\Cref{prop:clc-choice-implies-psg}). This property can easily be seen to hold in the most well-known model of price competition, Bertrand competition \citep{bertrand1883book}, which belongs to the CLC class. In this latter model, customers always purchase the {lowest-priced item}; hence, the revenue of a seller is identically zero as long as another seller prices strictly below this first seller, irrespective of the actual price of the lower-priced seller. Our key insight is that the pseudo-competitive property is satisfied for {\it any} CLC choice model (or mixtures therof).

We use this fact to show that any equilibrium outcome has a {\it sequential-move} structure, despite the fact that it is a simultaneous-move game.  
In particular, we show that {any} local Nash equilibrium (a generalization of the notion of pure-strategy Nash equilibrium well-suited for algorithmic pricing \citep{ratliff2016characterization}) corresponds to an ordering over sellers and a corresponding sequence of locally-optimal undercutting prices (\Cref{thm:general-LNE}).
Under mild regularity conditions, this characterization provides a tractable procedure for computing {all} local Nash equilibria {(and hence, all pure-strategy Nash equilibria)} (\cref{cor:eq-easy}), when the number of sellers is small. In contrast, as noted above, equilibrium computation is generally hard, even for the simple case of {\it two}-player matrix games \citep{daskalakis2009complexity}.

Although our structural result enables tractable equilibrium computation when the number of sellers is small, it fails to shed light on when an equilibrium is guaranteed to exist. Moreover, it shows that determining existence for general CLC models has an unavoidable factorial dependence on the number of sellers. We however identify a sufficient condition --- {\it gradient dominance} --- under which a local Nash equilibrium is guaranteed to exist, and equilibrium computation has polynomial dependence on the number of sellers (\Cref{prop:gd-implies-lne-existence-uniqueness}). This condition models settings in  which certain sellers have a competitive advantage along a non-price attribute (e.g., quality), and holds under practical special cases of CLC choice, prominently including BICA behavior. 
}  

While these results are useful from the platform's perspective, the question remains of {\it how} such LNE arise in practice (assuming they exist) remains, since sellers operating on platforms only observe their own revenue, and not their competitors'. We first investigate this question theoretically, under the gradient dominance assumption. We show that sellers using simple local price updates (formalized via distributed gradient ascent, see \Cref{alg:ogd}) indeed converge to a local Nash equilibrium under mild additional regularity conditions (\cref{thm:ogd-converges-to-LNE-n-firms}). We then investigate the robustness of this phenomenon numerically, and observe that for a wide range of CLC models in which customers choose based on price and rating, but in which gradient dominance may not hold, price dynamics frequently converge to a LNE. This observation is robust to more practical implementations of gradient-driven update schemes (e.g., asynchronous updates, zeroth order feedback), as well as to the limited inclusion of customers who make their purchase decisions according to the multinomial logit choice model.
 

Overall, our work demonstrates that the CLC choice model is able to capture a variety of practical purchase behaviors on e-commerce platforms, explicitly modeling customer search and selection, all the while (i) enabling the tractable analysis of equilibrium pricing outcomes and (ii) supporting natural distributed mechanisms for the emergence of equilibria. While the goal of this paper is not to advocate for this model to replace well-established random utility maximization models, we view it as a valuable addition to platforms' modeling toolkits.

\subsection{Paper organization} We survey relevant literature in Section~\ref{sec:related-work}. In~\cref{sec:preliminaries} we formalize the CLC choice model, and characterize the structure of equilibria under this model in \cref{sec:pseudo-equilibria}. We subsequently investigate convergence of gradient-based pricing algorithms numerically and in theory in \cref{sec:numerical-results}.
Unless specified, the proofs of all results are included in the appendix. 


\section{Related work}\label{sec:related-work}
Our work relates to a wide range of customer choice models studied in the literature and their impact on competitive outcomes. We discuss the relevant literature below.

\medskip

\textbf{Lexicographic preference models.} As noted above, lexicographic preferences have been extensively studied within the behavioral economics and marketing literature; see \citet{fishburn1974exceptional} for an excellent survey of earlier work.

Lexicographic choice is a special case of boundedly rational behavior, the existence of which was first posited by Herbert Simon in his Nobel Prize winning work \citep{simon1955behavioral,simon_rational_1956}. It is a {\it noncompensatory} rule, wherein individuals fail to evaluate fine-grained trade-offs when faced with multiple alternatives, due in part to negative emotions associated with high cognitive load \citep{luce1997choice,drolet2004rationalizing}. In the empirical literature,  \citet{payne1976task} found that, though utility-maximizing decision rules were more likely to be used in two-alternative situations, as the number of alternatives increased, subjects used non-rationalizable heuristics to eliminate some of the available alternatives as quickly as possible.   \citet{hogarth2005simple} showed that simple trade-off-avoiding strategies can be remarkably effective in yielding the same choices as linear multi-attribute models that explicitly account for trade-offs for binary-valued attributes. More recently, \citet{jagabathula2019limit} analyzed weekly sales transactions of consumer packaged goods for grocery and drug store chains for 47 markets across the United States, and found that random utility maximization (RUM) models did not yield acceptable predictive performance for a number of goods such as coffee, yogurt, and milk; they suggest that going beyond rational choice models is necessary to obtain acceptable predictive performance for these goods.

Of the many noncompensatory rules proposed in the literature, we focus on the lexicographic model due to its applicability to the existing sorting functionalities on e-commerce platforms. Prominent works that find empirical evidence of lexicographic choice include
\citet{slovic1975choice}. In a laboratory experiment, participants were asked to choose between two gift packages of cash and coupons that they had decided were equal in value a week earlier, in addition to indicating whether they considered cash or coupons to be more important. Instead of choosing the two gift packages with equal probability --- as a utility maximization model would have found --- 88\% of participants chose the alternative that was higher on the attribute they found more important (typically, cash). Within retail settings, studies have found that lexicographic models can perform as well as common utility maximization models in predicting customers' purchase decisions. For instance, for women's clothing \citet{alexis1968consumer} found that a lexicographic order over criteria such as type, size, color and style, comfort and fit, and price tends to be used in decision-making, with the first piece of clothing acceptable on all counts being purchased.
\citet{parkinson1979information} similarly found this to hold for toothpaste and deodorant; \citet{kohli2007representation,jedidi2008inferring,kohli2019randomized} and \citet{yee2007greedoid} have all found evidence of lexicographic choice rules for electronics purchases (e.g., smartphones, tablets, computers, and television sets). \citet{harris2018smokers} also recently found strong evidence of the use of a lexicographic decision rule among smokers deciding between two types of cigarette packages.  From a theoretical perspective, of note is \citet{kohli2007representation}, who obtain necessary and sufficient conditions under which a linear utility function represents a lexicographic decision rule.


A number of works have considered the problem of inferring lexicographic choice rules from data \citep{schmitt2006complexity,kohli2007representation,jedidi2008inferring,yaman2008democratic,liu2015learning,brauning2017lexicographic,kohli2019randomized,huyuk2022inferring}. From the platform's perspective, predicting equilibrium outcomes would first require the estimation of the underlying lexicographic rule; existing methods can thus be used for this task. We highlight that our results on convergence of gradient-based methods imply that sellers can be model-agnostic when setting prices; relying on local information via the use of, e.g., A/B tests suffices to converge to a local optimum.

Despite its prominence in the computer science and marketing literature, the use of lexicographic models in the operations literature remains rather limited, with a few exceptions. \citet{soberman2007campaign} study how political campaign budgets affect advertising strategies, and assume that voters' decisions are represented by a lexicographic rule with two attributes, valence (i.e., inherent attractiveness of candidate) and party position, with valence being the more important attribute. \citet{bansal_product_2009} study the product design problem of a monopolist when customers purchase the cheapest product above a quality floor (note that this is a special case of CLC choice).  \citet{stamatopoulos2019welfare} study dynamic pricing as a means of managing inventory costs when strategic consumers can time their purchases. They assume individuals exhibit lexicographic preferences over price and wait times: a customer always prefers to wait for a better price, but given the same price she prefers to purchase earlier rather than later. To the best of our knowledge, we are the first to study price competition under lexicographic choice.

\medskip

{\bf Consider-then-choose models.} The first stage of CLC choice uses the now widely accepted consider-then-choose heuristic in assortment optimization. As discussed for lexicographic choice, the consider-then-choose model was inspired by the finding that people often utilize simple decision-making heuristics to narrow their options \citep{howard1969theory,gigerenzer2002bounded} and lessen the cognitive burden of making multi-alternative decisions \citep{tversky1974judgment}. Numerous empirical studies have supported this notion. Hauser's seminal work in this area demonstrated that consideration sets could significantly explain variation in customers' choice decisions \citep{hauser1978testing}. Subsequent research has supported the observation that two-stage decision models that combine consideration set formation followed by choice typically yield more accurate predictions \citep{swait1987incorporating, roberts1991development, lynch1991memory, roberts1997consideration}. There also exists an extensive body of literature focused on rational agent models that justify consideration sets. In such models, screening heuristics have been shown to be rational under search costs \citep{hauser1990evaluation} and constraints of limited time and knowledge  \citep{roberts1991development, gigerenzer1996reasoning}. Consideration sets are of high relevance in the platform economy since their formation is enabled by the filtering functionalities offered by most  platforms. 

Consider-then-choose models have been used to great success in assortment optimization. On the estimation front, \citet{jagabathula2021demand} show how consider-then-choose models can be estimated from sufficiently rich choice data despite the fact that consideration sets are unobservable. On the decision-making front, these have been proposed as compelling alternatives to standard RUM models for improved computational tractability, in addition to being more capable of handling customer preference heterogeneity \citep{aouad2020assortment}. \citet{jagabathula2017nonparametric} make a similar observation by considering assortment optimization and pricing under a consider-then-choose model where the consideration set formation is predicated on a price ceiling in the first stage, and the selection of the item is made according to an exogenously given preference list. They show that under mild assumptions on how prices impact the distribution of product rankings, the joint assortment planning and pricing problem becomes tractable. Our findings align with these in obtaining tractability and the ability to capture preference heterogeneity in the analysis of price competition by assuming that customers' preference lists are created lexicographically.

\medskip

\textbf{Sequential search models.} In the past two decades, a large body of work has identified the sequential nature of customer search on platforms: given a set of ranked search results, customers largely browse products one by one, top to (potentially) bottom \citep{ansari2003customization,granka2004eye,joachims2005accurately,ghose2009empirical}. As a result, boundedly rational sequential search procedures (of which CLC choice is a special case) have attracted attention in the operations literature of late. For instance, \citet{chen2021revenue} consider a retailer's joint assortment and ranking problem when customers with randomly drawn patience levels examine products sequentially, and purchase the first examined product deemed satisfactory. The probability with which a product is satisfactory is independent of all other products; if a customer has previously examined a product, she no longer considers it in her decision to purchase the currently examined item. This assumption is referred to as the {\it no-revisit assumption}, and implies that, even if the product used to be in the customer's consideration set, if it is not currently then the customer will not use it as a point of comparison with her current consideration set. The no-revisit assumption is implicit in CLC choice, since a product that is not considered topmost along an attribute considered early on will never be considered again. It is precisely why customers do not evaluate trade-offs under our model. \citet{gao2022joint} similarly make this assumption for the problem of monopolist pricing under a general Cascade Click model, and provide an excellent survey of works in empirical marketing and behavioral economics that rely on this modeling assumption. Our work differs from these papers since they consider a monopolist's problem. We, on the other hand, are interested in analyzing equilibrium outcomes of price competition between sellers on an e-commerce platform. Moreover, these models assume that the monopolist uses a single ranking as a lever, to be viewed by all customers. CLC choice models the reality that there are many ways in which customers may sort and filter search results, and explicitly models this heterogeneity.

The theoretical economics literature has shown that sequential search procedures can arise when utility-maximizing customers explicitly incorporate search costs in decision-making \citep{weitzman1979optimal,chen2011paid,athey2011position,salant2011procedural,board2014outside}. Hence, search costs have been used as a way to explain price disperson in competitive environments \citep{stigler1961economics}; see \citet{anderson2018firm} for an excellent survey. As pointed out by \citet{choi2018consumer}, much of the existing work confines itself to simple settings (e.g., symmetric duopoly environments, homogeneous consumers), given the complexity of incorporating endogenous search. Rather than attempting to derive lexicographic choice as an equilibrium response to sellers' pricing strategies, our modeling approach is aligned with the operations literature, which treats customer behavior as exogenous. Such an approach is practical in fast-paced algorithmic pricing environments where it is challenging for customers to anticipate prices set by sellers firms before inspection. Moreover, in doing so we are able to incorporate a wide range of preferences.

\medskip

{\bf Price competition under linear and attraction demand models.} Compared to the economics literature, our work is more closely related to studies of price competition under exogenous models of customer behavior; see \citet{vives1999oligopoly} and \citet{gallego2006price} for excellent surveys. By and large, this literature seeks to characterize conditions for existence and uniqueness of pure-strategy Nash equilibria under the assumption that customers are utility-maximizing. Seminal works include \citet{topkis1979equilibrium} and \citet{milgrom1990rationalizability}, who respectively show existence of equilibria in supermodular games and monotone transformations thereof. \citet{caplin1991aggregation} derive separate sufficient conditions on customers' utility functions for existence and uniqueness of equilibria among differentiated products.

A considerable focus in subsequent literature has been on logit-based choice models. \citet{anderson1992discrete} and \citet{bernstein2004general} established existence and uniqueness of equilibria under the MNL model with a single customer class under symmetric and asymmetric firms, respectively. \citet{gallego2006price} show existence and uniqueness in a Bertrand oligopoly for an attraction demand model with convex costs; they moreover show linear convergence to the unique equilibrium under a {\it t\^{a}tonnement} scheme. \citet{gallego2014multiproduct} and \citet{aksoy2013price} later on characterize Nash equilibria under nested logit choice and mixed MNL demand under mild conditions, respectively.

\medskip

{\bf  Equilibrium convergence in repeated price competition under gradient-based methods.} Our work joins a small line of papers that study convergence of simple gradient-based schemes in games of repeated price competition. Such schemes have attracted significant attention due to their simplicity and their widespread use in industry. \citet{cooper2015learning} analyze price dynamics when two sellers estimate parameters of the underlying linear demand model without explicitly incorporating competition, and identify conditions under which sellers' prices converge to  (i) the Nash equilibrium, (ii) the cooperative solution which maximizes their joint revenue, or (iii) limit points that are neither of the two.  For duopolistic price competition with linear demand and reference effects, \citet{golrezaei2020no} show that sellers running Online Mirror Descent with decreasing stepsizes converge to a stable Nash equilibrium, a solution concept tailored to reference prices. Most recently, \citet{goyal2023learning} leveraged the {\it t\^{a}tonnement} scheme of \citet{gallego2006price} to show that under unknown MNL demand, sellers running a version of stochastic online gradient descent in the space of aggregated logit parameters converge to the unique Nash equilibrium. 

For more general games, it is well-known that gradient-based methods can fail to converge \citep{mertikopoulos2018cycles}. The first positive result in this regard was that of \citet{rosen1965existence}, who showed that under a condition termed diagonal strict concavity (DSC), a Nash equilibrium exists and is unique; moreover, distributed gradient dynamics converge to the Nash outcome. More recently, \citet{bravo2018bandit} showed that under diagonal strict concavity, no-regret learning based on mirror descent almost surely converges to the unique Nash equilibrium. In the same vein, \citet{mertikopoulos2019learning} defined the notion of variationally stable equilibrium that attracts gradient-based dynamics in games that may not satisfy DSC. 

We note that none of these regularity conditions are assumed hold in our model; in fact, revenue functions may even be discontinuous under CLC choice. So, while convergence to the pure-strategy Nash equilibrium is a natural desideratum when this equilibrium exists and is unique, in settings such as ours where sellers' revenue curves are ill-behaved and exhibit potentially many strict local maxima, one need not expect convergence of gradient methods to these outcomes (analogous to gradient methods only converging to local optima in monopolist settings). Our setup thus requires us to expand our solution concept to {\it local} Nash equilibria, as opposed to pure-strategy Nash equilibria.


\section{Model}\label{sec:preliminaries}

We consider a game of price competition in which $N$ sellers, denoted by $\mathcal{N} := \{1,\ldots,N\}$, each offering a specific item for sale, are faced with a unit mass of customers, each seeking to purchase a single item. Each item has a price set by its seller, as well as a set of $K$ non-price attributes, denoted by $\mathcal{A} := \{1,\ldots, K\}$. We use $0$ to denote the price attribute and let $\oma := \{0, 1,\ldots, K\}$ be the set of all attributes. Attribute $k \in \oma$ is associated with a set $\attributevalues{k}$ of possible values; we allow $\attributevalues{k}$, for $k \in \mathcal{A}$, to be arbitrary, and assume that $\attributevalues{0}=[0,\bar{v}]$ for some finite $\bar{v}>0$ for the price attribute.  For $i\in\cN$ and $k \in \oma$, let $v^k_i\in\attributevalues{k}$ denote the value of attribute $k$ of the item sold by seller $i$ (also referred to as item $i$). 

Each customer $c$ has a strict ranking $\succ_c$ over attributes in $\oma$, representing attributes' relative importance to the customer. 
Moreover, for each attribute $k\in\oma$, the customer has a weak preference relationship $\succsim^k_c$ over elements of $\cV^k$. In particular, for any $v\neq v'$, $v \succ^k_c v'$ denotes that customer $c$ prefers value $v$ over $v'$ for attribute $k$, and $v \sim^k_c v'$ denotes that $c$ is indifferent between $v$ and $v'$. If $v= v'$, we assume $v \sim^k_c v'$. This preference relation is transitive, i.e., if $v$ is (weakly) preferred to $v'$, and $v'$ is (weakly) preferred to $v''$, then $v$ is (weakly) preferred to $v''$. The preference relationship for any non-price attribute may vary arbitrarily across customers; however, all customers prefer a lower price. Formally, for all $c$, $p< p' \iff p \succ^0_c p'$, and $p = p' \iff p \sim^0_c p'$.

The choice model we consider is partially defined by the lexicographic heuristic \citep{fishburn1974exceptional}. 
\begin{definition}[Lexicographic preferences]\label{def:lex-pref}
Customer $c$ {\it lexicographically prefers} item $i$ to item $j$ if there exists an attribute $k$ such that:
\begin{enumerate}
\item $c$ is indifferent between $i$ and $j$ for all attributes considered more important than $k$:
$$v_i^{k'} \sim^{k'}_c v_j^{k'} \quad \forall \ k' \text{ s.t. } k' \succ_c k.$$
\item $i$ is preferred to $j$ on attribute $k$, i.e., $v_i^k \succ_c^k v_j^k.$
\end{enumerate}
If no such $k$ exists, then the customer is indifferent between the two items.
\end{definition}

The lexicographic preference relation is transitive; thus, given any subset of items, it induces a weak preference ordering over the items. This results in a top set of items that the customer prefers over all others. 

{
\begin{remark}
The indifference relation $\sim^{k}_c$ for any non-price attribute $k$ allows a customer to be indifferent between different values for that attribute, which is a practical modeling feature. For instance, it allows customers to be indifferent between ratings so long as, e.g., the difference in rating is less than 0.25. It is, however, important for our results that customers are not indifferent between different prices. 
\end{remark}
}



Having defined lexicographic preferences, we now provide a complete description of CLC choice.

\begin{definition}[Consider-then-Choose with Lexicographic Choice]\label{def:clc}
Under the \emph{Consider-then-Choose with Lexicographic Choice} (CLC) model, customer $c$ makes her purchase decision in two stages:
\begin{enumerate}
\item {\bf Consideration set formation.} The customer forms a consideration set $\mathcal{N}_c \subseteq \mathcal{N}$ of sellers according to (i) an arbitrary set of criteria on items' non-price attributes and (ii) an upper bound $w_c \in \mathbb{R}_{\geq 0}$ on the price of the item representing her willingness-to-pay, where we assume $w_c\leq \bar{v}$. Formally, let $\mbox{$\mathcal{X}_c = (-\infty, w_c]$}$ and $\attributecriteria{c} \subseteq \attributevalues{1}\times \attributevalues{2}\times \ldots\times\attributevalues{K}$. Then, 
$$\mathcal{N}_c = \{i\in\mathcal{N} : (v^0_i,v^1_i,\ldots, v^K_i)\in  \mathcal{X}_c\times \attributecriteria{c}\}.$$
If $\mathcal{N}_c$ is the empty set, then the customer leaves the platform without purchasing an item.   
\item {\bf Lexicographic choice.} The customer forms a lexicographic preference ordering of items in $\mathcal{N}_c$ according to \cref{def:lex-pref} and chooses the top-ranked item. If there are multiple such items, ties are broken using an arbitrary customer-specific preference ordering over items.
\end{enumerate}
\end{definition}
\begin{remark}
The tie-breaking rule specified above can be embedded in the lexicographic choice framework by introducing a new dummy attribute that is the index of the seller, preferences over which are strict and correspond to a customer's tie-breaking order. This dummy attribute is considered to be the least important attribute by everyone, and is only resorted to in the case of indifferences. Hence, it is without loss of generality to assume that any non-empty consideration set is a singleton.
We make this assumption from this point on, and say that the customer chose based on $k$ if $k$ was the first attribute according to which the customer strictly preferred the chosen item. 
\end{remark}



For ease of notation, in the rest of the paper we denote the value of the price attribute of item $i$ as $p_i$ instead of $v^0_i$. Abusing notation, we denote the set of possible prices by $\cV$ instead of $\attributevalues{0}$, and let $\cV^N$ be the space of all possibles prices set by the $N$ sellers. Let $\mathcal{G}$ denote the joint probability distribution from which the primitives of customers' choice process (i.e., the ordering over attributes $\succ_c$, the preference relations for each attribute $(\succ_c^k; k\in \{0,\ldots, K\})$, the willingness-to-pay $w_c$, and the subset of admissible non-price attribute values $\attributecriteria{c}$) are drawn. 
 Given a price vector $\pvec = (p_1,\ldots,p_N)$, $\mathcal{G}$ specifies the expected demand $D_i(\bp)$, and thus the expected revenue $R_i(\bp)=p_iD_i(\bp)$ of seller $i \in \cN$.\footnote{Since our results do not depend on a cost of provisioning items, we assume this cost is $0$ for expositional simplicity.} Abusing notation, we often denote $R_i(\bp)$  and $D_i(\bp)$ as $R_i(p_i,\mathbf{p}_{-i})$ and $D_i(p_i,\mathbf{p}_{-i})$, respectively, when we wish to focus on the dependence of these quantities on seller $i$'s price. Finally, in all examples that follow, we assume that customers' distribution of willingness-to-pay conditioned on the remaining primitives has a Lipschitz continuous probability density function.

We conclude the section by providing concrete examples of CLC choice.

\medskip

\noindent\textbf{Special cases.} The CLC model includes the following important special cases of customer behavior:
\begin{itemize}
\item \textbf{``Best-I-Can-Afford'' (BICA) customers:} These customers choose the highest-quality\footnote{In the remainder of the paper we abuse terminology and refer to the quality and rating of an item interchangeably. This is due to the fact that the lexicographic preference model is particularly well-suited to e-commerce platform sorting functionalities, wherein average customer rating is a commonly used attribute for sorting, and viewed as a proxy for item quality. 
} item whose price is at most their willingness-to-pay, breaking ties according to price (and then arbitrarily). Such behavior is anecdotally reported amongst quality-sensitive customers and delegated purchasers with a budget (e.g., for business expenses). It has moreover been studied within the context of advertising markets, where internet advertisers are modeled as budgeted {value maximizers} \citep{wilkens2016mechanism,balseiro2021landscape,golrezaei2024bidding}.

\item \textbf{``Cheapest-I-Can-Tolerate'' (CICT) customers:} These customers choose the cheapest item whose quality exceeds an acceptable threshold, breaking ties according to quality (and then arbitrarily). Such behavior is commonly reported amongst price-sensitive customers. An extreme version is the case where there is no lower threshold on quality, in which case the customers simply choose the overall cheapest item. This is the model assumed in the classical Bertrand model of price competition \citep{bertrand1883book}. (We refer to this extreme special case as {\it greedy} behavior in the rest of the paper.)
\item \textbf{Loyal customers:} These customers only include a single seller in their consideration set, and purchase from this seller as long as the price of the item is below their willingness-to-pay. 
\item \textbf{Satisficers:}  Similar to the model considered in \citet{gallego2023constrained}, these customers determine an acceptable set of items in the consideration phase, and choose an item from this set arbitrarily. This is equivalent to having a strict, customer-specific ranking of sellers along the ``identity'' attribute, and choosing the highest-ranked seller according to this attribute in the lexicographic choice phase.
\end{itemize}

\section{Structure and computation of equilibria}\label{sec:pseudo-equilibria}

In this section we analyze price equilibria under CLC choice. We first introduce the equilibrium concept and subsequently study its computational tractability. Finally, we gain insights into the structure of equilibria for important special cases of the CLC class.

\subsection{Equilibrium concept}

A standard solution concept in the analysis of price competition is the pure-strategy Nash equilibrium \citep{gallego2006price,allon2011price}, wherein no seller has an incentive to deviate from its price, holding other sellers' prices fixed. Given the generality of our model, obtaining unified, closed-form insights into equilibria seems a priori unlikely. Indeed, \cref{prop:hardness-of-insights} shows a diversity of possible outcomes in the CLC class. We defer its proof to Appendix \ref{apx:hardness}.

\begin{proposition}\label{prop:hardness-of-insights}
There exist instances of CLC choice for which (i) an equilibrium does not exist, (ii) an equilibrium exists and is unique, and (iii) multiple equilibria exist.
\end{proposition}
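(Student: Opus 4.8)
The plan is to prove \Cref{prop:hardness-of-insights} constructively, exhibiting a separate CLC instance for each of the three claims. For (ii) and (iii) I would use \emph{decoupled} constructions, in which equilibrium computation splits into independent single-seller revenue maximizations so the equilibrium set is transparent; for (i) I would use a genuinely strategic ``loyal-plus-shoppers'' instance of Edgeworth/Varian type, which is the only part requiring real work. Throughout I keep all willingness-to-pay densities Lipschitz (vanishing at $0$ and $\bar{v}$, e.g.\ a rescaled $\mathrm{Beta}(2,2)$ density), which is needed to stay within the paper's standing assumption.

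For (ii) I would take a two-seller market with BICA customers in which item $1$ has strictly higher quality than item $2$, all customers drawing willingness-to-pay from a common $F$ chosen so that $p\mapsto p\tail(p)$ is unimodal. A BICA customer always prefers the higher-quality item whenever she can afford it, so seller $1$'s demand is $\tail(p_1)$ irrespective of $p_2$, and its unique best response is the monopoly price $p_1^\star:=\argmax_p p\tail(p)$. Conditioning on $p_1=p_1^\star$, seller $2$'s revenue is $p_2\bigl(\tail(p_2)-\tail(p_1^\star)\bigr)$ for $p_2<p_1^\star$ and $0$ otherwise; this vanishes at both endpoints of $[0,p_1^\star)$, is strictly positive inside, and (for the chosen $F$) single-peaked, hence has a unique maximizer $p_2^\star\in(0,p_1^\star)$. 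Thus $(p_1^\star,p_2^\star)$ is the unique pure Nash equilibrium. (Bertrand competition with greedy customers gives an even simpler instance, with the unique equilibrium at zero prices.)

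For (iii) I would exhibit an instance with a visibly non-singleton equilibrium set. The simplest: a two-seller market in which every customer is loyal to seller $1$ (so seller $2$ is never in any consideration set, which is admissible since loyal customers and arbitrary consideration criteria both lie in the CLC class), and seller $1$'s population has a willingness-to-pay distribution with unimodal revenue curve, hence a unique monopoly price $p_1^\star$. Then $R_2\equiv 0$, every price is a best response for seller $2$, and the equilibrium set is $\{p_1^\star\}\times[0,\bar{v}]$, a continuum. For a ``genuine'' multiplicity with both sellers relevant, one instead gives the two sellers disjoint loyal populations sharing a common willingness-to-pay distribution whose revenue curve $p\tail(p)$ is \emph{bimodal} (two distinct global maximizers); such an $F$ with Lipschitz density is routine to construct, and the (again decoupled) game then has at least $2^N$ pure equilibria.

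For (i) --- the crux --- I would take two sellers of identical quality (distinguished only by a non-price ``brand'' label used for consideration, so a customer can be made loyal to exactly one of them), facing a mass $\beta\in(0,1)$ of greedy customers and a mass $\tfrac{1-\beta}{2}$ loyal to each seller, all drawing willingness-to-pay from a common $F$ with Lipschitz density and unimodal revenue curve $p\tail(p)$ with maximizer $p^\star$. In any candidate pure equilibrium $(p_1,p_2)$: (a) if $p_1=p_2=p>0$, either seller strictly gains by undercutting to $p-\epsilon$, which discontinuously captures the \emph{entire} shopper mass (the gain tends to $\tfrac{\beta}{2}p\tail(p)>0$); (b) if $p_1=p_2=0$, either seller strictly gains by raising to $p^\star$, earning $\tfrac{1-\beta}{2}p^\star\tail(p^\star)>0$ from its loyal segment; and (c) in any asymmetric profile, say $p_1<p_2$, the high-priced seller earns only from its loyal segment, so $p_2=p^\star$ is forced (and if $p^\star\le p_1$ it has \emph{no} best response among prices above $p_1$) --- but when $p_2=p^\star$ the low-priced seller's ``capture-everyone'' revenue $\tfrac{1+\beta}{2}\,p\,\tail(p)$ is strictly increasing on $[0,p^\star)$ with supremum at $p^\star$ \emph{not attained} (tying at $p^\star$ forfeits half the shoppers, a strict loss of $\tfrac{\beta}{2}p^\star\tail(p^\star)$, and pricing above $p^\star$ is also strictly worse), so the low-priced seller has no best response. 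Hence no pure equilibrium exists. I expect case (c) to be the main obstacle: showing that \emph{someone} always lacks a best response requires care with how the shoppers' tie is resolved and with the boundary subcases ($p^\star\le p_1$, $p_2>p^\star$), and one must verify the single choice of $F$ makes the relevant revenue curves simultaneously unimodal with the right ordering of peaks --- all with a Lipschitz density rather than an (inadmissible) point mass near $\bar{v}$.
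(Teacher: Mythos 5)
Your proposal is correct in substance, but all three of your witnesses differ from the paper's, so it is worth comparing. For (ii) the paper simply invokes Bertrand competition with greedy customers (unique zero-price equilibrium), which you mention parenthetically; your primary BICA construction also works but requires the extra unimodality bookkeeping. For (iii) the paper gives a genuinely strategic example --- satisficers split evenly between two inspection orders with $w_c\sim\mathrm{Unif}[0,1]$, yielding the two asymmetric equilibria $(1/2,3/8)$ and $(3/8,1/2)$ --- whereas your constructions (a seller with identically zero revenue, or decoupled loyal populations with a bimodal revenue curve) achieve multiplicity by decoupling; both are valid, but the paper's example also doubles as the motivating instance for nonsmooth, multimodal revenue curves in Figure 1. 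For (i) both you and the paper run the same Edgeworth-type argument on a greedy-plus-loyal mixture; the paper uses an \emph{asymmetric} split ($\epsilon$ loyal to seller 1 only), which shortens the case analysis because only seller 1 has a reason to stay high and only seller 2's best response fails to be attained, while your symmetric split forces you through the tie-breaking and boundary subcases you flag. One small imprecision in your case (c): when $p^\star\le p_1<p_2$, the high-priced seller does in fact have a global best response --- namely undercutting to $p^\star<p_1$ --- so the correct conclusion there is not ``no best response above $p_1$'' but simply that $p_2$ is not a best response, which still rules out the profile. With that repair, and the verification you already anticipate that a single $F$ (e.g.\ uniform, as in the paper) makes all the relevant revenue curves unimodal, your argument goes through.
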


\cref{prop:hardness-of-insights} shows that the game induced by CLC choice does not uniformly inherit the properties that are typically used to establish existence and uniqueness of price equilibria, e.g., supermodularity \citep{topkis1979equilibrium}. Indeed, sellers' revenue functions need not be continuous under CLC choice, and may exhibit multiple strict local minima ({as is the case when customers are satisficers, see \Cref{fig:satisficing}}). Hence, our focus in the majority of this section will be on the numerical computation of equilibria. Moreover, given the fact that revenue functions under CLC choice need not be unimodal, we consider a {generalization} of the standard notion of a pure-strategy Nash, referred to in the literature as the {\it local} Nash equilibrium (LNE)  \citep{ratliff2016characterization}. We formally define this solution concept below. For any subset $\mathcal{U}\subseteq\mathbb{R}$, let $\overline{\mathcal{U}}$ denote its closure. 

\begin{definition}[Local Nash Equilibrium]
A vector of prices $\bp = (p_1,\cdots, p_N)$, with $p_i\in\mathcal{V}$ for $i\in\mathcal{N}$, is a \emph{local Nash equilibrium} (LNE) if there exist {open} neighborhoods $\cU_i\subseteq \mathbb{R}$ for $i\in\mathcal{N}$, such that (i) $p_i\in\cU_i$, and (ii) $p_i \in {\arg\max_{p \in \overline{\cU}_i} R_i(p,\bp_{-i})}$ for all $i \in \mathcal{N}$.
\end{definition}

Informally, a price vector constitutes a LNE if each price is a ``local'' best response (i.e., a local maximum of the seller's revenue curve) to all other sellers' prices. Observe that any pure-strategy Nash equilibrium is a LNE. Since sellers' prices are {\it global} best responses under this former solution concept, we alternatively refer to these as {\it global} Nash equilibria (GNE). 

Note that LNE need not exist under CLC choice. This is the case, for instance, in a duopoly where the population is a mixture of greedy and an arbitrarily small mass of loyal customers who only purchase from one of the sellers, as described in the proof of \cref{prop:hardness-of-insights}.

\begin{remark}
Our focus in this paper is on LNE as they are the natural analogs of local optima in monopolistic settings. In such settings, when the revenue function is globally non-convex, local optima become the point of focus both for structural analysis, as well as convergence of gradient-based methods, which we study in \Cref{sec:numerical-results}. Moreover, LNE have previously been identified as possible attractors of gradient methods in games \citep{ratliff2016characterization}. 
\end{remark}

\begin{figure}
    \centering
    \includegraphics[scale=0.4]{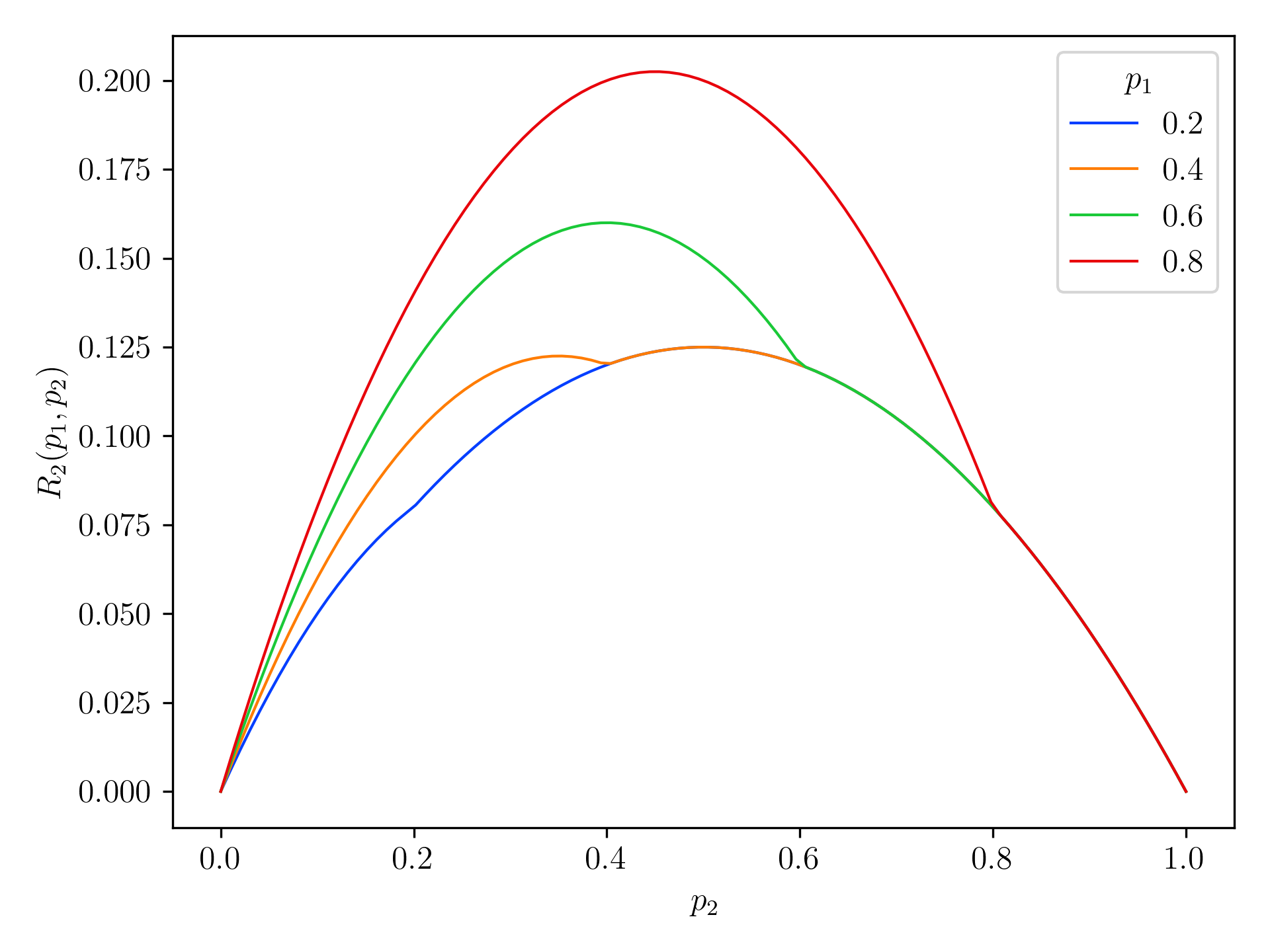}
    \caption{Plot of $R_2(p_1, p_2)$, for different values of $p_1$, when customers satisfice in a duopoly. Here we assume an equal proportion of customers prefer seller 1 to seller 2 (and vice versa), and $w_c\sim\textup{Unif}[0,1]$. This induces a revenue function $R_2(p_1,p_2) = p_2\left(\frac12(1-p_2) + \frac12(p_1-p_2)^+\right)$ for seller 2. Observe that $R_2(p_1,p_2)$ is nonsmooth. Moreover, for $p_1 = 0.4$, it has two local maxima: one at 0.5 and one at 0.35.}
    \label{fig:satisficing}
\end{figure}


\subsection{Structural characterization of local Nash equilibria}

Having defined our main solution concept, we now turn to the question of tractably computing LNE under CLC choice. The following property, which we call the {\it pseudo-competitive} property, drives the majority of our results. Under this property, the revenue of a seller priced at $p$ remains unchanged if any seller pricing below $p$ changes its price to $p'  < p$. We define some additional notation to formalize this. Given price vector $\bp$ and price $x$, let $\mbox{$\mathcal{S}_{\bp}(x)=\left\{ i\in \cN:\, p_i\geq x\right\}$}$ be the set of sellers whose price is at least $x$. For $\mathcal{S}\subseteq \mathcal{N}$, $\bp^\mathcal{S}$ denotes the price vector in which the prices of all sellers in $\mathcal{S}$ are preserved, and those in $\mathcal{N}\setminus \mathcal{S}$ are set to $0$.

\begin{definition}[Pseudo-Competitive Property]\label{def:psg-property}
The {\it pseudo-competitive} property holds if, for all $i\in\mathcal{N}$, and $\pvec$, $\pvec' \in \cV^N$ such that:
\begin{enumerate}
\item $\mathcal{S}_{\bp}(p_i) =  \mathcal{S}_{\bp'}(p_i)$, i.e., the same set of sellers price weakly above $p_i$ in $\pvec$ and $\pvec'$, and 
\item $p_j = p'_j$ for all $j\in \mathcal{S}_{\bp}(p_i)$, i.e., these higher-priced sellers set the same prices in $\pvec$ and $\pvec'$,
\end{enumerate}
 we have that $R_{i}(\bp) = R_{i}(\bp')$. 
\end{definition}
 Our first main result establishes that the revenue functions arising from the CLC choice model satisfy this property. We defer its proof to Appendix \ref{apx:psg-proof}.
\begin{theorem}\label{prop:clc-choice-implies-psg}
The pseudo-competitive property holds under any CLC choice model.
\end{theorem}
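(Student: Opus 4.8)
The plan is to fix a seller $i$ and two price vectors $\pvec, \pvec'$ satisfying the two hypotheses of \Cref{def:psg-property}, and argue that the behavior of every customer who purchases from $i$ is identical under $\pvec$ and $\pvec'$, so that $D_i(\pvec) = D_i(\pvec')$ and hence $R_i(\pvec) = p_i D_i(\pvec) = p_i D_i(\pvec') = R_i(\pvec')$. Since $\mathcal{G}$ is the distribution over customer primitives and demand is the $\mathcal{G}$-expectation of the indicator that a customer buys from $i$, it suffices to show that for \emph{every} realization of customer primitives, the customer buys from $i$ under $\pvec$ if and only if she buys from $i$ under $\pvec'$. So the whole argument is pointwise in the customer $c$.

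First I would handle the consideration phase. A customer $c$ with willingness-to-pay $w_c$ and non-price criteria $\attributecriteria{c}$ includes seller $j$ in her consideration set iff $p_j \le w_c$ and $j$'s non-price attributes lie in $\attributecriteria{c}$; the non-price part is unaffected by prices. The key observation is a monotonicity/thresholding fact: between $\pvec$ and $\pvec'$, the set of sellers priced weakly above $p_i$ is the same set $\mathcal{S} := \mathcal{S}_{\bp}(p_i) = \mathcal{S}_{\bp'}(p_i)$ and they carry identical prices; every seller in $\mathcal{N} \setminus \mathcal{S}$ is priced strictly below $p_i$ in both vectors (but possibly at different values). Now split on whether $c$ actually purchases from $i$ in $\pvec$. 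If $c$ does buy from $i$ under $\pvec$, then $i \in \mathcal{N}_c$, so $p_i \le w_c$; I want to show $c$ still buys from $i$ under $\pvec'$. Under $\pvec'$, seller $i$ is still in the consideration set (its price and attributes are unchanged, as $i \in \mathcal{S}$). Any seller $j \in \mathcal{S}$ has the same price in both vectors, so $j$'s consideration-set membership is unchanged. Any seller $j \notin \mathcal{S}$ has $p'_j < p_i \le w_c$, so $j$ is in $c$'s consideration set under $\pvec'$ iff its non-price attributes qualify — exactly the same condition as whether it would be in the consideration set under \emph{any} price below $w_c$; in particular the set of such low-priced sellers in $c$'s consideration set can only be the same or larger under $\pvec'$ than under $\pvec$ (it is governed solely by non-price criteria once the price is below $w_c$). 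Actually the cleaner framing: partition $\mathcal{N}_c$ (under either price vector) into the ``high'' sellers $\mathcal{N}_c \cap \mathcal{S}$ — identical in both vectors with identical prices and attributes — and the ``low'' sellers priced strictly below $p_i$, whose membership depends only on non-price criteria and the uniform bound $p'_j, p_j < p_i \le w_c$.

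Next comes the lexicographic choice phase, which is where the real content lies. Suppose $c$ buys $i$ under $\pvec$; I claim $i$ is still the lexicographic winner (after tie-breaking, which by the \Cref{def:clc} remark reduces everything to strict comparisons via the dummy identity attribute) under $\pvec'$. Compare $i$ against any competitor $j \in \mathcal{N}_c(\pvec')$. Case (a): $j \in \mathcal{S}$. Then $j$ is also in $\mathcal{N}_c(\pvec)$ with identical price, so since $i$ beat $j$ lexicographically under $\pvec$ and none of $i$'s or $j$'s attribute values (price included) changed, $i$ still beats $j$ under $\pvec'$. Case (b): $j \notin \mathcal{S}$, so $p'_j < p_i \le w_c$. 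Here $i$ and $j$ differ in price with $i$ strictly more expensive; all non-price attributes of both are unchanged. I need to show $i$ lexicographically beats $j$ under $\pvec'$. The point is that the lexicographic comparison of $i$ vs.\ $j$ turns on the first attribute $k$ in $c$'s order $\succ_c$ on which they are not indifferent. On all non-price attributes this indifference pattern is price-independent. If the decisive attribute is a non-price attribute: either $i$ strictly beats $j$ there (done) — and this was already true under $\pvec$ if $j \in \mathcal{N}_c(\pvec)$, but if $j$ was only newly admitted under $\pvec'$ I must argue directly; the resolution is that whether $i$ beats $j$ lexicographically is determined entirely by their (unchanged) non-price attributes unless the price attribute is the decisive one. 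So suppose the decisive attribute is price $0$: then all attributes $k' \succ_c 0$ have $v_i^{k'} \sim_c^{k'} v_j^{k'}$, and on price, $p'_j < p_i$ means $j$ is preferred — contradicting that $i$ wins. So I need to rule this out. The escape is that seller $i$ must have won against \emph{some} realization of the low-priced sellers under $\pvec$: consider the seller $\ell$ (if any) in $\mathcal{N}_c(\pvec)$ that $i$ actually beats with price being decisive — but under CLC, $i$ can only beat a strictly-lower-priced seller $\ell$ if price is \emph{not} the decisive attribute (since $i$ loses on price to $\ell$). Hence for every low-priced seller $j$, the decisive attribute between $i$ and $j$ is a non-price attribute on which $i$ strictly wins, and this depends only on unchanged attributes — so it persists under $\pvec'$ regardless of $j$'s price. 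This handles the forward direction; by symmetry (swapping the roles of $\pvec$ and $\pvec'$, which is legitimate since the hypotheses are symmetric) the converse holds, giving $\{c : c \text{ buys } i \text{ under } \pvec\} = \{c : c \text{ buys } i \text{ under } \pvec'\}$ and therefore $D_i(\pvec) = D_i(\pvec')$ and $R_i(\pvec) = R_i(\pvec')$.

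The main obstacle, and the step I would write most carefully, is Case (b) of the choice phase: ensuring that when a lower-priced competitor $j$ has its price changed (or is newly admitted to the consideration set), the lexicographic comparison against $i$ cannot flip. The crux is the structural fact that under CLC a seller never beats a strictly-cheaper rival \emph{because} of price — so whenever $i$ beats a cheaper $j$, it is on a non-price attribute, which is immune to the price perturbation. I would isolate this as a small lemma (``if customer $c$ chooses $i$ over $j$ and $p_j < p_i$, then the decisive attribute is non-price and $i$ strictly dominates $j$ on it''), prove it directly from \Cref{def:lex-pref} plus the assumption $p < p' \iff p \succ_c^0 p'$, and then the rest of the argument is bookkeeping over the partition of $\mathcal{N}_c$ into high sellers (frozen) and low sellers (price below $p_i \le w_c$, comparison against $i$ decided by non-price attributes only).
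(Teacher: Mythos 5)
Your proposal is correct and follows essentially the same route as the paper: a customer-level case analysis showing that (i) consideration-set membership is unaffected because any customer considering $i$ has $w_c \ge p_i$, which exceeds every perturbed low price, and (ii) under lexicographic choice a seller can only beat a strictly cheaper rival via a non-price attribute ranked above price, which is immune to that rival's price change. The only cosmetic difference is that you perturb all lower-priced sellers at once (closing the converse by symmetry) and isolate the pairwise decisive-attribute observation as a lemma, whereas the paper perturbs one seller at a time and phrases the same observation in terms of the attribute the customer ``chose based on.''
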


Having established this fact, we next define the notion of a {\it valid price-order pair} that will be used for our main structural result. For any subset $\cS\subseteq\cN$, we let {$\nearestminprice = \min\{p_j ; j \in \cS\}$} denote the lowest price amongst sellers in $\cS$, where we adopt the convention that $p^{\min}(\phi) = \bar{v}$. For $i \in \cN$, $\cS\subseteq \cN \setminus\{i\}$, the {\it undercutting interval} or {\it undercutting region} of seller $i$ relative to $\cS$ refers to the interval $[0, \nearestminprice]$. 
We refer to $R_i(p, \bp_{-i}^{\cS})$ in the undercutting region $[0, \nearestminprice]$ as the {\it undercutting revenue curve} of seller $i$ relative to $\cS$ (where the prices of all sellers not in $\cS$ are set to $0$).
\begin{definition}[Valid order-price pair]\label{def:valid-order}
A {\it valid order-price pair} (VOP) is a tuple $(\pi, \bp)$ consisting of an ordering of sellers $\pi:[N]\rightarrow\mathcal{N}$ (i.e., $\pi(i)$ is the seller in rank $i$), and prices $\bp\in\cV^N$ such that:
\begin{enumerate}
\item $0\leq p_{\pi(N)} \leq  p_{\pi(N-1)} \leq\cdots\leq p_{\pi(1)}\leq \bar{v}$.
\item Let $\mathcal{S}_1$ be the empty set, denoted by $\phi$, and for $i\in \{2,\cdots, N\}$, let $\mathcal{S}_i = \{\pi(j) : j<i\}$ denote the sellers ranked higher than $i$ in $\pi$. For all $i$, $p_{\pi(i)}$ is a local maximum of $R_{\pi(i)}(p, \bp^{\mathcal{S}_i}_{-\pi(i)})$ in the undercutting interval $\mbox{$[0,  p^{\min}(\mathcal{S}_i)]$}$. We call these prices \emph{valid} with respect to $\pi$.
\end{enumerate}
\end{definition} 
Computationally, a valid order-price pair is more tractable than a LNE. In particular, for any ordering of sellers, the corresponding valid prices can be determined by sequentially solving (according to the ordering) $N$ constrained single-agent optimization problems up to local optimality. Each optimization problem corresponds to a rank in the ordering where the correspondingly ranked seller computes its undercutting best response to the prices of sellers higher in the ordering, assuming that all other sellers ranked lower set their prices to be $0$. Our main structural result shows that any LNE under CLC choice {\it must} be a VOP. 

\begin{theorem}\label{thm:general-LNE}
Under CLC choice, a price vector $\bp$ is a LNE only if it belongs to a VOP.
\end{theorem}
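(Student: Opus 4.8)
The plan is to show that for any LNE $\bp$ there is an ordering $\pi$ making $(\pi,\bp)$ a VOP. The natural candidate is the ordering $\pi$ that lists sellers in weakly decreasing order of price, with ties broken arbitrarily. Condition 1 of Definition~\ref{def:valid-order} then holds by construction, and moreover $p_{\pi(r)}\le p^{\min}(\mathcal{S}_r)$ for every rank $r$ (with $\mathcal{S}_r$ as in the definition), so each $p_{\pi(r)}$ lies in its undercutting interval. Thus the whole content is Condition 2: for each rank $r$, writing $i=\pi(r)$, the price $p_i$ must be a local maximum of the undercutting revenue curve $p\mapsto R_i(p,\bp^{\mathcal{S}_r}_{-i})$ on $[0,p^{\min}(\mathcal{S}_r)]$.

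The key step is to show that this undercutting curve agrees with the \emph{true} revenue curve $p\mapsto R_i(p,\bp_{-i})$ on an open neighborhood $U$ of $p_i$. This is enough: $\bp$ being an LNE, $p_i$ is a local maximum of the true curve, hence of the undercutting curve on $U$, hence of the undercutting curve on the undercutting interval. To establish the agreement in the generic case --- $p_i>0$ and no seller ranked below $i$ priced exactly at $p_i$ --- take $U$ to be a small open interval around $p_i$ containing no other seller's price; then for $p\in U$ the set of sellers priced weakly above $p$ in the profile $(p,\bp_{-i})$ is exactly $\mathcal{S}_r\cup\{i\}$, and these sellers have the same prices in $(p,\bp_{-i})$ and in $(p,\bp^{\mathcal{S}_r}_{-i})$: the sellers $\pi$ ranks below $i$ have prices strictly below $p$ in both profiles (they are set to $0$ in the latter). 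The pseudo-competitive property (Theorem~\ref{prop:clc-choice-implies-psg}) then gives $R_i(p,\bp_{-i})=R_i(p,\bp^{\mathcal{S}_r}_{-i})$ throughout $U$. When $p_i=0$ there is nothing to prove: $R_i(0,\cdot)=0$, and in fact $\bp^{\mathcal{S}_r}_{-i}=\bp_{-i}$ since every seller ranked below $i$ already has price $0$.

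The one real obstacle is ties: if $i$ shares its price $p^{\ast}:=p_i>0$ with some seller $j$ that $\pi$ ranks below $i$, then $j$ is priced $0$ in $\bp^{\mathcal{S}_r}$ but $p^{\ast}$ in $\bp$, and the two curves can disagree for $p$ just below $p^{\ast}$. I would resolve this by first proving a structural lemma: \emph{at an LNE, if $p_i=p_j=p^{\ast}>0$, then the set of customers who consider both $i$ and $j$, are indifferent between them on every attribute strictly more important than price, and purchase $i$, has zero mass.} The proof is a short deviation argument --- if this mass were $\mu>0$, then after $j$ lowers its price to $p^{\ast}-\varepsilon$ every such customer strictly prefers $j$ to $i$ and, having bought $i$ (hence weakly preferring $i$ to everything it considers), by transitivity of the lexicographic order now buys $j$; since $j$ also keeps all its earlier buyers (a price cut only enlarges $j$'s consideration-set membership and weakly improves its lexicographic standing), $R_j(p^{\ast}-\varepsilon,\bp_{-j})\ge(p^{\ast}-\varepsilon)\bigl(D_j(\bp)+\mu\bigr)$, which exceeds $R_j(\bp)$ for small $\varepsilon$, contradicting local optimality of $p_j$. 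Given this lemma and its mirror image (swap $i$ and $j$), a bookkeeping argument --- tracking which customers' purchases change when the tied sellers ranked below $i$ are slid from $p^{\ast}$ down to $0$ --- shows that $R_i(\cdot,\bp_{-i})$ is unchanged on a neighborhood of $p^{\ast}$: the customers that could switch away from $i$ are exactly those the lemma rules out, plus customers whose willingness-to-pay is below $p^{\ast}$, whom $i$ never serves near that price. Hence the two curves again coincide near $p_i$ and the argument concludes as in the generic case. I expect this last, tie-handling step --- identifying precisely whose choices a competitor's price move perturbs and using the LNE inequality to eliminate the harmful ones --- to be the delicate part of the proof; the rest is a direct consequence of the pseudo-competitive property.
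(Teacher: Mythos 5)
Your overall architecture matches the paper's: order sellers by price, use the pseudo-competitive property to transfer local optimality when prices are distinct, and handle ties via a measure-zero lemma proved by an $\varepsilon$-undercutting deviation (your lemma is essentially the paper's Lemma~\ref{lem:rev-doesn't-change}, proved the same way). The generic case and the deviation argument are correct.

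The gap is in the last tie-handling step. You claim that after zeroing out the sellers tied with $i$ at $p^{\ast}$ and ranked below it, $R_i(\cdot,\bp_{-i})$ and the undercutting curve agree on a full neighborhood of $p^{\ast}$, dismissing the remaining switchers as ``customers whose willingness-to-pay is below $p^{\ast}$, whom $i$ never serves near that price.'' That dismissal is wrong: at a price $p<p^{\ast}$, seller $i$ \emph{does} serve customers with $w_c\in[p,p^{\ast})$, and a positive mass of them (e.g., those who pass $j$'s non-price filter and weakly prefer $j$ above price) can defect to a tied competitor $j$ once $p_j$ is slid from $p^{\ast}$ to $0$, because $j$ only then enters their consideration set. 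These customers have $w_c<p^{\ast}$, so they are invisible to your lemma (which only constrains customers who consider $i$ \emph{at} price $p^{\ast}$), and nothing in an LNE rules them out. Hence the two curves genuinely need not coincide for $p<p^{\ast}$; equality holds only at $p=p^{\ast}$ (where your lemma applies, since any switcher must have $w_c\ge p^{\ast}$) and for $p>p^{\ast}$ (pseudo-competitiveness). The error is in a benign direction --- lowering competitors' prices can only lower $R_i$ --- so the theorem survives, but your proof needs the missing ingredient the paper supplies as Proposition~\ref{prop:cross-monotonicity} (cross-monotonicity: $R_i(p_i,\bp_{-i})$ is non-decreasing in $\bp_{-i}$). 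With it one gets $R_i(p^{\ast}-\varepsilon,\bp'_{-i})\le R_i(p^{\ast}-\varepsilon,\bp_{-i})\le R_i(p^{\ast},\bp_{-i})=R_i(p^{\ast},\bp'_{-i})$, replacing your false equality below $p^{\ast}$ with a one-sided inequality that suffices for local optimality on the undercutting curve.
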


In order to prove \cref{thm:general-LNE}, we argue that, in any LNE, the local optimality of a seller's price is preserved if {\it any} subset of sellers pricing weakly lower in that LNE reduced their price to 0. When sellers' prices are distinct, this fact is implied by the pseudo-competitive property, i.e., a seller's revenue is not impacted by the prices of strictly lower-priced sellers. The non-trivial aspect of the result is in showing that it holds {\it even in LNE where sellers set equal prices}, in which case the pseudo-competitive property fails to yield any meaningful implications (as it only pertains to {\it strictly} lower price changes). 
{The result in this case leverages the structure of CLC choice}. We defer its formal proof to Appendix \ref{apx:proof-of-lne-result}. 

 \Cref{thm:general-LNE} has important implications for equilibrium computation under CLC choice. To elaborate further, we introduce a mild assumption of sellers' undercutting revenue curves. For $\mbox{$i \in \cN$}$, and $\mbox{$\bp_{-i} \in \cV^{N-1}$}$,  $\mbox{$g_i(p_i,\bp_{-i})=\partial R_i(p, \bp_{-i})/\partial p\big|_{p=p_i}$}$ is the partial derivative of $R_i(p,\bp_{-i})$ with respect to $i$'s price. Similarly, for all $i\in \cN$, $\cS \subseteq \cN \setminus \{i\}$ and $\bp_{-i} \in \cV^{N-1}$, such that $\nearestminprice>0$, we define  $\mbox{$g_i(p_i,\bp_{-i}^{\cS})=\partial R_i(p, \bp_{-i}^{\cS})/\partial p\big|_{p=p_i}$}$. We assume that this partial derivative exists for all $p_i \in [0,\nearestminprice)$, and make the following assumption.

\begin{assumption}[Unimodality of undercutting revenue curves]\label{asp:unimodality}
For all $i\in \cN$, $\cS \subseteq \cN \setminus \{i\}$ and $\bp_{-i} \in \cV^{N-1}$, such that $\nearestminprice>0$, $R_i(p,\bp_{-i}^{\cS})$ is unimodal in the undercutting region, i.e., if $g_i(p,\bp_{-i}^{\cS})\leq 0$ for some $p\in [0,\nearestminprice)$, then $g_i(p',\bp_{-i}^{\cS})< 0$ for all $p'\in (p,\nearestminprice)$.
\end{assumption}

\Cref{asp:unimodality} enforces the mild condition that if a local optimum exists in the undercutting region for a seller, it must be unique. (Note that we assume unimodality of the revenue curve of a seller {\it only in the undercutting region} relative to the prices of the set of sellers $\cS$. This assumption still allows for multiple local optima over $[0, \bar{v}]$.) We present examples for which \Cref{asp:unimodality} holds in \Cref{ssec:gradient-dominance-discussion}.

 Under \Cref{asp:unimodality}, \cref{thm:general-LNE} implies that enumerating {\it all} LNE (and thus GNE) is computationally tractable when the number of sellers is small. (In contrast, computing GNE is generally intractable, even in {\it two}-player bimatrix games \citep{daskalakis2009complexity}.) This is because one can compute all valid order-price pairs, each of which requires solving a sequence of unimodal optimization problems, and then performing a computationally easy check for whether each of these is a LNE (or a GNE).\footnote{Though the unimodality assumption is standard, an analogous procedure would still compute a LNE/GNE (or return that none exists) when undercutting revenue curves exhibit multiple local optima, at the cost of a corresponding increase in number of best response computations.} We formalize this in \cref{cor:eq-easy}, deferring its proof to Appendix \ref{apx:efficient-comp}.

\begin{corollary}\label{cor:eq-easy}
Under Assumption \ref{asp:unimodality}, there exists a computational procedure that:
\begin{enumerate}
\item finds all LNE, or determines that none exists, in {$\textup{O}(N\cdot N!)$} undercutting best response computations;
\item finds all GNE, or determines that none exists, in $\textup{O}(N^2 \cdot N!)$ undercutting best response computations.
\end{enumerate}
{Moreover, each of the undercutting best response computations in these procedures is a single-dimensional unimodal maximization problem.}
\end{corollary}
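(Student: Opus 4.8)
\textbf{From \Cref{thm:general-LNE} to an enumeration.} The plan is to first generate all valid order-price pairs, then filter. I would enumerate all $N!$ orderings $\pi$ of the sellers, and for a fixed $\pi$ build its price vector greedily along ranks $i=1,\dots,N$: having fixed the (by induction, unique) prices of $\mathcal{S}_i=\{\pi(j):j<i\}$, set $p_{\pi(i)}$ to the undercutting best response of $\pi(i)$ relative to $\mathcal{S}_i$, i.e., the maximizer of $R_{\pi(i)}(\cdot,\bp^{\mathcal{S}_i}_{-\pi(i)})$ over the closed undercutting interval $[0,p^{\min}(\mathcal{S}_i)]$ (discarding the ordering if no maximizer exists). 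By \Cref{asp:unimodality} this curve is unimodal on the interval, so it has at most one local maximum; hence the best response, when it exists, is unique and the computation is a single-dimensional unimodal maximization --- which gives the final sentence of the statement. Since $p_{\pi(i)}\le p^{\min}(\mathcal{S}_i)=\min_{j<i}p_{\pi(j)}$, the monotonicity requirement of \Cref{def:valid-order} is automatic, so each ordering yields at most one candidate $\bp^\pi$ at a cost of $N$ best-response computations, for $\textup{O}(N\cdot N!)$ computations in all. Uniqueness of the local maximum also shows that the valid prices with respect to $\pi$ are unique, so $\{\bp^\pi\}$ is exactly the set of all VOPs; by \Cref{thm:general-LNE} it therefore contains every LNE (and hence every GNE).

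\textbf{Filtering for LNE (part (1)).} For each candidate $\bp^\pi$ I would test the LNE condition directly: for every $i$, decide whether $p_i$ locally maximizes $p\mapsto R_i(p,\bp^\pi_{-i})$. I expect this to cost $\textup{O}(1)$ per seller --- hence $\textup{O}(N)$ per candidate, dominated by the generation cost --- because in a sufficiently small neighborhood of $p_i$ the set of sellers pricing weakly above a point is constant except possibly at $p_i$ itself (if some other seller is priced exactly at $p_i$). On each side of $p_i$ the curve $R_i(\cdot,\bp^\pi_{-i})$ then coincides, by \Cref{prop:clc-choice-implies-psg}, with an undercutting revenue curve, and is therefore continuous and, by \Cref{asp:unimodality}, monotone with a known derivative sign; so local optimality reduces to checking the one-sided derivative signs at $p_i$ together with the one-sided limits at the at-most-one jump located at $p_i$. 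Outputting the candidates that pass (and ``none exists'' if none does) is correct by \Cref{thm:general-LNE}, proving part (1).

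\textbf{Filtering for GNE (part (2)).} Since every GNE is a LNE, I would further check, for each surviving candidate $\bp^\pi$ and each seller $i$, whether $R_i(p_i,\bp^\pi_{-i})=\max_{p\in[0,\bar{v}]}R_i(p,\bp^\pi_{-i})$. Sorting the $N-1$ other prices partitions $[0,\bar{v}]$ into at most $N$ subintervals; on the interior of a subinterval the set $\mathcal{S}$ of sellers pricing weakly above $p$ is fixed, so by \Cref{prop:clc-choice-implies-psg} the curve agrees there with the undercutting revenue curve $R_i(\cdot,(\bp^\pi_{-i})^{\mathcal{S}})$, and the subinterval lies inside $[0,p^{\min}(\mathcal{S})]$, so by \Cref{asp:unimodality} that restriction is unimodal --- one undercutting best-response computation per subinterval. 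The global best response is the largest of these $\le N$ subinterval maxima and the $\le N$ values of $R_i$ at the breakpoints (cheap evaluations); comparing it to $R_i(p_i,\bp^\pi_{-i})$ settles GNE-membership in $\textup{O}(N)$ computations per $(i,\pi)$, hence $\textup{O}(N^2\cdot N!)$ overall, proving part (2).

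\textbf{Main obstacle.} I anticipate the delicate step to be the correctness and $\textup{O}(1)$ cost of the LNE test exactly when two or more sellers set equal prices: there \Cref{prop:clc-choice-implies-psg} is vacuous --- it constrains only \emph{strictly} lower prices --- so the one-sided behavior of $R_i(\cdot,\bp^\pi_{-i})$ at a tied price must instead be pinned down using the finer CLC structure already exploited in the proof of \Cref{thm:general-LNE}. A secondary point of care is confirming that the enumeration produces \emph{all} VOPs (rather than a subset), which is what makes the subsequent filtering recover every LNE and GNE; this is precisely where the uniqueness of the local maximum guaranteed by \Cref{asp:unimodality} is used.
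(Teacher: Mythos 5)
Your proposal follows the paper's proof in all essentials: enumerate the $N!$ orderings, construct the (by \Cref{asp:unimodality}, unique) candidate VOP for each via $N$ sequential single-dimensional unimodal undercutting maximizations, invoke \Cref{thm:general-LNE} for completeness of the candidate list, and then filter — your GNE filter (one unimodal maximization on each of the $\le N$ subintervals between consecutive competitor prices, identified with undercutting curves via \Cref{prop:clc-choice-implies-psg}) is exactly the paper's, and your LNE filter differs only cosmetically from the paper's, which instead runs two bounded maximizations per seller over the adjacent intervals $[p_{\pi(i+1)},p_{\pi(i)}]$ and $[p_{\pi(i)},p_{\pi(i-1)}]$ and compares revenues (both variants cost $O(N)$ per ordering, preserving the stated bounds). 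The tie-case ``obstacle'' you flag is in fact resolvable within your own framework: the pseudo-competitive property still identifies each one-sided piece of $R_i(\cdot,\bp_{-i})$ near $p_i$ with an undercutting revenue curve (relative to the strictly-higher set $\mathcal{S}$ on the right and to $\mathcal{S}\cup\{j\}$ on the left, $j$ being the tied seller), so only the value of $R_i$ at the tie point itself requires a direct evaluation.
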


 The factorial dependence on the number of sellers in our procedure results from the exhaustive search over all seller orderings and the corresponding valid prices (if they exist). This dependence is inevitable in the worst case, even if the goal is to find a single LNE/GNE. This is because there could be exactly one LNE corresponding to an ordering that is a priori unknown, and which may take $O(N!)$ iterations to discover. Moreover, in the case where a LNE does not exist, an exhaustive search over all orderings is unavoidable to discover this. Hence, in the remainder of this section we turn to the task of finding sufficient conditions on the CLC class such that (i) a LNE is guaranteed to exist, and (ii) computing a LNE is polynomial in $N$. We present such a condition in \Cref{asp:grad-dom}.  

\begin{assumption}[Gradient dominance]\label{asp:grad-dom}
For two sellers $i, j \in \cN$, seller $i$ gradient-dominates seller $j$ if for all $\bp_{-\{i,j\}} \in\cV^{N-2}$, and $p \in (0,\bar{v})$:
\begin{align}
 g_i(p_i = p, p_j = 0, \bp_{-\{i,j\}}) &> \lim_{q\uparrow p} g_j(p_i = p, p_j = q,\bp_{-\{i,j\}}).
 \label{eq:i-dominates-j} 
\end{align}
Under gradient dominance, either $i$ gradient-dominates $j$ or $j$ gradient-dominates $i$, for all $i, j \in \cN$.
\end{assumption}
We say that {\it weak} gradient dominance holds if this inequality holds for all price vectors in which a seller setting a positive price makes strictly positive revenue. (We call such price vectors {\it non-trivial}.)

At a high level, gradient dominance models settings in which one seller has a competitive strength relative to another due to, e.g., quality differences. We defer examples of CLC models for which gradient dominance holds to \Cref{ssec:gradient-dominance-discussion}, and establish the connection between gradient dominance and existence and tractability of LNE in \Cref{prop:gd-implies-lne-existence-uniqueness}, whose proof we defer to Appendix \ref{apx:gd-implies-lne-existence-uniqueness}.

\begin{proposition}\label{prop:gd-implies-lne-existence-uniqueness}
Under Assumptions  \ref{asp:unimodality} and \ref{asp:grad-dom}, there exists a LNE $\bp^*$ in which all prices are distinct, with $p_i^* > p_j^*$ for all sellers $i,j$ such that $i$ gradient-dominates $j$. Moreover, there exists a procedure that computes such a LNE in $O(N^2)$ iterations, assuming knowledge of the gradient dominance relation for every pair of sellers.
\end{proposition}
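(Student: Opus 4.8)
The plan is to build, in a single sweep over the sellers, a valid order-price pair (VOP, \Cref{def:valid-order}) whose prices are strictly decreasing and strictly positive, and then to certify via the pseudo-competitive property (\Cref{prop:clc-choice-implies-psg}) that this VOP is a LNE. Strict separation of the prices is exactly what makes the certification easy: near each seller's assigned price its revenue curve coincides with an undercutting curve, so we never have to confront the subtle equal-price case that makes the general LNE-to-VOP correspondence of \Cref{thm:general-LNE} delicate. Throughout I would take $\pi:[N]\to\cN$ to list the sellers from strongest to weakest under gradient dominance, i.e.\ $\pi(i)$ gradient-dominates $\pi(j)$ whenever $i<j$. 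Producing such an ordering is a preliminary that has to be dispatched first: \Cref{asp:grad-dom} only posits that the relation is complete, whereas the assertion ``$p_i^*>p_j^*$ for all $i$ gradient-dominating $j$'' requires it to be a strict linear order, so one must argue acyclicity (e.g.\ that the relation coincides with the order of a scalar ``competitive strength'').

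Given $\pi$, I would construct $\bp^*$ inductively over ranks $i=1,\dots,N$. Writing $\mathcal{S}_i=\{\pi(1),\dots,\pi(i-1)\}$, I freeze these sellers at $p^*_{\pi(1)},\dots,p^*_{\pi(i-1)}$, set every remaining seller to $0$, and let $p^*_{\pi(i)}$ be the local maximizer --- unique by \Cref{asp:unimodality} --- of the undercutting revenue curve $R_{\pi(i)}(\,\cdot\,,\bp^{\mathcal{S}_i}_{-\pi(i)})$ over $[0,p^{\min}(\mathcal{S}_i)]$. Condition (2) of \Cref{def:valid-order} then holds by construction; the work is condition (1), namely $\bar{v}>p^*_{\pi(1)}>\cdots>p^*_{\pi(N)}>0$. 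Strict positivity of each $p^*_{\pi(i)}$ follows because $R_{\pi(i)}$ vanishes at price $0$ and is nonnegative --- so it cannot be decreasing at $0^+$ and its unimodal maximizer is not at the left endpoint --- once one rules out an identically-zero undercutting curve (which gradient dominance forbids); similarly $p^*_{\pi(1)}<\bar{v}$, since demand, hence revenue, vanishes at $p=\bar{v}$.

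The strict drop $p^*_{\pi(i)}<p^*_{\pi(i-1)}=p^{\min}(\mathcal{S}_i)$ is the heart of the argument and the only place gradient dominance enters. By induction $p^*_{\pi(i-1)}$ lies strictly inside its undercutting interval, so the first-order condition reads $g_{\pi(i-1)}(p_{\pi(i-1)}=p^*_{\pi(i-1)},\,p_{\pi(i)}=0,\,\cdot\,)=0$ (recall $\pi(i)$ is frozen at $0$ in rank $(i-1)$'s subproblem); hence \eqref{eq:i-dominates-j} evaluated at $P=p^*_{\pi(i-1)}\in(0,\bar{v})$ and at the very price configuration used in rank $i$'s subproblem gives $\lim_{q\uparrow p^*_{\pi(i-1)}}g_{\pi(i)}(p_{\pi(i-1)}=p^*_{\pi(i-1)},\,p_{\pi(i)}=q,\,\cdot\,)<0$, i.e.\ $\pi(i)$'s undercutting curve relative to $\mathcal{S}_i$ has strictly negative slope as $p$ approaches the right endpoint $p^{\min}(\mathcal{S}_i)$, so by \Cref{asp:unimodality} its maximizer lies strictly inside. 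Thus all prices are distinct and decrease exactly along the gradient-dominance order. To see $\bp^*$ is a LNE, fix $\pi(m)$ and a small open interval $\cU_m\ni p^*_{\pi(m)}$ lying strictly between $p^*_{\pi(m+1)}$ and $p^*_{\pi(m-1)}$ (conventions $p^*_{\pi(0)}=\bar{v}$, $p^*_{\pi(N+1)}=0$). For every $p\in\overline{\cU_m}$, in the vector obtained from $\bp^*$ by moving $\pi(m)$ to $p$ exactly the sellers $\pi(m+1),\dots,\pi(N)$ price strictly below $p$, so the pseudo-competitive property lets us zero them out without changing $R_{\pi(m)}$: $R_{\pi(m)}(p,\bp^*_{-\pi(m)})=R_{\pi(m)}(p,\bp^{\mathcal{S}_m}_{-\pi(m)})$ on $\overline{\cU_m}$. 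Since $p^*_{\pi(m)}$ is the unimodal maximizer of the right-hand side over $[0,p^{\min}(\mathcal{S}_m)]\supseteq\overline{\cU_m}$ and is interior to $\cU_m$, it maximizes $R_{\pi(m)}(\,\cdot\,,\bp^*_{-\pi(m)})$ over $\overline{\cU_m}$, which is the local Nash condition. For the running time, given the pairwise relations $\pi$ is obtained in $O(N^2)$ comparisons and the sweep performs $N$ one-dimensional unimodal maximizations, for $O(N^2)$ iterations total.

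I expect the main obstacle to be the bookkeeping in the strict-drop step: aligning the price configuration in the gradient-dominance inequality \eqref{eq:i-dominates-j} with the configurations of the rank-$(i-1)$ and rank-$i$ single-seller subproblems, so that $\pi(i-1)$'s first-order condition transfers at the shared price level $p^*_{\pi(i-1)}$ into a strictly negative slope for $\pi(i)$, and then invoking \Cref{asp:unimodality} --- rather than any global concavity, which CLC revenue curves lack --- to turn ``negative slope at the endpoint'' into ``maximizer strictly interior''; together with the preliminary that gradient dominance is a strict linear order. Everything else --- positivity, behavior at $\bar{v}$, and the LNE certification --- is routine once the pseudo-competitive property is in hand.
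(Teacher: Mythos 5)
Your construction is essentially the paper's: build a VOP along a dominance-consistent ordering by taking each seller's unique undercutting maximizer, use the previous seller's interior first-order condition together with \eqref{eq:i-dominates-j} to force the next seller's gradient to be strictly negative at the right endpoint of its undercutting interval, invoke \Cref{asp:unimodality} to place its maximizer strictly in the interior, and certify the resulting distinct-price VOP as a LNE via the pseudo-competitive property; the $O(N^2)$ accounting is also the same. The one step you flag but do not resolve --- that gradient dominance admits a consistent linear order --- is exactly where the paper does something different that you would need: \Cref{asp:grad-dom} only makes the relation complete (a tournament), which can in principle contain cycles, so ``sort from strongest to weakest'' is not automatically well-defined and acyclicity does not obviously follow from the assumption. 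The paper's \cref{prop:grad-dom-implication} sidesteps this with an adjacent-swap argument: start from an arbitrary ordering, swap the first consecutive pair violating the gradient inequality \eqref{eq:ordered-gradients-1}, and note that by gradient dominance no pair is ever swapped twice, so the process terminates after at most $\binom{N}{2}$ swaps in an ordering whose \emph{consecutive} pairs satisfy the inequality --- which is all the inductive construction actually needs. Everything else in your plan (strict positivity at $0$, interiority below $\bar v$, the local-Nash certification on a neighborhood strictly between the two adjacent equilibrium prices) matches the paper's proof.
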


{

We next present special cases of the CLC class for which gradient dominance holds, in particular highlighting the connection between gradient dominance and the strength of a seller along a non-price attribute.

\subsubsection{Gradient Dominance: Special Cases}\label{ssec:gradient-dominance-discussion}

We first show that gradient dominance weakly holds under Best-I-Can-Afford behavior, allowing us to leverage \Cref{prop:gd-implies-lne-existence-uniqueness} to characterize all non-trivial LNE for this special case. Recall, under BICA behavior, the customer chooses the highest-quality seller whose price is at most her willingness-to-pay. Let $q_i$ denote the quality of seller $i$, for $i \in \cN$.

\begin{proposition}
\label{prop:bica-gd}
Under BICA behavior, for any two sellers $i, j \in \cN$ with qualities $q_i > q_j$, seller $i$ weakly gradient-dominates seller $j$. Moreover, if all sellers' qualities are distinct and customers' willingness-to-pay distribution has a monotone hazard rate\footnote{A distribution has a {monotone hazard rate} if its hazard rate $h(p) = \frac{f(p)}{1-F(p)}$ is non-decreasing for all $p$, where $f(\cdot)$ and $F(\cdot)$ are respectively used to denote the probability density and cumulative distribution functions \citep{hartline2013mechanism}.} (MHR), there exists a unique non-trivial LNE in which sellers' prices are strictly increasing in their respective qualities. 
\end{proposition}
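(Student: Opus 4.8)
The plan is to reduce BICA to an explicit revenue formula, establish the gradient-dominance inequality by direct computation, and then combine unimodality (forced by MHR) with the valid-order-price structure of \Cref{thm:general-LNE} to pin down the unique non-trivial LNE. Throughout, write $F,f$ for the CDF and (Lipschitz) density of willingness-to-pay, $h=f/(1-F)$ for its hazard rate, and relabel sellers so that $q_1>q_2>\cdots>q_N$, folding the arbitrary tie-break among equal-quality sellers into a strict ``identity'' coordinate as in the remark after \Cref{def:clc} (so that all qualities may be taken distinct). Under BICA, a customer of type $w$ buys from the highest-quality seller with $p_k\le w$; hence, with $\mu_i(\bp):=\min\{p_k:q_k>q_i\}$ (and $\mu_i=\bar{v}$ when $i$ is the top-quality seller), seller $i$ gets exactly the mass of customers with $p_i\le w<\mu_i(\bp)$, so
\[
R_i(\bp)=\begin{cases} p_i\,(F(\mu_i(\bp))-F(p_i)), & p_i\le \mu_i(\bp),\\ 0, & p_i>\mu_i(\bp),\end{cases}\qquad g_i(\bp)=F(\mu_i(\bp))-F(p_i)-p_i f(p_i)\ \text{ for } p_i<\mu_i(\bp).
\]
In particular $R_i$ depends on $\bp_{-i}$ only through the strictly-higher-quality sellers' prices, which is the concrete face of the pseudo-competitive property (\Cref{prop:clc-choice-implies-psg}) here.

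For part (1), fix $i,j$ with $q_i>q_j$ and $\bp_{-\{i,j\}}$. Since $q_j<q_i$, setting $p_j=0$ does not enter $\mu_i$, so $g_i(p_i{=}p,p_j{=}0,\bp_{-\{i,j\}})=F(\mu)-F(p)-pf(p)$ with $\mu:=\min\{p_k:q_k>q_i,\ k\notin\{i,j\}\}$. For the right-hand side of \eqref{eq:i-dominates-j}, seller $j$'s higher-quality competitors include seller $i$ at price $p$; let $\nu:=\min\{p_k:q_k>q_j,\ k\notin\{i,j\}\}$, so $\mu_j=\min(p,\nu)$. If $\nu>p$, then for $q$ near $p$ one has $\mu_j=p$ and $g_j(p_i{=}p,p_j{=}q,\bp_{-\{i,j\}})=F(p)-F(q)-qf(q)\to -pf(p)$ as $q\uparrow p$, so \eqref{eq:i-dominates-j} reduces to $F(\mu)>F(p)$, i.e.\ $p<\mu$ --- exactly the regime in which seller $i$ (the positively-priced seller at issue) earns positive revenue. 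If instead $\nu\le p$, then for every $q\in(\nu,p)$ seller $j$ faces a strictly cheaper, higher-quality competitor and sells nothing, so $g_j\equiv0$ near $q=p$; but then seller $j$, priced just below $p$, already makes zero revenue, so the configuration is trivial and no inequality is required. This gives weak gradient dominance; I expect this last degenerate case (which is precisely why the statement is about \emph{weak} gradient dominance) to require the most care.

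For part (2), I first verify \Cref{asp:unimodality}. On an undercutting region of seller $i$ relative to a set $\cS$: either some strictly-higher-quality seller lies outside $\cS$ (hence is priced $0$) and $R_i\equiv0$ there, or all of them lie in $\cS$ and $R_i(p,\bp^{\cS}_{-i})=p\,(F(m)-F(p))$ on $[0,m]\supseteq[0,p^{\min}(\cS)]$, where $m$ is their minimum price. In the latter case $g_i(p,\bp^{\cS}_{-i})/(1-F(p))=\bigl(1-\tfrac{1-F(m)}{1-F(p)}\bigr)-p\,h(p)$; the first term strictly decreases in $p$ on $[0,m)$, and under MHR $p\mapsto p\,h(p)$ is nondecreasing, so this ratio is strictly decreasing and, since it is positive at $p=0$, the curve is unimodal. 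Thus \Cref{asp:unimodality} holds, and the argument of \Cref{prop:gd-implies-lne-existence-uniqueness} (which goes through with weak gradient dominance once restricted to non-trivial price vectors) yields a LNE $\bp^*$ with $p_1^*>p_2^*>\cdots>p_N^*$; explicitly, $p_1^*$ is the unique maximizer of $p(1-F(p))$ (unique by MHR) and, inductively, $p_k^*$ is the unique local maximizer of $p\,(F(p_{k-1}^*)-F(p))$ on $[0,p_{k-1}^*]$, which is interior since its derivative at $p_{k-1}^*$ equals $-p_{k-1}^*f(p_{k-1}^*)<0$.

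It remains to prove uniqueness, which I expect to be the main obstacle. Let $\bp$ be any non-trivial LNE. If two sellers shared a price, or if $p_i\le p_j$ for some $j$ with $q_j<q_i$, then the lower-quality seller would face a weakly cheaper, strictly higher-quality competitor and earn zero revenue, contradicting non-triviality; hence all prices are distinct and, sorted decreasingly, coincide with the quality order. By \Cref{thm:general-LNE}, $\bp$ belongs to a VOP, and the preceding sentence forces its ordering $\pi$ to be the quality order. Its VOP conditions then read: $p_1$ is a local maximizer of $p(1-F(p))$ over $[0,\bar{v}]$, so $p_1=p_1^*$ by MHR; and for $k\ge2$, $p_k$ is a local maximizer of $p\,(F(p_{k-1})-F(p))$ over $[0,p_{k-1}]$, which by unimodality and induction forces $p_k=p_k^*$. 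Hence $\bp=\bp^*$ is the unique non-trivial LNE, with prices strictly increasing in quality. The delicate points will be (i) correctly handling the degenerate configurations in part (1), and (ii) arguing that the VOP ordering of \emph{any} non-trivial LNE must be the quality ordering and that the VOP conditions then determine the prices uniquely.
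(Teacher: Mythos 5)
Your proof is correct and follows essentially the same route as the paper's: explicit BICA revenue formulas, a direct computation showing the gradient gap equals seller $i$'s demand (with trivial configurations set aside, which is exactly why the dominance is only weak), MHR-driven unimodality of the undercutting revenue curves, and existence/uniqueness via the quality ordering being the only ordering admitting a non-trivial VOP, combined with \Cref{thm:general-LNE} and the argument of \Cref{prop:gd-implies-lne-existence-uniqueness}. If anything, you are slightly more careful than the paper in the degenerate sub-case where seller $j$ is already undercut by a third higher-quality seller (your $\nu \le p$ case), which the paper's Case 2 computation glosses over but which is indeed covered by triviality.
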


We defer the proof of \Cref{prop:bica-gd} to Appendix \ref{apx:bica-gd}. While gradient dominance (weakly) holds for budgeted, purely quality-sensitive customers, we anticipate that it is robust to the presence of a limited number of price-sensitive customers. To formalize this intuition, we derive general sufficient conditions (i.e., beyond BICA behavior) for gradient dominance to hold in a duopoly. 

For any customer $c$, let $\nopriceconsiderationset$ denote the consideration set formed {\it without} accounting for {the willingness-to-pay constraint}, i.e., $\nopriceconsiderationset = \{i\in\mathcal{N} : (v^1_i,\ldots, v^K_i)\in  \attributecriteria{c}\}.$
For $i,j\in\mathcal{N}$, we use $\npaplus{i}{j}$ to denote the probability that a customer strictly prefers $i$ to $j$ due to a non-price attribute ranked higher than price, conditional on the event $\nopriceconsiderationset = \{1,2\}$. 

\begin{proposition}
\label{prop:bica-gd-robustness}
Consider a duopoly in which customers' willingness-to-pay distribution is independent of all other CLC choice primitives, and has a monotone hazard rate. Then, seller $i$ gradient-dominates seller $j$ if and only if the following hold:
\begin{enumerate}
\item\label{cond1} Non-price attribute dominance of seller $i$:
$$\npaplus{i}{j}\Prob{\nopriceconsiderationset = \{1,2\}} + \Prob{\nopriceconsiderationset=\{i\}} > \left(1-\npaplus{i}{j}\right)\Prob{\nopriceconsiderationset = \{1,2\}} +\Prob{\nopriceconsiderationset = \{j\}}$$
\item\label{cond2} Relative competitiveness of seller $j$:
$$\left(1-\npaplus{i}{j}\right)\Prob{\nopriceconsiderationset = \{1,2\}} +\Prob{\nopriceconsiderationset = \{j\}}\geq \Prob{\nopriceconsiderationset=\{i\}}$$
\end{enumerate}
Under these two conditions, there exists a LNE in which $p_i^* > p_j^*$.
\end{proposition}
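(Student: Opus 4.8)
The plan is to compute the undercutting revenue curves of both sellers in the duopoly and then directly verify that the gradient dominance inequality \eqref{eq:i-dominates-j} reduces to the two stated conditions. Since $N=2$, the only relevant instance of \eqref{eq:i-dominates-j} is with empty $\bp_{-\{i,j\}}$: we must compare $g_i(p_i = p, p_j = 0)$ with $\lim_{q\uparrow p} g_j(p_i = p, p_j = q)$ for all $p\in(0,\bar v)$. First I would write down seller $i$'s demand when $p_j = 0$: a customer buys from $i$ iff $i$ is in her (price-unconstrained) consideration set $\nopriceconsiderationset$, her willingness-to-pay $w_c$ is at least $p$ (note $j$ priced at $0$ is always affordable, so $i$ can only win on a non-price attribute ranked above price, or by being the sole considered seller), and either $\nopriceconsiderationset = \{i\}$, or $\nopriceconsiderationset=\{1,2\}$ and the customer strictly prefers $i$ to $j$ on a higher-ranked non-price attribute. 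Using independence of the willingness-to-pay distribution, this gives $D_i(p,0) = \bar F(p)\left(\npaplus{i}{j}\Prob{\nopriceconsiderationset=\{1,2\}} + \Prob{\nopriceconsiderationset=\{i\}}\right)$, where $\bar F$ is the willingness-to-pay survival function; write $A_i$ for the constant in parentheses. Hence $R_i(p,0) = p\,\bar F(p) A_i$ and $g_i(p,0) = A_i\bigl(\bar F(p) - p f(p)\bigr)$.

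Next I would compute the left-limit $\lim_{q\uparrow p} g_j(p_i=p, p_j=q)$. For $q < p$, seller $j$ (the strictly cheaper seller) wins a customer iff $j\in\nopriceconsiderationset$, $w_c\geq q$, and the customer does not strictly prefer $i$ over $j$ on a higher-ranked non-price attribute when both are considered; among customers with $\nopriceconsiderationset=\{1,2\}$ this is the complementary event, of conditional probability $1-\npaplus{i}{j}$. So $D_j(p,q) = \bar F(q)\left((1-\npaplus{i}{j})\Prob{\nopriceconsiderationset=\{1,2\}} + \Prob{\nopriceconsiderationset=\{j\}}\right)$; write $B_j$ for the constant. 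Then $g_j(p,q) = B_j\bigl(\bar F(q) - q f(q)\bigr)$ for $q<p$, and letting $q\uparrow p$ gives $\lim_{q\uparrow p} g_j(p_i=p,p_j=q) = B_j\bigl(\bar F(p) - p f(p)\bigr)$. Thus the gradient dominance inequality at price $p$ is exactly
\[
A_i\bigl(\bar F(p) - p f(p)\bigr) > B_j\bigl(\bar F(p) - p f(p)\bigr).
\]
Here $A_i$ is the left-hand side quantity in Condition~\ref{cond1} and $B_j$ is the right-hand side quantity, so Condition~\ref{cond1} is precisely $A_i > B_j$, while Condition~\ref{cond2} asserts $B_j \geq \Prob{\nopriceconsiderationset=\{i\}}$. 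It remains to argue this pair of conditions is equivalent to the inequality holding for \emph{all} $p\in(0,\bar v)$, which is where the MHR hypothesis and the sign of $\bar F(p)-pf(p)$ come in: I would show that under MHR the factor $\bar F(p) - p f(p)$ is positive on an initial interval and that on the region where it could vanish or turn negative, the revenue is non-positive / trivial, so the relevant comparison is governed by Condition~\ref{cond2} ensuring $j$'s curve stays dominated; conversely, if either condition fails one exhibits a price $p$ violating \eqref{eq:i-dominates-j}. The existence of a LNE with $p_i^* > p_j^*$ then follows immediately from \Cref{prop:gd-implies-lne-existence-uniqueness}.

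The main obstacle I anticipate is the careful sign analysis of $\bar F(p) - p f(p)$ and the reconciliation of the two conditions with the ``for all $p$'' quantifier — in particular handling the boundary behavior as $p\to\bar v$ (where $\bar F(p)$ may hit $0$) and making precise the role of Condition~\ref{cond2} versus Condition~\ref{cond1} when the common factor changes sign; the MHR assumption is what makes $p \bar F(p)$ well-behaved (its unique unconstrained maximizer is where $\bar F(p) = p f(p)$) and lets us localize the comparison. The bookkeeping identifying $\Prob{\nopriceconsiderationset=\{i\}}$, $\Prob{\nopriceconsiderationset=\{j\}}$, $\Prob{\nopriceconsiderationset=\{1,2\}}$ and $\npaplus{i}{j}$ with the correct demand coefficients — especially correctly accounting for which customers strictly prefer $i$ on a non-price attribute ranked above price versus those who are indifferent or prefer $j$ — is routine but must be done with care, since the CLC tie-breaking and the ``first attribute on which a strict preference occurs'' convention determine exactly which mass lands in $A_i$ versus $B_j$.
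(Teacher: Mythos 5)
Your computation of the higher-priced seller's gradient, $g_i(p,0)=A_i\bigl(\bar F(p)-pf(p)\bigr)$ with $A_i=\npaplus{i}{j}\Prob{\nopriceconsiderationset=\{1,2\}}+\Prob{\nopriceconsiderationset=\{i\}}$, is correct and matches the paper. The genuine gap is in your demand formula for the cheaper seller $j$. You assert $D_j(p,q)=\bar F(q)B_j$, i.e., that $j$ only wins customers who do not strictly prefer $i$ on an attribute ranked above price. This omits the customers with $\nopriceconsiderationset=\{1,2\}$ who \emph{do} strictly prefer $i$ on a higher-ranked non-price attribute but whose willingness-to-pay lies in $[q,p)$: for them seller $i$ is eliminated in the consideration (filtering) stage because $p>w_c$, so they buy from $j$. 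The correct demand conditional on $\nopriceconsiderationset=\{1,2\}$ is $(\bar F(q)-\bar F(p))\npaplus{i}{j}+\bar F(q)(1-\npaplus{i}{j})=\bar F(q)-\bar F(p)\npaplus{i}{j}$, which yields
\[
\lim_{q\uparrow p} g_j(p,q)=\bigl(\bar F(p)-pf(p)\bigr)\bigl(\Prob{\nopriceconsiderationset=\{j\}}+\Prob{\nopriceconsiderationset=\{1,2\}}\bigr)-\bar F(p)\,\npaplus{i}{j}\Prob{\nopriceconsiderationset=\{1,2\}},
\]
i.e., your $B_j\bigl(\bar F(p)-pf(p)\bigr)$ minus an additional $pf(p)\,\npaplus{i}{j}\Prob{\nopriceconsiderationset=\{1,2\}}$.

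This is not a routine bookkeeping slip: it is fatal to your endgame. Under your formula the dominance inequality collapses to $(A_i-B_j)\bigl(\bar F(p)-pf(p)\bigr)>0$ for all $p\in(0,\bar v)$, which can never hold because $\bar F(p)-pf(p)$ is positive near $0$ and vanishes (then turns negative) at the monopoly price under MHR — so your route would conclude that gradient dominance is impossible, contradicting the proposition. With the missing term restored, the gradient gap becomes $(A_i-B_j)\bigl(\bar F(p)-pf(p)\bigr)+pf(p)\,\npaplus{i}{j}\Prob{\nopriceconsiderationset=\{1,2\}}$; dividing by $\bar F(p)$ and using that $1-pf(p)/\bar F(p)$ is non-increasing under MHR, the paper shows the binding case is $p\to 0$ when $B_j\geq\Prob{\nopriceconsiderationset=\{i\}}$ (yielding Condition~\ref{cond1}) and that the right-hand side diverges to $+\infty$ as $p\to\bar v$ when Condition~\ref{cond2} fails (yielding its necessity). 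Relatedly, your proposed sign analysis — that where $\bar F(p)-pf(p)$ turns negative "the revenue is non-positive / trivial" — is not correct: revenue remains strictly positive there, and it is precisely the extra $pf(p)$ term from the spillover demand that preserves the inequality past the monopoly price.
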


Hence, we obtain robustness of gradient dominance to price-sensitive customers for BICA choice in \Cref{cor:bica-gd-robustness} below. For notational convenience we assume that $q_1 > q_2$.
\begin{corollary}\label{cor:bica-gd-robustness}
Consider a duopoly composed of a mixture of BICA and greedy customers, with common willingness-to-pay distribution with monotone hazard rate. Then, seller 1 gradient-dominates seller 2 if strictly more than half of the customer population is of BICA type. In this case, there exists a LNE in which $p_1^* > p_2^*$.
\end{corollary}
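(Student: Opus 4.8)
The plan is to read off the relevant consideration-set probabilities for this population and then apply \Cref{prop:bica-gd-robustness} essentially verbatim. Write $\alpha$ for the mass of BICA customers, so that $1-\alpha$ is the mass of greedy customers and, by hypothesis, $\alpha > 1/2$. The first observation is that neither customer type imposes any criterion on non-price attributes during consideration-set formation: a BICA customer filters only on her willingness-to-pay, and a greedy customer is the special case of a CICT customer with no quality floor, so she too imposes no criterion on non-price attributes and filters only on her willingness-to-pay. Hence every customer $c$ in the population has $\nopriceconsiderationset = \{1,2\}$, which gives $\Prob{\nopriceconsiderationset = \{1,2\}} = 1$ and $\Prob{\nopriceconsiderationset = \{1\}} = \Prob{\nopriceconsiderationset = \{2\}} = 0$.

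Next I would evaluate $\npaplus{1}{2}$. Since $\Prob{\nopriceconsiderationset=\{1,2\}}=1$, the conditioning in its definition is vacuous, so $\npaplus{1}{2}$ is simply the probability that a customer strictly prefers seller~1 to seller~2 on a non-price attribute ranked above price. A BICA customer ranks quality above price and has $q_1 > q_2$, so she strictly prefers seller~1 on quality; a greedy customer ranks price first and is indifferent between the two sellers on every non-price attribute, so she contributes nothing. Therefore $\npaplus{1}{2} = \alpha$ (and, symmetrically, $\npaplus{2}{1} = 0$).

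It remains to check the two conditions of \Cref{prop:bica-gd-robustness} with $(i,j) = (1,2)$. Substituting the quantities above, the non-price attribute dominance condition reduces to $\alpha > 1-\alpha$, i.e.\ $\alpha > 1/2$, which is exactly the corollary's hypothesis; and the relative competitiveness condition reduces to $1-\alpha \ge 0$, which is automatic. The remaining hypotheses of \Cref{prop:bica-gd-robustness} hold by assumption: the common willingness-to-pay distribution makes the willingness-to-pay independent of the remaining choice primitives (here, the customer's type), and it has a monotone hazard rate. Applying \Cref{prop:bica-gd-robustness} then yields that seller~1 gradient-dominates seller~2 and that there exists a LNE with $p_1^* > p_2^*$, which is the claim.

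The argument carries essentially no technical obstacle once \Cref{prop:bica-gd-robustness} is in hand; the only point that warrants a moment's care is the identification of ``greedy'' behavior as a CICT customer with no quality floor, so that price is her top-ranked attribute and no non-price attribute strictly separates the two sellers for her. This is precisely what forces her contribution to $\npaplus{1}{2}$ to be zero, and is the mechanism by which a BICA majority ($\alpha > 1/2$) tips gradient dominance toward the higher-quality seller.
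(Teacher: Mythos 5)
Your proof is correct and follows essentially the same route the paper intends: since neither BICA nor greedy customers filter on non-price attributes, $\Prob{\nopriceconsiderationset=\{1\}}=\Prob{\nopriceconsiderationset=\{2\}}=0$ and $\Prob{\nopriceconsiderationset=\{1,2\}}=1$, while $\npaplus{1}{2}$ equals the BICA mass (only BICA customers rank quality above price and so strictly prefer the higher-quality seller~1 there), so Condition~\ref{cond1} of \Cref{prop:bica-gd-robustness} reduces to $\alpha>1/2$ and Condition~\ref{cond2} holds trivially. Your added care in checking that the common willingness-to-pay distribution supplies the independence hypothesis of \Cref{prop:bica-gd-robustness} is a welcome touch, but the argument is the same.
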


We provide some intuition as to how the two conditions in \Cref{prop:bica-gd-robustness} yield gradient dominance, deferring its proof to Appendix \ref{apx:bica-gd-robustness}. Condition \ref{cond1} ensures that $i$ does not have a large incentive to undercut seller $j$ (i.e., that $i$ is ``dominant'' in the market). This occurs if more customers prefer $i$ to $j$ based on the following criteria: either only $i$ satisfies the non-price attribute criteria in the filtering phase, or $i$ beats $j$ on a non-price attribute ranked higher than price. Condition \ref{cond2}, on the other hand, ensures that not too many customers choose $i$ regardless of $p_j$. This guarantees that $j$ has an incentive to undercut $i$, and occurs if there are more customers who prefer $j$ to $i$ than customers for whom only $i$ survives the non-price filter.

\begin{remark}
Distinctness of prices in equilibrium may seem particularly surprising given the fact that, under CLC choice, all customers prefer lower-priced items once they arrive at the price attribute in the selection phase. The above examples, however, recover the well-established intuition that quality sensitivity of customers plays a strong role in different prices arising in equilibrium. On the other hand, when the customer population is largely price-sensitive, gradient dominance likely fails to hold, and one would expect $\epsilon$-undercutting to be optimal for sellers, which would then lead to a price war. It is worth noting, however, that equal pricing at 0 cannot be a LNE if any positive mass of customers chooses based on a non-price attribute ranked higher than price (since a seller always stands to gain by setting a strictly positive price).
\end{remark}

\subsubsection{Non-equivalence of LNE and VOPs}
Given \cref{thm:general-LNE}, a natural question that arises is whether a VOP is necessarily a LNE. By the pseudo-competitive property, if all prices in a VOP are distinct, the corresponding price vector is necessarily a LNE. The issue then arises for VOPs in which two sellers price equally. To see this, consider a VOP $(\pi, \bp)$ in which two sellers $\pi(i)$ and $\pi(i+1)$ price equally. For this VOP to be a LNE, it must be the case that the local optimality of $p_{\pi(i)}$ is not perturbed when $\pi(i+1)$ raises its price from 0 to $p_{\pi(i)}$. \cref{ex:vop-not-lne} establishes that there exist settings for which this local optimality may indeed be perturbed; as a result, the converse of \cref{thm:general-LNE} does not hold in general. We defer its proof to Appendix \ref{apx:vop-not-lne}.

\begin{proposition}\label{ex:vop-not-lne}
There exist instances of CLC choice for which a VOP is not a LNE.
\end{proposition}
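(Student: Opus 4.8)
The plan is to exhibit an explicit instance of CLC choice — most naturally a duopoly ($N=2$) — together with a valid order-price pair $(\pi,\bp)$ in which the two sellers set equal prices $p_1 = p_2 = p^\circ > 0$, such that this price vector is not a LNE. By the discussion preceding the statement, the only obstruction to a VOP being a LNE is exactly this equal-price case: one must check whether the local optimality of $p_{\pi(1)}$ survives when $\pi(2)$ raises its price from $0$ up to $p^\circ$. So the construction should be designed so that, in the VOP's defining optimization, seller $\pi(1) =: 1$ finds $p^\circ$ locally optimal against $\bp^{\mathcal{S}_1}_{-1}$ (where $\mathcal{S}_1 = \phi$, so seller $2$'s price is pinned at $0$), seller $\pi(2) =: 2$ finds $p^\circ$ a local maximum of its undercutting revenue curve relative to $\mathcal{S}_2 = \{1\}$ on $[0,p^\circ]$ — but once seller $2$ actually prices at $p^\circ$, seller $1$'s revenue curve near $p^\circ$ is perturbed so that $p^\circ$ is no longer a local maximum for seller $1$ in the full game $R_1(\cdot, p_2 = p^\circ)$.

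Concretely, I would take a customer population that is a mixture of two groups along the rating attribute: a mass that strictly prefers seller $1$ to seller $2$ on a non-price attribute (so these customers buy from $1$ whenever $1$ is in their consideration set, regardless of how seller $2$ is priced), and a mass of greedy/price-sensitive customers who buy the strictly cheaper item and split ties. The key mechanism: when $p_2 = 0$, seller $1$'s demand comes only from its ``loyal-on-quality'' customers whose willingness-to-pay exceeds $p_1$, giving a smooth, unimodal revenue curve with interior maximum $p^\circ$; and seller $1$'s undercutting-relative-to-$\phi$ curve has the same form, so $p^\circ$ is valid. But when $p_2$ is raised to equal $p_1 = p^\circ$, the greedy customers with willingness-to-pay at least $p^\circ$ now split between the two sellers at the tie, handing seller $1$ a discontinuous jump in demand at prices just below $p^\circ$ versus exactly at $p^\circ$ — specifically, lowering its price infinitesimally below $p^\circ$ lets seller $1$ capture \emph{all} the greedy mass rather than half, a first-order (in fact zeroth-order, discontinuous) gain. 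This makes $p^\circ$ fail to be a local best response for seller $1$ against $p_2 = p^\circ$, so $(\pi,\bp)$ is a VOP but not a LNE. I also need to verify the ordering $\pi = (1,2)$ is consistent with Definition~\ref{def:valid-order} item~1, i.e., $p_{\pi(2)} = p^\circ \le p_{\pi(1)} = p^\circ$, which holds with equality, and that seller $2$'s price $p^\circ$ is genuinely a local max of its undercutting curve on $[0,p^\circ]$ — tuning the masses and the willingness-to-pay density so seller $2$'s undercutting revenue is increasing up to $p^\circ$ and then the interval ends there.

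The main obstacle will be making all three consistency conditions hold \emph{simultaneously} with a single choice of parameters (the two group masses, the willingness-to-pay distribution, and the value $p^\circ$): seller $1$'s $p^\circ$ must be an interior local max of its loyal-demand revenue, seller $2$'s $p^\circ$ must be a boundary local max of its undercutting-revenue on $[0,p^\circ]$ (i.e., that curve is nondecreasing approaching $p^\circ$), and the greedy mass must be large enough that the tie-splitting discontinuity at $p_2 = p^\circ$ actually destroys seller $1$'s local optimality. A clean way to control everything is to use a uniform willingness-to-pay distribution on $[0,1]$ (as in Figure~\ref{fig:satisficing}), let a fraction $\alpha$ of customers prefer seller $1$ on quality and $1-\alpha$ be greedy, write down $R_1(p_1, 0) = \alpha p_1(1-p_1)$ giving $p^\circ = 1/2$, then write $R_1(p_1, p_2 = 1/2)$ near $p_1 = 1/2$ and observe the demand jump of size $(1-\alpha)\cdot\frac12$ as $p_1 \downarrow$ crosses below $1/2$, which outweighs the smooth loss whenever $1-\alpha > 0$; finally check seller $2$'s undercutting curve relative to $\{1\}$ on $[0,1/2]$ is of the form $p_2\big(\alpha(1-p_2) \cdot 0 + (1-\alpha)(1-p_2)\big)$ type (greedy customers with w.t.p. above $p_2$, since $p_2 < p_1$) — wait, this is increasing then decreasing with interior max possibly below $1/2$, so I would instead shrink the undercutting interval by noting $\mathcal{S}_2 = \{1\}$ pins $p_1$ and $p^{\min}(\{1\}) = 1/2$, and choose $\alpha$ so that $(1-\alpha)(1-p_2)$-revenue is still increasing at $p_2 = 1/2$, i.e., its unconstrained max $1/2$ is not below the interval endpoint — here it equals the endpoint, which qualifies as a boundary local max. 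Assembling these estimates carefully, with the tie-breaking mass $1-\alpha$ chosen in a suitable range, yields the desired VOP-but-not-LNE instance.
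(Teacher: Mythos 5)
Your strategy is the same as the paper's (an equal-price VOP at $(1/2,1/2)$ with $w_c\sim\textup{Unif}[0,1]$, where a mass of price-first customers creates a discontinuous undercutting gain for the higher-ranked seller), but the specific instance you write down does not actually produce a VOP, so there is a genuine gap. The problem is your tie-breaking rule. If the greedy mass ``splits ties,'' then at $p_1=p_2=1/2$ seller $2$ receives only half of the greedy customers with $w_c\geq 1/2$, so $R_2(1/2,\,p_1{=}1/2)=\tfrac{1-\alpha}{8}$, whereas $\lim_{p\uparrow 1/2}R_2(p,\,p_1{=}1/2)=\tfrac{1-\alpha}{4}$. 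The undercutting revenue curve of seller $2$ therefore jumps \emph{down} at the right endpoint of $[0,1/2]$, so $p_2=1/2$ is not a local maximum there and the supremum is not attained: the ordering $(1,2)$ admits no VOP at all. Your justification (``the unconstrained max equals the interval endpoint, which qualifies as a boundary local max'') only checks monotonicity of the smooth branch $(1-\alpha)p(1-p)$ and ignores the value \emph{at} the endpoint. This is precisely the delicacy the proposition is about, and it is why the paper's construction gives the price-first customers a deterministic tie-break toward the seller ranked \emph{second} in the VOP ordering (its Type~3 customers rank price first but prefer that seller on brand): then the second seller's boundary supremum is attained, while the first-ranked seller still gains a discrete mass by $\epsilon$-undercutting, which is exactly the failure of local optimality you want.

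There is a second, smaller issue: for your formula $R_2(p_2,p_1{=}1/2)=(1-\alpha)p_2(1-p_2)$ to be correct, the $\alpha$-mass must be genuinely loyal to seller $1$, i.e., exclude seller $2$ from their consideration set. If they merely rank quality above price, then whenever $p_2\leq w_c<p_1$ their consideration set is $\{2\}$ and they buy from seller $2$, adding a term $\alpha\, p_2(1/2-p_2)$; the left derivative of seller $2$'s undercutting revenue at $1/2$ is then $-\alpha/2<0$, its optimum moves strictly into the interior, and the resulting VOP has distinct prices and is a LNE by the pseudo-competitive property, defeating the construction. Both defects are repairable (loyal $\alpha$-mass; greedy mass tie-breaking entirely to seller $2$), at which point your example becomes essentially a relabeling of the paper's, but as stated the proof does not go through.
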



\section{Algorithmic price competition} \label{sec:numerical-results}
For a platform that has access fine-grained customer choice data, it may be possible to estimate a CLC choice model and compute the resulting price equilibria using the insights derived in \Cref{sec:pseudo-equilibria}. However, since individual sellers on platforms do not have access to such data in general, the question of whether and how LNE arise in practice remains. In this section, we investigate whether natural price-update dynamics converge to a LNE under such cases. Having established existence of LNE under gradient dominance, we first theoretically study convergence for this subclass of choice models. We subsequently investigate robustness of our findings via extensive numerical experiments.  

\subsection{Convergence of distributed gradient ascent dynamics under gradient dominance}\label{ssec:formal-conv}

We analyze the distributed gradient ascent (GA) algorithm \citep{boyd2004convex}, formally presented in Appendix \ref{apx:alg-ogd}. In this algorithm, at each timestep sellers take a step in the direction of the partial gradient of their respective revenue functions (assuming all other prices are fixed).\footnote{When all prices are distinct, these partial gradients are well defined due to our assumptions on the revenue curves. However, in the unlikely event that sellers set the same price in some timestep, the partial gradients may not be well defined for these sellers. In this case, we assume that each of these sellers observes the ``left" and the ``right'' gradients with equal probability.}\footnote{In contrast to the empirical setting studied in the following section, where sellers estimate the partial gradient using noisy feedback, we assume that each seller has access to the partial gradient of the revenue curve for any price for technical simplicity. Despite this assumption, proving convergence to LNE presents significant technical challenges.}
 
\medskip

\noindent\textbf{Additional notation and assumptions.} For $i \in \cN$, recall that  $\mbox{$g_i(p_i,\bp_{-i})=\partial R_i(p, \bp_{-i})/\partial p\big|_{p=p_i}$}$ denotes the partial derivative of $R_i(p,\bp_{-i})$, assuming it exists. 
For $\mbox{$\mathcal{S} \subseteq \cN \setminus \{i\}$}$, we denote by $\mbox{$p_i^*(\bp_{-i}^\cS) \in \arg\max_{p\in[0, \nearestminprice)} R_i(p,\bp_{-i}^\cS)$}$ an optimal price of seller $i$ in the open undercutting region relative to the set of sellers $\cS$. When we say that $p_i^*(\bp_{-i}^\cS)$ exists, we mean that the supremum is attained at $p_i^*(\bp_{-i}^\cS)$. If the supremum is not attained, we use the convention that $p_i^*(\bp_{-i}^\cS) = \infty$. Finally, we say that a price vector $\bp$ is {\it aligned} with ordering $\pi$ if $p_{\pi(1)} > p_{\pi(2)} > \ldots > p_{\pi(N)}$. We will often abuse notation and let $\pi(1:l)$ denote the set of sellers in rank $1$ to $l$ for $l\in[N]$. We additionally let $\pi(1:0)$ be the empty set $\phi$, and define $p_{\pi(0)} = \pmax$. The following assumptions are required for our convergence result.
\begin{assumption}[Regularity of undercutting revenue curves]\label{asp:ogd-rev-n-firms}
The following holds for all $i\in \cN$, $\mbox{$\cS \subseteq \cN \setminus \{i\}$}$ and $\bp_{-i} \in \cV^{N-1}$.
\begin{enumerate}
 \item {Uniform boundedness of partial gradients}: There exists a finite constant $G>0$ such that  $\mbox{$|g_i(p, \bp_{-i})|\leq G$}$ for all $p\in \mathbb{R}$.
\item Smoothness: $g_i(p,\bp_{-i}^{\cS})$ is continuous for all $p \in [0, \nearestminprice)$. Moreover, for all $i \in \cN$, if $\bp$ and $\bq$ are aligned with the same ordering, there exists $L > 0$ such that  
$|g_i(\bp) - g_i(\bq)|\leq L \|\bp -\bq\|_1.$
\item {Strong unimodality}: There exists a finite constant $K > 0$ such that
\begin{align}\label{eq:loyalist-well-separated}
|g_i(p, \bp_{-i}^\cS)| \geq K |p - \min\{p_i^*(\bp_{-i}^\cS), \nearestminprice\}| \quad \forall \,  p \in [0, \nearestminprice). 
\end{align}
\item No vanishing gradient below a partial VOP: 
Consider any ordering $\pi$, rank $l \in [N]$, and partial valid-order price pair $(\pi, \bp)$ that satisfies: 
\begin{align*}
p_{\pi(l')} := \arg\max_{0\leq p \leq p_{\pi(l'-1)}}R_{\pi(l')}(p, \bp_{-\pi(l')}^{\pi(1:l'-1)}) \quad \textrm{for } l' \in \{1,\ldots,l\}.
 \end{align*}
If the supremum is not attained in the problem $\sup_{0\leq p < p_{\pi(l)}}R_{\pi(l+1)}(p, \bp_{-\pi(l+1)}^{\pi(1:l)})$, then 
\begin{align}\label{eq:not-flat}
\lim_{p\uparrow p_{\pi(l)}}  g_{\pi(l+1)}(p, \bp_{-\pi(l+1)}^{\pi(1:l)}) > 0.
\end{align}
\end{enumerate}
\end{assumption}


Conditions 1 to 3 in \Cref{asp:ogd-rev-n-firms} are common in much of the optimization literature on the convergence of gradient ascent and its variants (see, e.g., \citet{broadie2011general}). One can verify that these conditions hold as long as customers' willingness-to-pay distribution conditioned on the other primitives of CLC choice has a Lipschitz continuous probability density function and a monotone hazard rate.

Condition 4 states that, if a seller's supremum isn't attained when all higher-priced sellers are at their respective local optima, its gradient does not vanish on approaching the next higher-priced seller from below. We claim that such an occurrence would be pathological, thus making this condition mild. To see this, we provide an example wherein a seller's gradient does vanish at a partial VOP. Consider a duopoly in which customers only include one of two sellers in their consideration set, and purchase the item if their preferred seller's price is below their willingness-to-pay. In this case, if the willingness-to-pay distribution is MHR and identical for all customers, then the sellers' optimal price is unique and identical. As a result, the gradient of any seller's revenue curve as its price approaches its competitor's optimal price from below vanishes. More generally, one would expect Condition 4 to fail to hold in settings for which two sellers do not compete over the same market share, but their respective customer pools have otherwise identical choice primitives. 

With these assumptions in hand, we state our convergence result.
\begin{theorem}\label{thm:ogd-converges-to-LNE-n-firms}
Under Assumptions \ref{asp:grad-dom} and \ref{asp:ogd-rev-n-firms}, a LNE with distinct prices exists. If moreover the learning rate schedule under distributed GA (Algorithm~\ref{alg:ogd}) is such that $\lim_{T\to\infty}\sum_{t = 1}^T \eta_{t}^2 < \infty$ and $\lim_{T\to\infty}\sum_{t = 1}^T \eta_{t} = \infty$, sellers' prices converge almost surely to a LNE.
\end{theorem}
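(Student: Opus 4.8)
\emph{Proof plan.} Relabel the sellers so that $i$ gradient-dominates $j$ whenever $i<j$ (possible by \Cref{asp:grad-dom}). Existence of a LNE with distinct prices is immediate: \Cref{asp:ogd-rev-n-firms} implies \Cref{asp:unimodality} (by continuity of the gradients in Condition~2 together with the gradient lower bound of Condition~3, each undercutting revenue curve is unimodal), so \Cref{prop:gd-implies-lne-existence-uniqueness} applies and yields a LNE $\bp^*$ with $p_1^*>\cdots>p_N^*$. The substance is the convergence claim, which I would establish by viewing distributed GA as a stochastic approximation scheme and combining three ingredients: (i) the pseudo-competitive property (\Cref{prop:clc-choice-implies-psg}), which decouples the game into a \emph{triangular} system once the order of prices is fixed --- seller $\pi(k)$'s revenue then depends only on the prices ranked above it; (ii) strong unimodality (Condition~3), which turns each seller's update into a contraction toward a uniquely defined undercutting best response, with gradient bounded below by the distance to that best response; and (iii) gradient dominance together with Condition~4, which force the price order to stabilize and prevent any pair of prices from colliding in the limit.

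\textbf{Step 1: the price order stabilizes and no two prices collide.} Fix a pair $i<j$ and examine the dynamics near the diagonal $\{p_i=p_j\}$ with all other prices fixed. On the side $p_i>p_j$: since $p_j$ and every price below $p_i$ is strictly smaller than $p_i$, the pseudo-competitive property gives $g_i(\bp)=g_i(p_i,\bp^{\cS}_{-i})$ for $\cS=\cS_{\bp}(p_i)$, and \Cref{asp:grad-dom} then yields $g_i(\bp)>g_j(\bp)$ in a one-sided neighborhood of the diagonal; hence $p_i-p_j$ has strictly positive drift there, so (modulo perturbations that are summable because $\sum_t\eta_t^2<\infty$) the region $\{p_i>p_j\}$ is eventually forward invariant and no trajectory converges to the diagonal from this side. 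On the side $p_i<p_j$ there are two cases: if seller $i$'s undercutting best response to seller $j$ is attained at some $p_i^\dagger<p_j$, then by Condition~3 the iterate $p_i$ contracts to $p_i^\dagger$ and the pair settles in the reversed order; if it is not attained, Condition~4 gives $\liminf_{p\uparrow p_j} g_i(p,\cdot)>0$ while $g_j$ vanishes as seller $j$ approaches its own local optimum, so $p_j-p_i$ has negative drift, $p_i$ overtakes $p_j$, and we revert to the first case. Applying this pairwise reasoning to every pair, a.s.\ the ordering of prices changes only finitely often and ties are transient (a tie is broken by the coin flip of \Cref{alg:ogd} and the same drift plus Condition~3 separates the prices), so there is a random final order $\pi$ and a time $T_0$ after which $p_{\pi(1)}(t)>\cdots>p_{\pi(N)}(t)$ strictly.

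\textbf{Step 2: the cascade.} After $T_0$, the pseudo-competitive property makes seller $\pi(1)$'s revenue equal to $R_{\pi(1)}(p,\mathbf 0)$, a strongly unimodal curve with bounded, locally Lipschitz gradient (Conditions~1--3); the GA iterate of seller $\pi(1)$ is thus a stochastic approximation of one-dimensional gradient ascent on this curve, and by $\sum_t\eta_t=\infty$, $\sum_t\eta_t^2<\infty$ it converges a.s.\ to the unique maximizer $p^*_{\pi(1)}$ (or to $\bar v$ if the supremum is not attained, by Condition~4 with $l=0$). Inductively, once $p_{\pi(1)},\dots,p_{\pi(k-1)}$ have converged to $p^*_{\pi(1)},\dots,p^*_{\pi(k-1)}$, seller $\pi(k)$'s revenue depends only on these higher prices (pseudo-competitive property) and on $p_{\pi(k)}$ in the undercutting region $[0,p_{\pi(k-1)}]$, whose width is itself converging; since the undercutting best-response map is Lipschitz in the higher prices (Lipschitzness of $g_{\pi(k)}$ among aligned price vectors from Condition~2, plus the gradient lower bound of Condition~3), a stochastic-approximation argument with a slowly varying target gives $p_{\pi(k)}(t)\to p^*_{\pi(k)}:=p^*_{\pi(k)}\big(p^*_{\pi(1)},\dots,p^*_{\pi(k-1)},\mathbf 0\big)$. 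The limit $\bp^*=(p^*_{\pi(1)},\dots,p^*_{\pi(N)})$ is by construction a valid order-price pair for $\pi$ (\Cref{def:valid-order}); since Step~1 guarantees its prices are distinct, it is a LNE by the pseudo-competitive property, completing the proof.

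\textbf{Main obstacle.} The crux is Step~1. Gradient dominance is a one-sided, merely strict inequality that directly controls only the \emph{dominant-above} side of each pairwise diagonal; converting it into an a.s.\ statement that the global price order stabilizes requires (a) uniformizing the strict inequality over compacta (using continuity of the gradients, Condition~2) and handling the price boundaries $0$ and $\bar v$ separately, (b) invoking Condition~4 to rule out getting stuck while ``wanting to undercut infinitesimally,'' and (c) using $\sum_t\eta_t^2<\infty$ to dominate the random left/right tie-break perturbations so the drift arguments survive discretization. A secondary, more routine difficulty is making the moving-target stochastic approximation of Step~2 quantitative --- controlling how fast the higher sellers' prices settle relative to $\eta_t$ --- for which $\sum_t\eta_t=\infty$ (enough cumulative adjustment to track the target) and the Lipschitz best-response map are the key tools.
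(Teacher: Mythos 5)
Your plan follows essentially the same route as the paper's proof: existence via \Cref{prop:gd-implies-lne-existence-uniqueness}, order stabilization from gradient dominance (correctly ordered pairs stay correctly ordered, so each pair's relative order flips at most once and the global ordering settles) together with Condition~4 to escape orderings whose undercutting suprema are not attained, and then a hierarchical gradient-ascent cascade down the final ordering using the pseudo-competitive property, strong unimodality, and the step-size conditions --- exactly the paper's decomposition into \Cref{lem:correctly-ordered-stay-correctly-ordered}, \Cref{lem:never-cross-back-over}, and \Cref{lem:main-convergence-lem}. The one small inaccuracy is the parenthetical suggesting seller $\pi(1)$ might converge to $\bar v$ when its supremum is not attained: \Cref{prop:never-at-boundary} shows strong unimodality forces the gradient to point inward at both boundaries, so the top seller's maximizer always lies in $(0,\bar v)$.
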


We provide a proof sketch of \cref{thm:ogd-converges-to-LNE-n-firms}, deferring its formal proof to Appendix \ref{ssec:n-firm-convergence}.
\begin{proof}[Proof sketch.]
Existence of a LNE $\bp^*$ in which sellers price distinctly follows from \Cref{prop:gd-implies-lne-existence-uniqueness}, since strong unimodality (Condition 3 in \Cref{asp:ogd-rev-n-firms}) implies \Cref{asp:unimodality}. Let $\pi$ denote the ordering with which $\bp^*$ is aligned.

Suppose first that sellers' prices are aligned with $\pi$ for a large enough timestep. Since sellers' revenue curves are invariant under strictly lower price changes, by \Cref{prop:clc-choice-implies-psg}, \Cref{asp:ogd-rev-n-firms} implies that sellers will hierarchically converge to their respective undercutting optima.

Suppose now that the prices are aligned with an ordering $\pi' \neq \pi$ for which a seller's undercutting optimum is not attained. By Condition 5 of Assumption~\ref{asp:ogd-rev-n-firms}, this seller's gradient doesn't vanish as it approaches the boundary of its undercutting region. As a result, it will overtake the next higher price, thus ensuring that sellers exit out of this ordering.

Gradient dominance finally ensures that, when the stepsize is small enough, once the price dynamics exit an ordering, they cannot subsequently re-enter that ordering. Since the number of orderings is finite, they must eventually enter an ordering in which all sellers attain their undercutting optima. Hence, distributed GA eventually converges to a LNE.\hfill\Halmos
\end{proof}

Theorem~\ref{thm:ogd-converges-to-LNE-n-firms} establishes that, not only does gradient dominance guarantee existence of a LNE, but it also guarantees convergence to LNE under distributed gradient-based pricing (when coupled with standard regularity conditions on the individual revenue curves). In conjunction with the discussion of the practical validity of this condition in \cref{sec:pseudo-equilibria}, this suggests that for CLC models in which a significant proportion of customers are quality-sensitive (e.g., BICA), one can expect frequent convergence to an equilibrium. We study this numerically in the following section.

\subsection{Empirical analysis of gradient-driven price-update dynamics}\label{sec:empirics}
Motivated by the fact that price and rating are two of the main attributes used for purchase decisions on e-commerce platforms, we consider price competition amongst $N\in \{2,\cdots, 5\}$ sellers who differ on these two attributes. Given $N$, we assume that the quality of seller $i \in [N]$ is $q_i = i/N$. 

\medskip

{\bf Choice model.} We assume that customers are either loyal or non-loyal. Loyal customers purchase from the seller to whom they are loyal as long as its price is at most their willingness-to-pay. Non-loyal customers are one of two types: quality-sensitive (QS) or price-sensitive (PS). QS and PS customers both form their consideration set based on a maximum willingness-to-pay and a minimum rating floor. 
However, they differ in the lexicographic choice phase: QS customers choose the highest-rated item in their consideration set, breaking ties based on price; PS customers, on the other hand, choose the cheapest item in their consideration set, breaking ties based on rating. QS and PS customers respectively generalize the BICA and CICT customers presented in \cref{sec:preliminaries}.

We consider three settings for the distribution over these model primitives:
\begin{itemize}
\item {\bf Setting A:} Customers are loyal, PS, or QS with an equal probability of $1/3$. A loyal customer is loyal to seller $i$ with probability $1/N$. The willingness-to-pay and rating floor for each customer (where applicable) is drawn independently from a $\textrm{Beta}(\alpha,\beta)$ distribution. 

\item {\bf Setting B:} A customer is loyal with probability $1/3$ and non-loyal with probability $2/3$. A loyal customer is loyal to seller $i$ with probability $2i/(N(N+1))$, capturing the fact that higher-rated sellers have more loyal customers. The willingness-to-pay of a loyal customer is drawn from a $\textrm{Beta}(\alpha,\beta)$ distribution. Conditional on being non-loyal, the willingness-to-pay, rating floor, and probability of the customer being QS are all equal, and drawn from a $\textrm{Beta}(\alpha,\beta)$ distribution. This captures the reality that willingness-to-pay and quality sensitivity for non-loyal customers may be correlated, {i.e., a higher willingness-to-pay may indicate a higher quality floor and a higher propensity to be quality-sensitive.}
\item {\bf Setting C:} Same as Setting B, except that there are no loyal customers: all customers are non-loyal with probability $1$, with the same parameters. 
\end{itemize}

\medskip

{\bf Price dynamics.} Sellers deploy distributed gradient-driven pricing schemes over a fixed time horizon $T = 10^4\cdot N$. At the beginning of the horizon, we generate an initial set of prices independently and uniformly in $[0,1]$. To model the reality that sellers make price updates asynchronously, we choose a seller uniformly at random in each period to price experiment and subsequently make a price update. We assume that sellers implement the classical Kiefer-Wolfowitz algorithm for function optimization with zero-order feedback~\citep{kiefer1952stochastic}. Specifically, the chosen seller in each timestep observes a batch of $10^3$ customers, and randomly sets one of two prices for each customer in the batch: one above and one below its current incumbent price. The seller then estimates the gradient given the average revenue at each price and updates its incumbent price by taking a gradient-scaled step. Letting $c_\tau$ and $a_\tau$ respectively denote the finite difference width for gradient estimation and stepsize after $\tau$ updates, we let $c_\tau = 1/\log \tau$ and $a_\tau = 1/\tau$.

Observe that if the price trajectories converge to distinct prices under such a gradient-based pricing scheme, then it must be a LNE.  We say that {a price trajectory has converged if (i) for each seller, the difference between the maximum and minimum prices set in the last $10^3$ updates is within $1\%$ of the empirical average of the past $10^3$ prices, and (ii) the ordering of all prices remains unchanged in the last $10^3$ periods. If only (ii) is satisfied after $T$ periods, we say that the trajectory is {\it order-convergent}. When the price trajectory is only order-convergent, we expect prices to eventually converge due to the pseudo-competitive property, by the same arguments as those provided in the proof sketch of \Cref{thm:ogd-converges-to-LNE-n-firms}.

\begin{table}
  \centering
  \resizebox{\textwidth}{!}{%
  \begin{tabular}{cccc}
    \begin{subtable}{.25\linewidth}
      \centering
      \caption{Setting A}
      \begin{tabular}{cccc}
        \toprule
        $(\alpha, \beta)$ & $N$ & C & OC \\
        \midrule
        (1, 1) & 2 & 100 & 100 \\
        (1, 1) & 3 & 97 & 100 \\
        (1, 1) & 4 & 97 & 100 \\
        (1, 1) & 5 & 94 & 99 \\
        \addlinespace
        (0.5, 0.5) & 2 & 100 & 100 \\
        (0.5, 0.5) & 3 & 98 & 100 \\
        (0.5, 0.5) & 4 & 93 & 100 \\
        (0.5, 0.5) & 5 & 93 & 100 \\
        \addlinespace
        (2, 2) & 2 & 100 & 100 \\
        (2, 2) & 3 & 98 & 100 \\
        (2, 2) & 4 & 91 & 100 \\
        (2, 2) & 5 & 91 & 100 \\
        \addlinespace
        (2, 4) & 2 & 99 & 100 \\
        (2, 4) & 3 & 95 & 99 \\
        (2, 4) & 4 & 91 & 100 \\
        (2, 4) & 5 & 80 & 100 \\
        \addlinespace
        (4, 2) & 2 & 98 & 100 \\
        (4, 2) & 3 & 90 & 100 \\
        (4, 2) & 4 & 84 & 100 \\
        (4, 2) & 5 & 72 & 100 \\
        \bottomrule
      \end{tabular}
    \end{subtable} &
    \begin{subtable}{.25\linewidth}
      \centering
      \caption{Setting B}
      \begin{tabular}{cccc}
        \toprule
        $(\alpha, \beta)$ & $N$ & C & OC \\
        \midrule
        (1, 1) & 2 & 99 & 100 \\
        (1, 1) & 3 & 96 & 100 \\
        (1, 1) & 4 & 98 & 100 \\
        (1, 1) & 5 & 97 & 100 \\
        \addlinespace
        (0.5, 0.5) & 2 & 100 & 100 \\
        (0.5, 0.5) & 3 & 99 & 99 \\
        (0.5, 0.5) & 4 & 98 & 100 \\
        (0.5, 0.5) & 5 & 95 & 99 \\
        \addlinespace
        (2, 2) & 2 & 100 & 100 \\
        (2, 2) & 3 & 99 & 100 \\
        (2, 2) & 4 & 98 & 100 \\
        (2, 2) & 5 & 95 & 99 \\
        \addlinespace
        (2, 4) & 2 & 100 & 100 \\
        (2, 4) & 3 & 100 & 100 \\
        (2, 4) & 4 & 99 & 99 \\
        (2, 4) & 5 & 92 & 99 \\
        \addlinespace
        (4, 2) & 2 & 100 & 100 \\
        (4, 2) & 3 & 97 & 100 \\
        (4, 2) & 4 & 95 & 99 \\
        (4, 2) & 5 & 89 & 99 \\
        \bottomrule
      \end{tabular}
    \end{subtable} &
    \begin{subtable}{.25\linewidth}
      \centering
      \caption{Setting C}
      \begin{tabular}{cccc}
        \toprule
        $(\alpha, \beta)$ & $N$ & C & OC \\
        \midrule
        (1, 1) & 2 & 99 & 100 \\
        (1, 1) & 3 & 98 & 100 \\
        (1, 1) & 4 & 98 & 100 \\
        (1, 1) & 5 & 96 & 99 \\
        \addlinespace
        (0.5, 0.5) & 2 & 98 & 100 \\
        (0.5, 0.5) & 3 & 100 & 100 \\
        (0.5, 0.5) & 4 & 96 & 100 \\
        (0.5, 0.5) & 5 & 96 & 99 \\
        \addlinespace
        (2, 2) & 2 & 100 & 100 \\
        (2, 2) & 3 & 97 & 100 \\
        (2, 2) & 4 & 93 & 100 \\
        (2, 2) & 5 & 97 & 100 \\
        \addlinespace
        (2, 4) & 2 & 100 & 100 \\
        (2, 4) & 3 & 97 & 100 \\
        (2, 4) & 4 & 96 & 99 \\
        (2, 4) & 5 & 93 & 99 \\
        \addlinespace
        (4, 2) & 2 & 99 & 100 \\
        (4, 2) & 3 & 96 & 100 \\
        (4, 2) & 4 & 91 & 100 \\
        (4, 2) & 5 & 83 & 100 \\
        \bottomrule
      \end{tabular}
    \end{subtable} &
    \begin{subtable}{.25\linewidth}
      \centering
      \caption{Setting D}
      \begin{tabular}{ccccc}
        \toprule
        $(\alpha, \beta)$ & $N$ & C & OC &E \\
        \midrule
        (1, 1) & 2 & 99 & 100 & 97 \\
        (1, 1) & 3 & 97 & 100 & 96\\
        (1, 1) & 4 & 99 & 100 & 97\\
        (1, 1) & 5 & 99 & 100 & 95\\
        \addlinespace
        (0.5, 0.5) & 2 & 100 & 100 & 91\\
        (0.5, 0.5) & 3 & 98 & 100 &96\\
        (0.5, 0.5) & 4 & 98 & 100 &95\\
        (0.5, 0.5) & 5 & 96 & 100 &93\\
        \addlinespace
        (2, 2) & 2 & 100 & 100 & 93\\
        (2, 2) & 3 & 97 & 99 &93\\
        (2, 2) & 4 & 99 & 100 &90\\
        (2, 2) & 5 & 95 & 100 &94\\
        \addlinespace
        (2, 4) & 2 & 100 & 100 &95\\
        (2, 4) & 3 & 99 & 100 &95\\
        (2, 4) & 4 & 95 & 99 &93\\
        (2, 4) & 5 & 91 & 99 &87\\
        \addlinespace
        (4, 2) & 2 & 99 & 100 &96\\
        (4, 2) & 3 & 96 & 99 &93\\
        (4, 2) & 4 & 88 & 99 &85\\
        (4, 2) & 5 & 85 & 100 &78\\
        \bottomrule
      \end{tabular}
    \end{subtable}%
  \end{tabular}%
  } 
  \vspace{0.1in}
  \caption{Number of convergent (C) and order-convergent (OC) trajectories out of 100 random initializations in each setting}\label{table:convergence}
\end{table}

\medskip

{\bf Results.} 
Table~\ref{table:convergence} shows the number of replications out of $100$ random draws of initial prices that result in price and order convergence for Settings A-C. (Setting D shows a final set of robustness results, discussed at the end of the section.)  We observe that the price dynamics converge in the majority of settings, with at least 72 convergent trajectories out of 100 in the worst case over all parameters in Setting A, 89 in Setting B, and 83 in Setting C. If we consider order-convergent trajectories, the numbers further improve, with at least 99 order-convergent trajectories in the worst case across all settings. Overall, our results suggest that LNE are frequently learnable via distributed gradient-driven pricing, under CLC models of practical interest.

We include sample non-order-convergent price trajectories in Figure~\ref{fig:non-conv}. Two of the trajectories show sellers entering undercutting price wars as their prices approach each other (sellers 2-3 in Figure \ref{fig:non-conva}, and sellers 1-2 in Figure \ref{fig:non-convc}). These undercutting price wars are due to the existence of price-sensitive customers who choose the cheapest item in their consideration set. {Indeed, if all customers are price-sensitive, the only equilibrium is once in which all sellers price at zero.} Again, these results underscore the idea that convergence arises in part due to the existence of quality-sensitive and loyal customers, who counteract the incentive to enter a price war. 


\begin{figure}[ht]
    \centering
    \begin{subfigure}{0.3\textwidth}
        \includegraphics[width=\linewidth]{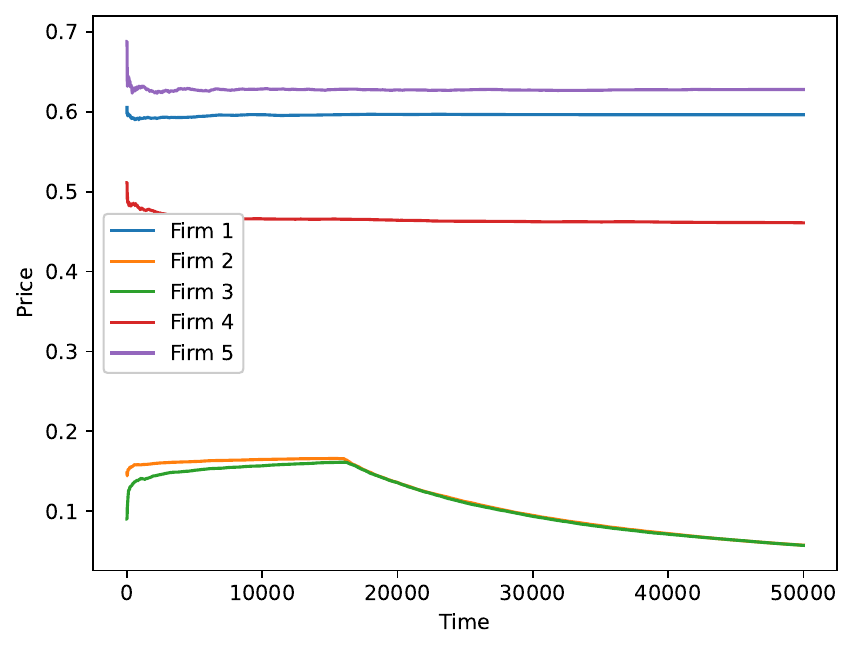}
        \caption{Setting A: $\alpha = 0.5,$ $\beta =0.5$}\label{fig:non-conva}
    \end{subfigure}
    \hfill
    \begin{subfigure}{0.3\textwidth}
        \includegraphics[width=\linewidth]{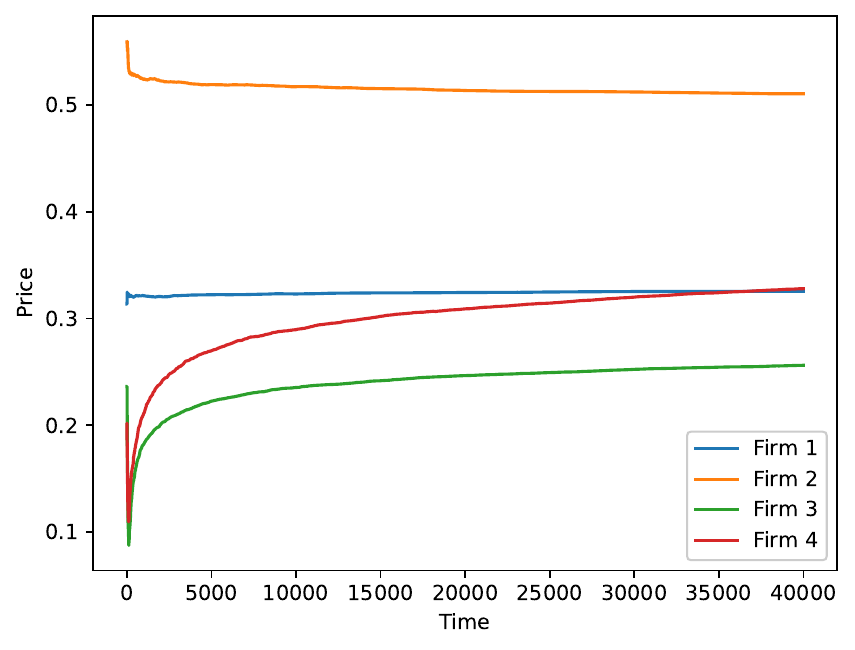}
        \caption{Setting B: $\alpha = 1,$ $\beta =1$}\label{fig:non-convb}
    \end{subfigure}
        \hfill
    \begin{subfigure}{0.3\textwidth}
        \includegraphics[width=\linewidth]{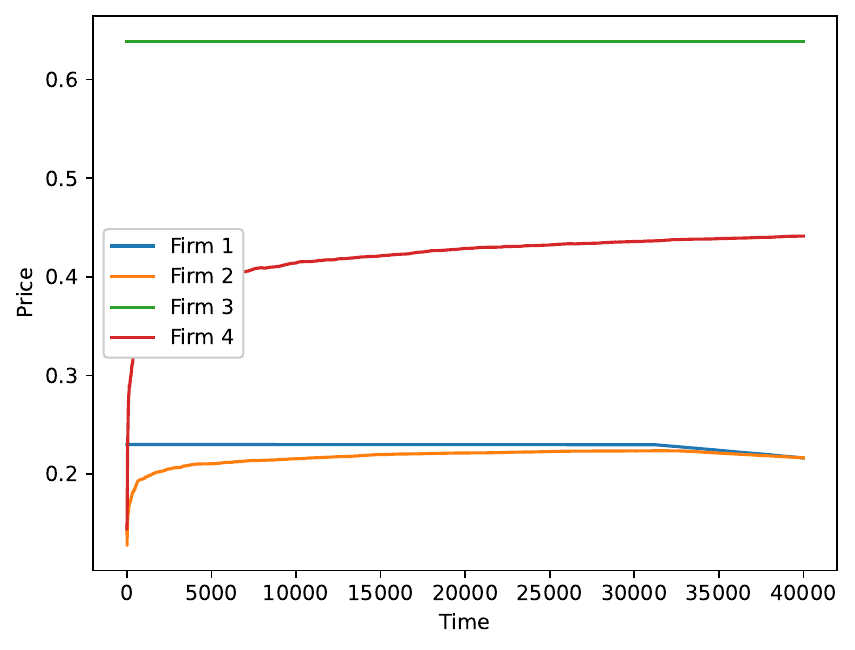}
        \caption{Setting C: $\alpha = 1,$ $\beta =1$}\label{fig:non-convc}
    \end{subfigure}
    \caption{Sample non-convergent trajectories in Settings A-C}\label{fig:non-conv}
    \vspace{-0.2in}
\end{figure}

\medskip

\textbf{Robustness to non-CLC choice.} We conclude our numerical experiments by investigating the robustness of these observations to non-CLC customers. Specifically, we modify Setting C such that 10\% of customers behave according to the Multinomial Logit (MNL) choice model, with the remainder behaving as in Setting C. For an MNL customer, the utility of choosing a seller $i$ with rating $q$ and price $p$ is assumed to be $q-p+\epsilon$, where $\epsilon$ is an independent noise term with a Gumbel distribution with location parameter $0$ and scale parameter $1$. The customer chooses the seller with the highest utility. We refer to this setting as Setting D. We choose the same price initializations for each of the 100 trajectories for both Settings C and D, and study the impact of MNL customers both on convergence and on the limit points themselves. 

Convergence results for Setting D are shown in Table~\ref{table:convergence}. We obtain consistent convergence to a LNE (in at least 85 random draws, in the worst case), as well as order-convergence (in at least 99 draws, in the worst case). The column labeled ``E" for Setting D shows the number of convergent trajectories such that (i) the corresponding trajectories in Setting C converge as well (we call these {\it co-convergent} trajectories), and (ii) the convergence point is within a tolerance of $10 \%$ of the limit point under Setting C; this tolerance aligns with the fact that $10\%$ of the population is of the MNL type. We observe that trajectories under Settings C and D are almost always co-convergent, and that these convergence points are frequently close, with both conditions holding in at least 78 random draws, in the worst case. Overall, these observations suggest that our equilibrium insights are robust to the limited inclusion of well-studied utility-maximizing behaviors. 


\section{Conclusion}\label{sec:conclusion}

Motivated by the sorting and filtering capabilities of modern e-commerce platforms, we studied price competition under a hybrid model of customer behavior that tackles a two-fold objective: flexibly capturing sequential search and selection on these platforms, and being conducive to the tractable analysis of their induced equilibria. We identified two crucial properties, the pseudo-competitive property and gradient dominance, which render equilibria efficiently computable from the platform's perspective, as well as efficiently learnable under distributed gradient-based dynamics. Overall, our work demonstrates that CLC is an attractive modeling framework for platforms to analyze and predict market outcomes. 

While our work presents strong evidence for frequent convergence to local Nash equilibria of gradient-driven pricing, an interesting future direction would be to study equilibrium computation and learning past the gradient dominance condition. It would moreover be interesting to explore algorithms that provably yield convergence to other limit points of interest, such as global Nash equilibria (as opposed to local), and seller / customer surplus-maximizing prices.

    \bibliographystyle{apalike}
    {\bibliography{main}}
    
    \newpage
    \onecolumn
    \appendix


\section{\Cref{sec:pseudo-equilibria} Omitted Proofs}\label{apx:structure-of-equilibria}

\subsection{Proof of \Cref{prop:hardness-of-insights}}\label{apx:hardness}

\begin{proof}
Consider first the special case of CLC choice in which all customers are greedy, i.e., they purchase from the cheapest seller, as long as their willingness-to-pay exceeds the price of the item. In this case, it is well-known that the unique equilibrium is one in which all sellers price at zero.

Consider now a duopoly wherein $1-\epsilon$ fraction of the population is greedy, and $\epsilon$ fraction is loyal, and only ever purchases from seller 1. Suppose moreover that the  willingness-to-pay $w_c\sim\textup{Unif}[0,1]$. In this setting, no equilibrium exists. This follows from the fact that, for any strictly positive price $p$ set by seller 1, the best response of seller 2 is to set $p-\epsilon_0$, for arbitrarily small $\epsilon_0 > 0$ (i.e., the best response is not attained). However, it is never optimal for seller 1 to set a price of 0, given that $\epsilon$ fraction of customers will purchase its item at a strictly positive price.

Consider finally a duopoly in which customers are satisficers, with customers preferring seller 1 to seller 2 (and vice versa) with probability 1/2. Their willingness-to-pay for the item is drawn from the uniform distribution over $[0,1]$; they moreover purchase from the first seller for whom the price is (weakly) below their willingness-to-pay, inspecting sellers according to their preference. For this setting one can verify that there exist two equilibria, both defined by one seller pricing at 1/2, and the other pricing at 3/8. \hfill\Halmos
\end{proof}

\subsection{Proof of \cref{prop:clc-choice-implies-psg}}\label{apx:psg-proof}

\begin{proof} {{We argue this at the customer choice level.} Consider a seller $i$ pricing at $p$ and suppose that a seller $j$ with $p_j < p$ changes its original price $p_j$ to some $p_j' < p$.  We show that this change does not impact whether or not seller $j$ is chosen by a customer; hence, it does not impact the revenue of seller $i$.

Note that the only {customers who could possibly impact seller $i$'s revenue under this change are those for whom both $i$ and $j$ were in the consideration set when $j$ priced at $p_j$}.
To see this, consider first a customer who did not include $i$ in her consideration set under $p_j$. In this case, $i$ will continue to be excluded under $p_j'$, and seller $i$ will continue to earn zero revenue from such a customer. Consider next a customer who included $i$ but not $j$ in her consideration set under $p_j$. Since $i$ was in the consideration set, the customer's willingness-to-pay exceeds $p$, and $p_j$ as a result. Hence, $j$ must have been excluded due to a non-price attribute, and $j$ will continue to be excluded due to a non-price attribute under $p_j'$. The choice outcome from the perspective of $i$ does not change in this case as well.

Consider finally customers who included both $i$ and $j$ in their consideration set under $p_j$. As argued above, since $i$ was in the consideration set under $p_j$, $w_c \geq p > p_j'$, and thus both $i$ and $j$ remain in the consideration set under $p_j'$. Now, under $j$'s old price, if the customer chose from the consideration set based on a non-price attribute ranked higher than price, she will continue to make the same choice under the new price. If she chose based on either the price attribute or some non-price attribute ranked lower than the price, then $i$ is not chosen before or after the price change, since its price is not the lowest in the consideration set under $p_j$ or $p_j'$. In summary, the outcome of whether or not $i$ is chosen remains unchanged, which proves the result.}\hfill\Halmos
\end{proof}




\subsection{Proof of \cref{thm:general-LNE}}\label{apx:proof-of-lne-result}

We first prove that the following property (which we refer to as {\it cross-monotonicity}) holds under CLC choice.
\begin{proposition}\label{prop:cross-monotonicity}
Under CLC choice, for all $i \in \cN$, $p_i \in \mathbb{R}$, $R_i(p_i,\bp_{-i})$ is non-decreasing in $\bp_{-i}$, i.e., if $\bp'_{-i}\mathbf{\geq} \bp_{-i}$ (component-wise), then $R_i(p_i,\mathbf{p}'_{-i})\geq R_i(p_i,\mathbf{p}_{-i})$.
\end{proposition}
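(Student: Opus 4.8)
The plan is to argue at the level of individual customer choices, as in the proof of the pseudo-competitive property. Fix seller $i$, fix $p_i$, and consider two price vectors $\bp_{-i}\leq\bp'_{-i}$ for the competitors. I want to show that seller $i$'s demand (and hence revenue, since $p_i$ is held fixed) can only increase when the competitors' prices rise from $\bp_{-i}$ to $\bp'_{-i}$. By an interpolation argument it suffices to treat the case where a single competitor $j\neq i$ raises its price from $p_j$ to $p'_j\geq p_j$, all other prices held fixed; the general statement then follows by composing finitely many such single-coordinate increases.

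So suppose only $j$'s price increases from $p_j$ to $p'_j$. I claim that for every customer $c$, if $c$ purchased from $i$ before the change, then $c$ still purchases from $i$ after the change. Partition customers according to their behavior under $p_j$. A customer who did not have $i$ in her consideration set is irrelevant (raising $p_j$ cannot bring $i$ into the set, since $i$'s attributes are unchanged). So consider a customer $c$ who had $i$ in her consideration set and chose $i$. First, since $c$'s willingness-to-pay $w_c\geq p_i$ and $p_i,p_j$ are both unchanged or only $j$ goes up, the consideration set after the change is either unchanged or has $j$ removed (if $p'_j>w_c$); in particular $i$ remains in the consideration set. Second, I need to check that $c$ still selects $i$ from this (weakly smaller) set. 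Here I use the structure of CLC/lexicographic choice: $c$ selected $i$ because $i$ was the lexicographically top item of her consideration set (recall the WLOG reduction making the top set a singleton via the identity tie-breaker attribute). Removing $j$ from the set cannot displace the lexicographic top $i$; and if $j$ stays in the set, its non-price attributes are unchanged and its price only increased, so $j$ is (weakly) less attractive to $c$ than before on the price attribute and unchanged on all others — hence $j$ cannot overtake $i$ in the lexicographic ordering. Either way $i$ remains the top choice, so $c$ still buys from $i$.

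The main obstacle — and the only place any care is needed — is the case where, before the change, $c$ chose $i$ but $j$ was also in the consideration set and ranked close behind: one must verify that increasing $p_j$ truly cannot help $j$ relative to $i$ in the lexicographic order. This follows because all customers strictly prefer lower prices on attribute $0$, and attribute $0$ affects the comparison between $i$ and $j$ monotonically in the right direction: moving up the lexicographic ladder, the first attribute on which $c$ strictly prefers one of $\{i,j\}$ is unchanged if it is a non-price attribute (those values didn't move), and if that decisive attribute was price, then $c$ strictly preferred $i$ there (meaning $p_i<p_j$), which is only reinforced when $p_j$ rises; and no attribute ranked above the old decisive one can newly become decisive, since all of those values are unchanged for both $i$ and $j$. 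Therefore $c$'s comparison of $i$ against $j$ is preserved, and since $c$'s comparison of $i$ against every other seller is untouched, $i$ remains the lexicographic top. This establishes that $D_i(p_i,\bp'_{-i})\geq D_i(p_i,\bp_{-i})$ for the single-coordinate increase; chaining over coordinates gives the result for arbitrary $\bp'_{-i}\geq\bp_{-i}$, and multiplying by the fixed $p_i$ gives $R_i(p_i,\bp'_{-i})\geq R_i(p_i,\bp_{-i})$.
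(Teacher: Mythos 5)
Your argument is correct and follows essentially the same route as the paper's: a customer-level case analysis showing that any customer who chose seller $i$ continues to choose $i$ after a single competitor's price increase (with the chaining over coordinates left implicit in the paper). One small imprecision: your claim that ``no attribute ranked above the old decisive one can newly become decisive'' fails when the old decisive attribute is ranked below price and $p_i = p_j$ (price then becomes newly decisive after the increase), but that case still resolves in $i$'s favor since $p_i < p_j'$, so the conclusion is unaffected.
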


\begin{proof}
Fix seller $i \in \cN$, and suppose there exists a seller $j\in \cN\setminus\{i\}$ that strictly increased its price to $p_j' > p_j$. Note that this change only affects a customer $c$ such that $\{i,j\} \in \mathcal{N}_c$ for some $j \in \cS$. Consider any customer $c$ for which this holds.

If $c$ had not chosen $i$ under $p_j$, then $i$'s revenue from $c$ is 0, and can only weakly increase after a price change of $j$. Thus, it suffices to argue that if $c$ chooses $i$, then they continue to choose $i$ after an increase in $p_j$, implying a weak increase in revenue for $i$. 

If $c$ had previously chosen $i$ due to a non-price attribute ranked higher than price, then the price change of seller $j$ does not change her previous decision. Suppose now that $c$ chose $i$ based on either the price attribute, or an attribute ranked lower than the price. Then, it must be that $p_i \leq p_j < p_{j'}$. In this case, even after the price increase in $p_j$, $c$ would continue to choose $i$. This is because the customer cannot choose $j$ after this change since $p_i<p_j'$, and it cannot choose some other seller since if that were the case, then they would have chosen that seller under $p_j$ as well, contradicting the fact that $i$ was chosen under $p_j$. \hfill\Halmos
\end{proof}

With this result in hand, we prove \Cref{thm:general-LNE}.
\begin{proof}[Proof of \cref{thm:general-LNE}.]
Suppose $\bp$ is a LNE, and let $\pi$ denote an ordering induced by $\bp$ (breaking ties arbitrarily). 

Suppose first that all prices in $\bp$ are distinct. Then, these prices are necessarily valid with respect to $\pi$ (i.e., $p_{\pi(i)}$ is a local optimum in the undercutting region of $\pi(i)$, for all $i$). This follows from the pseudo-competitive property: for any $i$, $R(p, \bp_{-\pi(i)}^{\mathcal{S}_i}) = R(p, \bp_{-\pi(i)})$. Hence, the local optimality of a seller in the ordering is preserved even if the prices of lower-ranked sellers are reduced to $0$, implying the validity of these prices with respect to $\pi$. 

Hence, it remains to consider the case where at least two prices in $\bp$ are equal. Consider the first set of sellers with equal prices (i.e., those ranked highest in the ordering), and let $l$ denote the size of this set. We denote these sellers by $\{j_0,\ldots,j_l\}$, with $j_0,\ldots,j_l$ following the same order as $\pi$. By the same argument as the one used when all prices in $\bp$ are distinct, the price of any seller ranked higher than $j_0$ is locally optimal with respect to $\bp$. We argue that the local optimality of $p_{j_0} = \ldots = p_{j_l}$ is not perturbed.

\medskip

\noindent\textbf{Case 1: $p_{j_0}= p_{j_1}=\cdots= p_{j_l}=0.$} 

For any $j \in \{j_0,\ldots,j_l\}$, $R_{j}(p_j,\bp_{-j}) = R_j(p_j,\bp_{-j}^{S_j}) = 0$. Thus, if 0 is a local optimum for $j$ over $[0,\pmax]$, it is necessarily a local optimum over $j$'s relevant undercutting region.

\medskip

\noindent\textbf{Case 2: $p_{j_0} = p_{j_1}=\cdots=p_{j_l} > 0.$}

We will argue that $p_{j_0}$ is locally optimal over its relevant undercutting region if any $\mbox{$\mathcal{S} \subseteq \{j_1,\cdots, j_l\}$}$ reduce their price to $0$. By symmetry, then, the same will hold true for any $\mbox{$j \in \{j_0,\cdots, j_l\}$}$. Repeating this argument for the next set of equally priced sellers below $\{j_0,\ldots,j_l\}$, until no equally priced sellers remain (or all sellers are priced at zero), we will obtain the theorem.

\cref{lem:rev-doesn't-change} plays a key role in proving this claim. It establishes that sellers $j_1,\ldots,j_l$ reducing their prices does not impact the expected revenue of $j_0$.
\begin{lemma}\label{lem:rev-doesn't-change}
$R_{j_{0}}(p,\bp_{-j_0}) = R_{j_{0}}(p,\bp'_{-j_0})$ for all $\bp'$ such that $p'_{j_i} \leq p_{j_i}$, $i \in \{1,\ldots, l\}$.
\end{lemma}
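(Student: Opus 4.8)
The plan is to prove the two inequalities $R_{j_0}(p,\bp'_{-j_0}) \le R_{j_0}(p,\bp_{-j_0})$ and $R_{j_0}(p,\bp'_{-j_0}) \ge R_{j_0}(p,\bp_{-j_0})$ separately. Write $p^\ast := p_{j_0} = \cdots = p_{j_l} > 0$. It suffices to treat the evaluation price $p = p^\ast$, which is the value used in the downstream local-optimality argument: for $p > p^\ast$ every seller in $\{j_1,\ldots,j_l\}$ lies strictly below $p$ in both $(p,\bp_{-j_0})$ and $(p,\bp'_{-j_0})$ (since $p'_{j_i}\le p^\ast < p$) while all other coordinates agree, so $\mathcal{S}_{(p,\bp_{-j_0})}(p) = \mathcal{S}_{(p,\bp'_{-j_0})}(p)$ with equal prices and \Cref{prop:clc-choice-implies-psg} gives the identity at once. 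So fix $p = p^\ast$. One direction is immediate from cross-monotonicity (\Cref{prop:cross-monotonicity}): $\bp'_{-j_0}$ is obtained from $\bp_{-j_0}$ by only weakly decreasing the coordinates of $j_1,\ldots,j_l$ and leaving the rest fixed, so $R_{j_0}(p^\ast,\bp'_{-j_0}) \le R_{j_0}(p^\ast,\bp_{-j_0})$; equivalently, since $p^\ast > 0$, the demand $D_{j_0}$ can only weakly decrease. It therefore remains to show $j_0$ loses no positive mass of customers when $j_1,\ldots,j_l$ lower their prices.

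For the reverse inequality I would argue at the level of an individual customer $c$, as in the proof of \Cref{prop:clc-choice-implies-psg}. If $c$ buys from $j_0$ under $\bp$ then $j_0\in\mathcal{N}_c$, hence $w_c\ge p_{j_0} = p^\ast$, which forces $\mathcal{N}_c$ to be unchanged under $\bp'$ (each $j_i$ already meets its price constraint at $p^\ast\le w_c$, and non-price criteria are untouched). Only the prices of $j_1,\ldots,j_l$ change, and only downwards, so if $c$ abandons $j_0$ under $\bp'$ it must be because the lexicographic comparison of $j_0$ against some $j_i$ has flipped in favor of $j_i$. Since $j_0$'s attributes, including its price $p^\ast$, are unchanged, such a flip can occur only when (i) $c$ is indifferent between $j_0$ and $j_i$ on every non-price attribute she ranks above price (otherwise the strict non-price comparison that made $c$ prefer $j_0$ persists), and (ii) $p'_{j_i} < p^\ast$, so that $j_i$ now strictly beats $j_0$ on the price attribute. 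Denoting by $M_i$ the mass of customers $c$ with $\{j_0,j_i\}\subseteq\mathcal{N}_c$ that buy from $j_0$ under $\bp$ and are indifferent between $j_0$ and $j_i$ on all attributes ranked above price, a union bound over $i\in\{1,\ldots,l\}$ shows $j_0$ loses at most $\sum_i M_i$ mass; so it suffices to prove $M_i = 0$ for every $i$.

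The step $M_i = 0$ is where the hypothesis that $\bp$ is an LNE (and not an arbitrary profile) enters, and I expect it to be the main obstacle. Suppose $M_i > 0$ and consider seller $j_i$ deviating from $p_{j_i} = p^\ast$ to $p^\ast - \varepsilon$. Lowering one's own price never removes oneself from any consideration set and never costs an already-acquired customer, so $D_{j_i}(p^\ast-\varepsilon,\bp_{-j_i}) \ge D_{j_i}(p^\ast,\bp_{-j_i})$; moreover each of the $M_i$ marginal customers, being indifferent between $j_0$ and $j_i$ on all higher-ranked attributes, now strictly prefers $j_i$ to $j_0$ on price while $j_0$ still lexicographically beats every other seller for that customer, so each such customer switches to $j_i$ and none was already counted in $D_{j_i}(p^\ast,\bp_{-j_i})$. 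Hence $D_{j_i}(p^\ast-\varepsilon,\bp_{-j_i}) \ge D_{j_i}(p^\ast,\bp_{-j_i}) + M_i$, so $R_{j_i}(p^\ast-\varepsilon,\bp_{-j_i}) \ge (p^\ast-\varepsilon)\big(D_{j_i}(p^\ast,\bp_{-j_i}) + M_i\big) > p^\ast D_{j_i}(p^\ast,\bp_{-j_i}) = R_{j_i}(\bp)$ once $\varepsilon < p^\ast M_i / \big(D_{j_i}(p^\ast,\bp_{-j_i}) + M_i\big)$, contradicting the local optimality of $p_{j_i}$ at the LNE $\bp$. The delicate points in writing this out are the clean matching between the customers $j_0$ stands to lose and those $j_i$ provably gains on an infinitesimal undercut, and checking that the measure-zero sets of customers with willingness-to-pay exactly $p^\ast$ or $p^\ast - \varepsilon$ contribute nothing, which follows from the standing assumption that the willingness-to-pay density is Lipschitz.
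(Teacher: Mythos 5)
Your proof is correct and follows essentially the same route as the paper's: the only customers $j_0$ can lose are those who are indifferent between $j_0$ and some $j_i$ on every attribute ranked above price, and this set must have measure zero because otherwise $j_i$ would gain a discrete mass of customers by infinitesimally undercutting $p_{j_0}$, contradicting the local optimality of $p_{j_i}$ in the LNE. Your write-up is somewhat more explicit than the paper's (splitting the equality into two inequalities, one via cross-monotonicity, and quantifying the undercutting revenue gain), but the key idea is identical.
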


\begin{proof}[Proof of \cref{lem:rev-doesn't-change}.]
Consider the subset $\mathcal{S} \subseteq \{j_1,\ldots,j_l\}$ for which $p_{j_{i}}' <p_{j_i}$. As argued in the proof of \cref{prop:cross-monotonicity}, for $j_0$ to be affected by this price decrease, it must be that there exists a customer $c$ such that $\{j_0,j\} \subseteq \mathcal{N}_c$ for some $j \in \mathcal{S}$. For any such customer $c$, it moreover must have been that $j_0$ and $j$ were ranked equally on all attributes up until price, and that $c$ had chosen $j_0$ (else, $c$ chose $j$ both before and after the price change, resulting in no change in revenue for $j_0$). 

Given this, since $j$ strictly decreased its price, all customers who chose $j_0$ based on an attribute $k$ ranked lower than price under $\bp$, now choose $j$ under $\bp'$. We claim that this occurs for a measure 0 set of customers.  To see this, suppose there existed $j \in \{j_1,\ldots,j_l\}$ such that a strictly positive mass of customers chose $j_0$ over $j$ based on an attribute ranked lower than price. Then, $j$ would stand to gain a discrete mass of customers by infinitesimally undercutting $p_{j_0}$, thus contradicting the local optimality of $p_j = p_{j_0}$.

Since all but a measure 0 set of customers move their choice from $j_0$ to $j\in\mathcal{S}$ under this price change, the expected demand (and hence expected revenue) of $j_0$ is invariant to weakly lower price changes of $j_1,\ldots,j_l$.  \hfill\Halmos
\end{proof}

We now leverage \cref{lem:rev-doesn't-change} to show that $p_{j_0}$ remains locally optimal under any price vector $\bp'$ such that $p_{j_i}' = 0$ for some subset of sellers $\mathcal{S} \subseteq \{1,\ldots,l\}$. \cref{lem:don't-undercut} establishes that it is never optimal for $j_0$ to $\epsilon$-undercut its original price $p_{j_0}$ under $\bp'$.

\begin{lemma}\label{lem:don't-undercut}
There exists $\epsilon_0 > 0$ such that, for all $0 < \epsilon \leq \epsilon_0$, $R_{j_0}(p_{j_0}-\epsilon, \bp'_{-j_0}) \leq R_{j_0}(p_{j_0}, \bp'_{-j_0})$.
\end{lemma}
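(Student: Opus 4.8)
The plan is to prove the inequality by sandwiching $R_{j_0}(p_{j_0}-\epsilon,\bp'_{-j_0})$ between quantities evaluated at the original LNE $\bp$, through the chain
\[
R_{j_0}(p_{j_0}-\epsilon,\bp'_{-j_0}) \;\le\; R_{j_0}(p_{j_0}-\epsilon,\bp_{-j_0}) \;\le\; R_{j_0}(p_{j_0},\bp_{-j_0}) \;=\; R_{j_0}(p_{j_0},\bp'_{-j_0}),
\]
valid for all sufficiently small $\epsilon>0$, where each of the three links draws on a separate ingredient already available.

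For the first link, note that by construction $\bp'$ differs from $\bp$ only in that the sellers in $\mathcal{S}\subseteq\{j_1,\dots,j_l\}$ have had their price lowered from $p_{j_0}>0$ to $0$, so $\bp'_{-j_0}\le\bp_{-j_0}$ componentwise; \Cref{prop:cross-monotonicity} then gives $R_{j_0}(p,\bp'_{-j_0})\le R_{j_0}(p,\bp_{-j_0})$ for every price $p$ of seller $j_0$, in particular for $p=p_{j_0}-\epsilon$ whenever $\epsilon\le p_{j_0}$ (so that $p_{j_0}-\epsilon$ is still a feasible price). For the second link, since $\bp$ is a LNE there is an open neighborhood $\cU_{j_0}$ of $p_{j_0}$ with $p_{j_0}\in\arg\max_{p\in\overline{\cU}_{j_0}}R_{j_0}(p,\bp_{-j_0})$; choosing $\epsilon_0>0$ small enough that $(p_{j_0}-\epsilon_0,p_{j_0}]\subseteq\overline{\cU}_{j_0}$ and $\epsilon_0\le p_{j_0}$ yields $R_{j_0}(p_{j_0}-\epsilon,\bp_{-j_0})\le R_{j_0}(p_{j_0},\bp_{-j_0})$ for all $0<\epsilon\le\epsilon_0$, and this $\epsilon_0$ is the constant claimed in the statement. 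For the third link, I would invoke \cref{lem:rev-doesn't-change} at the price $p=p_{j_0}$: its hypothesis is met because $p'_{j_i}\in\{0,p_{j_0}\}\le p_{j_0}=p_{j_i}$ for every $i\in\{1,\dots,l\}$, so it delivers exactly $R_{j_0}(p_{j_0},\bp_{-j_0})=R_{j_0}(p_{j_0},\bp'_{-j_0})$. Composing the three links closes the proof.

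The one place where care is needed --- and hence the step I would flag as the crux of the bookkeeping --- is the endpoint $p=p_{j_0}$: cross-monotonicity applied there would only give $R_{j_0}(p_{j_0},\bp'_{-j_0})\le R_{j_0}(p_{j_0},\bp_{-j_0})$, which is the wrong direction for bounding the right-hand side of the lemma, so the exact-equality statement of \cref{lem:rev-doesn't-change} is essential. That lemma is where the genuinely structural work sits: local optimality of the tied prices $p_{j_1}=\dots=p_{j_l}=p_{j_0}$ rules out any seller in $\mathcal{S}$ capturing a positive mass of customers by an infinitesimal undercut of $p_{j_0}$, so lowering their prices cannot change $j_0$'s demand at $p_{j_0}$. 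Relative to that, the present lemma is purely a matter of assembling the pieces in the right order.
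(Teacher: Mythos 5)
Your proof is correct and is essentially identical to the paper's: both arguments combine (i) cross-monotonicity (\Cref{prop:cross-monotonicity}) applied at the undercut price $p_{j_0}-\epsilon$, (ii) local optimality of $p_{j_0}$ at the LNE $\bp$, and (iii) the exact equality from \cref{lem:rev-doesn't-change} at the endpoint $p_{j_0}$, merely rearranged from the paper's chain of differences into a direct chain of inequalities. Your remark that the equality of \cref{lem:rev-doesn't-change} (rather than a one-sided monotonicity bound) is essential at the endpoint is exactly the point the paper's proof relies on as well.
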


\begin{proof}[Proof of \cref{lem:don't-undercut}.]
By local optimality of $p_{j_0}$ under $\bp$, there exists $\epsilon_0 > 0$ such that for $0 < \epsilon \leq \epsilon_0$: 
\begin{align}
0&\leq R_{j_0}(p_{j_0},\bp_{-j_0}) - R_{j_0}(p_{j_0}-\epsilon, \bp_{-j_0})\\
&\leq R_{j_0}(p_{j_0},\bp_{-j_0}) -   R_{j_0}(p_{j_0}-\epsilon, \bp'_{-j_0}) \label{eq:a}\\
&= R_{j_0}(p_{j_0}, \bp'_{-j_0}) - R_{j_0}(p_{j_0}-\epsilon, \bp'_{-j_0}).\label{eq:b}
\end{align}
Here, \eqref{eq:a} follows from cross-monotonicity (\cref{prop:cross-monotonicity}), and \eqref{eq:b} follows from \cref{lem:rev-doesn't-change}. \hfill\Halmos
\end{proof}

We conclude the proof by establishing that it is never optimal for $j_0$ to $\epsilon$-overcut $p_{j_0}$ under $\bp'$.

\begin{lemma}\label{lem:don't-overcut}
There exists $\epsilon_0 > 0$ such that, for all $0 < \epsilon \leq \epsilon_0$, $R_{j_0}(p_{j_0}+\epsilon, \bp'_{-j_0}) \leq R_{j_0}(p_{j_0}, \bp'_{-j_0})$.
\end{lemma}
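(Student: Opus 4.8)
\textbf{Proof proposal for Lemma~\ref{lem:don't-overcut}.}

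The plan is to mirror the structure of the proof of Lemma~\ref{lem:don't-undercut}, but now the direction of the price perturbation is ``upward,'' so I will use cross-monotonicity (\Cref{prop:cross-monotonicity}) in the opposite direction and, crucially, replace the invariance established in \Cref{lem:rev-doesn't-change} by a one-sided comparison that goes the right way. First, by local optimality of $p_{j_0}$ under $\bp$, there exists $\epsilon_0 > 0$ such that for all $0 < \epsilon \leq \epsilon_0$ we have $R_{j_0}(p_{j_0}+\epsilon, \bp_{-j_0}) \leq R_{j_0}(p_{j_0}, \bp_{-j_0})$. The goal is to transfer this inequality from $\bp_{-j_0}$ to $\bp'_{-j_0}$, where $\bp'$ differs from $\bp$ only in that some subset $\cS \subseteq \{j_1,\ldots,j_l\}$ of the equally-priced sellers have dropped their price to $0$ (so $\bp'_{-j_0} \leq \bp_{-j_0}$ componentwise).

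The key observation is that for a price $p_{j_0} + \epsilon$ that is \emph{strictly above} the common price level $p_{j_0} = p_{j_1} = \cdots = p_{j_l}$, seller $j_0$ is priced strictly above each of the sellers in $\cS$ both under $\bp$ (where they price at $p_{j_0}$) and under $\bp'$ (where they price at $0$), since $p_{j_0} > 0$ in Case 2. By the pseudo-competitive property (\Cref{prop:clc-choice-implies-psg}), the revenue of a seller priced at $p_{j_0}+\epsilon$ is unchanged when strictly-lower-priced competitors change their prices, as long as those prices remain strictly below $p_{j_0}+\epsilon$. Both $p_{j_i} = p_{j_0} < p_{j_0}+\epsilon$ and $0 < p_{j_0}+\epsilon$, so this applies, giving $R_{j_0}(p_{j_0}+\epsilon, \bp_{-j_0}) = R_{j_0}(p_{j_0}+\epsilon, \bp'_{-j_0})$. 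Combined with cross-monotonicity applied at price $p_{j_0}$ (since $\bp'_{-j_0}\leq\bp_{-j_0}$ gives $R_{j_0}(p_{j_0}, \bp'_{-j_0}) \leq R_{j_0}(p_{j_0}, \bp_{-j_0})$), this does \emph{not} immediately close the argument — I need the inequality the other way for the $p_{j_0}$ term. The resolution is \Cref{lem:rev-doesn't-change}, which establishes $R_{j_0}(p_{j_0}, \bp_{-j_0}) = R_{j_0}(p_{j_0}, \bp'_{-j_0})$ exactly (not just an inequality); this is precisely the lemma proved en route to \Cref{lem:don't-undercut}. Chaining these together:
\begin{align*}
R_{j_0}(p_{j_0}+\epsilon, \bp'_{-j_0}) &= R_{j_0}(p_{j_0}+\epsilon, \bp_{-j_0}) \\
&\leq R_{j_0}(p_{j_0}, \bp_{-j_0}) = R_{j_0}(p_{j_0}, \bp'_{-j_0}),
\end{align*}
where the first equality is the pseudo-competitive property, the inequality is local optimality of $p_{j_0}$ under $\bp$, and the last equality is \Cref{lem:rev-doesn't-change}.

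The main (and essentially only) subtlety is ensuring the pseudo-competitive property genuinely applies here: I must verify that raising $j_0$'s price to $p_{j_0}+\epsilon$ keeps the sellers in $\cS$ \emph{strictly} below $j_0$ in \emph{both} price vectors, which hinges on $p_{j_0}>0$ (guaranteed in Case~2) and $\epsilon>0$. Since the perturbed sellers price at $p_{j_0}$ (under $\bp$) or $0$ (under $\bp'$), both strictly less than $p_{j_0}+\epsilon$, the set of sellers pricing weakly above $p_{j_0}+\epsilon$ is identical in $\bp$ and $\bp'$ and those sellers set identical prices, so \Cref{def:psg-property} applies verbatim. All other steps are immediate invocations of results already established. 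Combining \Cref{lem:don't-undercut} and \Cref{lem:don't-overcut} then yields that $p_{j_0}$ is a local maximum of $R_{j_0}(\cdot, \bp'_{-j_0})$, completing Case~2 and hence the proof of \Cref{thm:general-LNE}. \hfill\Halmos
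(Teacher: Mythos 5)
Your proposal is correct and follows essentially the same route as the paper's proof: the pseudo-competitive property gives $R_{j_0}(p_{j_0}+\epsilon, \bp_{-j_0}) = R_{j_0}(p_{j_0}+\epsilon, \bp'_{-j_0})$, local optimality under $\bp$ gives the inequality, and \cref{lem:rev-doesn't-change} transfers the value at $p_{j_0}$ from $\bp_{-j_0}$ to $\bp'_{-j_0}$. Your additional verification that the sellers in $\cS$ remain strictly below $p_{j_0}+\epsilon$ in both price vectors is a correct (and slightly more explicit) justification of the step the paper attributes to the pseudo-competitive property.
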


\begin{proof}[Proof of \cref{lem:don't-overcut}.]
By local optimality of $p_{j_0}$ under $\bp$, there exists $\epsilon_0 > 0$ such that, for all $0 < \epsilon \leq \epsilon_0$:
\begin{align}
0&\leq R_{j_0}(p_{j_0},\bp_{-j_0}) - R_{j_0}(p_{j_0}+\epsilon, \bp_{-j_0})\\
&= R_{j_0}(p_{j_0},\bp_{-j_0}) -   R_{j_0}(p_{j_0}+\epsilon, \bp'_{-j_0}) \label{eq:a+}\\
&= R_{j_0}(p_{j_0}, \bp'_{-j_0}) - R_{j_0}(p_{j_0}+\epsilon, \bp'_{-j_0}). \label{eq:b+}
\end{align}
Here, \eqref{eq:a+} follows from the pseudo-competitive property, and \eqref{eq:b+} follows from \cref{lem:rev-doesn't-change}.  Thus, increasing its price is not locally beneficial to $j_0$ under price vector $\bp'$.\hfill\Halmos
\end{proof}

Putting \cref{lem:don't-undercut} and \cref{lem:don't-overcut} together, the local optimality of $p_{j_0}$ is preserved under any such $\bp'$.  
\hfill\Halmos
\end{proof}

\subsection{Proof of \cref{cor:eq-easy}}\label{apx:efficient-comp}
For notational convenience, for $p\in \cV$, we let $M_i(p) =  R_i(p, \bp^{\phi}_{-i})$). We refer to $M_i(\cdot)$ as the loyalist revenue curve, informally referring to the fact that it is the revenue obtained from customers who only purchase from $i$, regardless of other sellers' prices.

\begin{proof}
The proof is constructive. Let $\pi$ denote an arbitrary ordering over sellers.

\medskip

\textbf{Initialization.} Fix an arbitrary $\pi$, mark it as inspected, and proceed to Phase 1 below.

\medskip

\textbf{Phase 1: Valid order-price pair computation.} Compute the VOP associated with $\pi$ (see \cref{def:valid-order}). If a corresponding VOP exists, let $\bp$ denote the resulting price vector, and proceed to Phase 2. Else, choose a new, uninspected ordering. Mark this new ordering as inspected, and go back to the beginning of Phase 1. 

\medskip

\textbf{Phase 2: LNE / GNE verification.} For $i \in [N]$, compute: $$\max\left\{\sup_{p \in [p_{\pi(i)}, p_{\pi(i-1})]}R_{\pi(i)}(p,\bp_{-\pi(i)}),\sup_{p \in [p_{\pi(i+1)}, p_{\pi(i)}]}R_{\pi(i)}(p,\bp_{-\pi(i)})\right\}.$$
If this quantity strictly exceeds $R_{\pi(i)}(\bp)$ for some $i \in [N]$, then $\bp$ is not a LNE (and hence, not a GNE); else, it is a LNE.  

If $\bp$ is a LNE, verify that it is a GNE by computing 
for $i\in [N]$, $j \in [N]\setminus\{i\}$$$\begin{cases}\sup_{p\in[p_{\pi(j)},p_{\pi(j-1)}]} R_{\pi(i)}(p,\bp_{-\pi(i)}) \quad \text{if } j < i \text{ or } j > i+1 \\
\sup_{p\in[p_{\pi(i+1)},p_{\pi(i-1)}]} R_{\pi(i)}(p,\bp_{-\pi(i)}) \quad \text{if } j = i+1.
\end{cases}$$ 
If for $i \in [N]$, any of these revenues is strictly higher than $R_{\pi(i)}(\bp)$, then $\bp$ is not a GNE; else, it is. 
Choose a new, uninspected ordering, mark it as inspected, and proceed to the beginning of Phase 1. If all $N!$ orderings have been traversed, either output all identified LNE/GNE, or return that none exists.

\medskip

\textbf{Correctness and termination.}
By the unimodality assumption, for each ordering $\pi$ there exists at most one VOP pair $(\pi,\bp)$. Hence, by \cref{thm:general-LNE}, if a LNE exists, it must have been found in Phase 1. Since a GNE is a LNE, it must also have been found in Phase 1, if it exists. Phase 2 is a simple consistency check to see if the local optimum computed for each seller is locally / globally optimal with respect to $\bp$.

We conclude the proof by giving an upper bound on the number of undercutting best-response computations required for the procedure to terminate. For each ordering (of which there are $N!$), at most $N$ undercutting best-response computations are performed during Phase 1. Verifying whether each of these VOPs is a LNE requires an additional two best-response computations per seller, per ordering. This yields the upper bound on the number of best-response computations to find all LNE, or to find that none exists. 

The second GNE verification phase requires at most $N$ undercutting best-response computations {per seller}, for a given LNE. This then yields the $\textup{O}(N^2)$ best-response computations across all sellers. 

The bound in Part 2 of the statement follows from putting these two together. Namely, each ordering requires at most $N^2 + N$ local best-response computations to check if the LNE is a GNE. Doing this for all $N!$ orderings give rise to an upper bound of $\textup{O}((N^2+N)N!)$ undercutting best-response computations.
\hfill\Halmos
\end{proof}

\subsection{Proof of \Cref{prop:gd-implies-lne-existence-uniqueness}}\label{apx:gd-implies-lne-existence-uniqueness}



We first introduce some useful terminology.  {If seller $i$ gradient-dominates seller $j$ and $p_i > p_j$, we say that sellers $i$ and $j$ are {\it correctly ordered}. Otherwise, $i$ and $j$ are said to be {\it incorrectly ordered}.} The following fact emerges as a consequence of gradient dominance.
\begin{lemma}\label{prop:grad-dom-implication}
Under gradient dominance, there exists an ordering $\pi$ such that, for all $\bp \in \cV^N$ aligned with $\pi$ such that $p_{\pi(N)} > 0$ and all $l \in [N-1]$:
\begin{align}\label{eq:ordered-gradients-1}
g_{\pi(l)}\left(p_{\pi(l)},\bp_{-\pi(l)}^{\pi(1:l-1)}\right) &> \lim_{p\uparrow p_{\pi(l)}} g_{\pi(l+1)}\left(p,\bp_{-\pi(l+1)}^{\pi(1:l)}\right). 
\end{align}
\end{lemma}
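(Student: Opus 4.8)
The plan is to show that the binary relation ``gradient-dominates'' induces a well-defined total order on $\cN$, and then verify that this ordering witnesses \eqref{eq:ordered-gradients-1}. First I would argue that for any three sellers, gradient dominance is transitive: if $i$ gradient-dominates $j$ and $j$ gradient-dominates $k$, then $i$ gradient-dominates $k$. Given Assumption~\ref{asp:grad-dom} (every pair is comparable), transitivity immediately yields a strict total order, and I take $\pi$ to be the ordering that lists sellers from the gradient-dominant end downward, i.e., $\pi(l)$ gradient-dominates $\pi(l')$ whenever $l < l'$. The heart of the argument is therefore the transitivity step. To prove it, suppose for contradiction that $i$ does not gradient-dominate $k$; by Assumption~\ref{asp:grad-dom} this forces $k$ to gradient-dominate $i$. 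Now I would fix an arbitrary $\bp_{-\{i,j,k\}}$ and a price $p \in (0,\bar v)$, and chain the three defining inequalities of the form \eqref{eq:i-dominates-j}, being careful that the ``frozen'' coordinates match: the inequality ``$i$ dominates $j$'' is evaluated with $p_k = 0$ folded into the background vector, ``$j$ dominates $k$'' with $p_i = 0$ folded in, and ``$k$ dominates $i$'' with $p_j = 0$ folded in. Because all three comparisons are between the gradient of the higher seller at $(p_\cdot = p, p_{\text{other}} = 0)$ and the left-limit of the lower seller's gradient as its price rises to $p$, the three strict inequalities compose into $g_i(\cdot) > g_i(\cdot)$ after cancellation, a contradiction; hence $i$ gradient-dominates $k$.

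Once $\pi$ is fixed, I would verify \eqref{eq:ordered-gradients-1} directly. Take any $\bp$ aligned with $\pi$ with $p_{\pi(N)} > 0$, and any $l \in [N-1]$. By construction $\pi(l)$ gradient-dominates $\pi(l+1)$, so I apply Assumption~\ref{asp:grad-dom} with $i = \pi(l)$, $j = \pi(l+1)$, and the background vector $\bp_{-\{\pi(l),\pi(l+1)\}}$ obtained from $\bp$ by zeroing out the coordinates of all sellers ranked strictly below $l+1$ (i.e., the sellers in $\pi(l+2:N)$) and keeping the coordinates of $\pi(1:l-1)$ as in $\bp$. Plugging in $p = p_{\pi(l)} \in (0,\bar v)$ and noting $p_{\pi(l)} > 0$ (since $p_{\pi(N)} > 0$ and $\bp$ is aligned with $\pi$, so $p_{\pi(l)} \ge p_{\pi(N)} > 0$), inequality \eqref{eq:i-dominates-j} reads exactly
\[
 g_{\pi(l)}\!\left(p_{\pi(l)}, p_{\pi(l+1)} = 0, \bp_{-\{\pi(l),\pi(l+1)\}}\right) > \lim_{q \uparrow p_{\pi(l)}} g_{\pi(l+1)}\!\left(p_{\pi(l)}, p_{\pi(l+1)} = q, \bp_{-\{\pi(l),\pi(l+1)\}}\right),
\]
and since the background vector here has the coordinates of $\pi(l+2:N)$ set to $0$, the left side is precisely $g_{\pi(l)}(p_{\pi(l)}, \bp_{-\pi(l)}^{\pi(1:l-1)})$ and the right side is precisely $\lim_{p\uparrow p_{\pi(l)}} g_{\pi(l+1)}(p, \bp_{-\pi(l+1)}^{\pi(1:l)})$, which is \eqref{eq:ordered-gradients-1}.

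I expect the main obstacle to be the bookkeeping in the transitivity argument: making sure that when I invoke the three pairwise dominance inequalities, the coordinates that are held fixed are mutually consistent so that the inequalities genuinely compose. The definition of gradient dominance quantifies over \emph{all} $\bp_{-\{i,j\}}$ and sets the ``loser'' price to $0$ (or takes a left-limit), so the subtlety is that in the inequality ``$i$ dominates $j$'' the price $p_k$ sits inside the free background vector and can be set to $0$, and similarly for the other two; as long as I consistently use $p_i = p_j = p_k = 0$ except for the two coordinates active in each inequality and the genuinely free coordinates $\bp_{-\{i,j,k\}}$, the chain closes. A secondary point worth stating carefully is that the left-limits appearing in \eqref{eq:i-dominates-j} are well-defined: this is where I would invoke that the relevant partial derivatives $g_i(\cdot)$ are assumed to exist on the undercutting region (as set up in the paragraph preceding Assumption~\ref{asp:unimodality} and reiterated in Assumption~\ref{asp:ogd-rev-n-firms}), so the limits in the statement are meaningful. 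No further regularity is needed for this lemma.
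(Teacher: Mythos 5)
There is a genuine gap: the transitivity step does not go through, and the rest of your argument leans on it. Write out the three pairwise inequalities you propose to chain. From ``$i$ dominates $j$'' (with $p_k=0$ in the background) the right-hand side is $\lim_{q\uparrow p} g_j(p_i=p,\,p_j=q,\,p_k=0,\,\cdot)$, whereas from ``$j$ dominates $k$'' (with $p_i=0$ in the background) the left-hand side is $g_j(p_j=p,\,p_i=0,\,p_k=0,\,\cdot)$. These are different quantities: in the first, seller $i$ sits \emph{above} seller $j$ at price $p$, and by the pseudo-competitive property a higher-priced competitor genuinely affects $j$'s revenue (e.g.\ under BICA, $j$'s demand depends on $\min$ of higher-priced, higher-quality sellers); in the second, $i$ is priced at $0$, strictly below $j$. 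One is also a left-limit at $p$ while the other is a value at $p$. So the three strict inequalities do not telescope, and no contradiction of the form $g_i(\cdot) > g_i(\cdot)$ is obtained. More fundamentally, Assumption~\ref{asp:grad-dom} only guarantees that the dominance relation is a complete asymmetric relation (a tournament); nothing in the definition forces it to be acyclic, and the paper deliberately does not claim this.

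The fix is that transitivity is not needed: \eqref{eq:ordered-gradients-1} only constrains \emph{adjacent} pairs in $\pi$, so it suffices to produce an ordering in which $\pi(l)$ gradient-dominates $\pi(l+1)$ for every $l$ --- i.e.\ a Hamiltonian path in the tournament, which always exists. The paper obtains this by a bubble-sort argument: start from any ordering, locate the first adjacent pair for which \eqref{eq:ordered-gradients-1} fails for some $\bp$ (by completeness of the relation, the lower-ranked seller must then gradient-dominate the higher-ranked one), swap them, and repeat; since an adjacent transposition only changes the relative order of the swapped pair and a swapped pair is never swapped back, the process terminates in at most $\binom{N}{2}$ swaps with every adjacent pair correctly ordered. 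Your second step --- verifying that once adjacent pairs are correctly ordered, \eqref{eq:i-dominates-j} instantiated with the background vector that keeps $\bp_{\pi(1:l-1)}$ and zeroes out $\pi(l+2:N)$ is literally \eqref{eq:ordered-gradients-1} --- is correct and matches the way the paper uses the assumption; only the construction of $\pi$ needs to be replaced.
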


\begin{proof}
The proof is constructive. Consider any ordering $\pi$ such that \eqref{eq:ordered-gradients-1} does not hold, and let $l^*$ denote the first rank in the ordering such that $\exists \, \bp \in \cV^N$ such that \[\mbox{$g_{\pi(l)}\left(p_{\pi(l)},\bp_{-\pi(l)}^{\pi(1:l-1)}\right) \leq \lim_{p\uparrow p_{\pi(l)}} g_{\pi(l+1)}\left(p,\bp_{-\pi(l+1)}^{\pi(1:l)}\right)$,}\] with $p_{\pi(N)} > 0$. By gradient dominance, then, $\mbox{$g_{\pi(l+1)}\left(p_{\pi(l+1)},\bp_{-\pi(l+1)}^{\pi(1:l)}\right)>\lim_{p\uparrow p_{\pi(l+1)}} g_{\pi(l)}\left(p,\bp_{-\pi(l)}^{\pi(1:l-1)}\right)$}$ for all $\bp \in \cV^N$ such that $p_{\pi(N)} > 0$. Define $\pi'$ that flips the two incorrectly ordered sellers, i.e.: 
\begin{align*}
\begin{cases}
\pi'(l) &= \pi(l) \quad  \forall \ l \in [N] \setminus\{l^*,l^*+1\}  \\
\pi'(l^*) &= \pi(l^*+1) \\
\pi'(l^*+1) &= \pi(l^*).
\end{cases}
\end{align*}
Repeat this process with $\pi'$, as long as there exist incorrectly ordered sellers in the newly generated ordering $\pi'$. To see why this process must terminate with all sellers correctly ordered, note that, once two sellers have been swapped, by gradient dominance they will never be swapped again. Thus the algorithm never revisits the same ordering. Since there are finitely many orderings and, for each ordering, finitely many pairs to potentially swap, this algorithm must terminate in an ordering $\pi$ such that \eqref{eq:ordered-gradients-1} holds.\hfill\Halmos
\end{proof}

We use this fact to establish existence of LNE.
\begin{proof}[Proof of \Cref{prop:gd-implies-lne-existence-uniqueness}.]
For $\mathcal{S} \subseteq \cN \setminus \{i\}$, let $p_i^*(\bp_{-i}^\cS) \in \arg\max_{p\in[0, \nearestminprice)} R_i(p,\bp_{-i}^\cS)$ denote the unique optimal price of seller $i$ in the open undercutting region relative to the set of sellers $\cS$. (Uniqueness follows from \Cref{asp:unimodality}.) 

We first argue that $p_i^*(\bp_{-i}^\cS) \neq 0$, if it exists. Suppose for contradiction that $\mbox{$p_i^*(\bp_{-i}^\cS) = 0$}$. Since $R_i(0,\bp_{-i}^\cS) = 0$, the optimal revenue is $0$, which implies that $R_i(p,\bp_{-i}^\cS)=0$ for all $\mbox{$p\in[0,\nearestminprice]$}$, contradicting the unimodality assumption. Thus, $p_i^*(\bp_{-i}^\cS) \in (0, \nearestminprice)$, and $\mbox{$g_i(p_i^*(\bp_{-i}^\cS),\bp_{-i}^\cS)=0$}$.


Now, let $\pi$ be such that \eqref{eq:ordered-gradients-1} holds, and define $p^*_{\pi(1)} = \arg\max_p R_{\pi(1)}(p, \bp^{\phi}_{-i})$. Using similar arguments as above, $p^*_{\pi(1)} \in (0,\bar{v})$, and 
 $g_{\pi(1)}(p^*_{\pi(1)},\bp_{-\pi(1)}^{\phi})=0$. By \eqref{eq:ordered-gradients-1}, we then have
$$\lim_{p\uparrow p_{\pi(1)}} g_{\pi(2)}\left(p,\bp_{-\pi(2)}^{\pi(1)}\right) < g_{\pi(1)}(p_{\pi(1)}^M,\bp_{-\pi(1)}^{\phi})  = 0.$$
By \Cref{asp:unimodality}, then, $R_{\pi(2)}(\cdot,\bp_{-\pi(2)}^{\pi(1)})$ attains its unique maximum $p_{\pi(2)}^*$ in $(0,p_{\pi(1)}^*)$,  with \newline $\mbox{$g_{\pi(2)}\left(p_{\pi(2)}^*,\bp_{-\pi(2)}^{\pi(1)}\right) = 0$}$.  Iterating this argument for $l = 3,\ldots,N$, we obtain that $\bp^*$ is a VOP with distinct prices, and hence a LNE. 

Note that finding an ordering $\pi$ such that seller $i$ is ranked above seller $j$ if and only if $i$ gradient-dominates $j$ requires $O(N^2)$ comparisons. Given $\pi$, it suffices to find the aligned price vector $\bp^*$, which can be done in $O(N)$ computations. 
\hfill\Halmos
\end{proof}

\subsection{Gradient dominance: Special cases}

In the remainder of this section, we let $F$ denote the cumulative distribution function of customers' willingness-to-pay, and $f$ the corresponding density. For ease of notation we define $\bar{F}(\cdot) = 1-F(\cdot)$.

\subsubsection{Proof of \Cref{prop:bica-gd}}\label{apx:bica-gd}

\begin{proof}
We first argue that weak gradient dominance holds. 

Consider $\bp$ such that $p_i > p_j$. We consider two cases.

\noindent \textbf{Case 1: There exists $k \in \cN \setminus\{i,j\}$ such that $q_k > q_i$ and $p_k \leq p_i$.} This is a trivial price configuration since no customer chooses $i$ (anyone who can afford $p_i$ can also afford $p_k$, and will choose this higher-rated seller). Hence, for any such price vector seller $i$ has a revenue of zero.

\medskip

\noindent\textbf{Case 2: $p_k > p_i$ for all $k \in \cN$ such that $q_k > q_i$.} We have:
\begin{align*}
&R_i(p_i, p_j, \bp_{-\{i,j\}}) = p_i\cdot \mathbb{P}\left(p_i \leq w_c < \min_{k: q_k > q_i} p_k\right) \\
\implies &g_i(p_i, p_j, \bp_{-\{i,j\}}) = \mathbb{P}\left(p_i \leq w_c < \min_{k: q_k > q_i} p_k\right)-p_if(p_i).
\end{align*}
Similarly:
\begin{align}\label{eq:reuse}
&R_j(p_i, p_j, \bp_{-\{i,j\}}) = p_j\cdot \mathbb{P}\left(p_j \leq w_c < \min_{k:q_k > q_j}p_k\right) \notag \\
\implies &g_j(p_i, p_j, \bp_{-\{i,j\}}) = \mathbb{P}\left(p_j \leq w_c < \min_{k:q_k > q_j}p_k\right)-p_jf(p_j).
\end{align}
Hence, we obtain:
\begin{align*}
g_i(p_i, p_j, \bp_{-\{i,j\}})-\lim_{p_j\uparrow p_i}g_j(p_i, p_j, \bp_{-\{i,j\}}) = \mathbb{P}\left(p_i \leq w_c < \min_{k: q_k > q_i} p_k\right) > 0. 
\end{align*}

Consider now the ordering $\pi$ the ranks sellers in increasing order of quality, and note that this is the unique ordering that gives rise to a non-trivial price vector. We now argue that $\pi(i)$ attains its unique maximum in $(0,p_{\pi(i-1)})$, for any price vector $\bp$ aligned with $\pi$. By \eqref{eq:reuse}:
\begin{align}
g_{\pi(i)}(p,\bp_{-\pi(i)}) = \bar{F}(p)-\bar{F}(p_{\pi(i-1)})-pf(p).
\end{align}
Moreover, by the first-order condition:
\begin{align}\label{eq:prove-unimod}
p = \frac{\bar{F}(p)-\bar{F}(p_{\pi(i-1)})}{f(p)}.
\end{align}
For any interval over which $f(p)$ is non-decreasing, the right-hand side of \eqref{eq:prove-unimod} is decreasing, since $\mbox{$\bar{F}(p)-\bar{F}(p_{\pi(i-1)})$}$ is decreasing. For any interval over which $f(p)$ is decreasing, the right-hand side of \eqref{eq:prove-unimod} is  decreasing, since $\frac{\bar{F}(p)}{f(p)}$ is non-increasing by the MHR assumption, and $-\frac{\bar{F}(p_{\pi(i-1)})}{f(p)}$ is decreasing. Hence, the solution to \eqref{eq:prove-unimod} is unique and attained in $(0,p_{\pi(i-1)})$, since any price in this interval yields strictly positive revenue.

Observe that the result does not follow immediately from \Cref{prop:gd-implies-lne-existence-uniqueness}, since unimodality and gradient dominance fail to hold for trivial price configurations in which a lower-quality seller prices above a high-quality seller. However, one can apply the proof of \Cref{prop:gd-implies-lne-existence-uniqueness} to $\pi$, and we obtain the result.
\hfill\Halmos
\end{proof}

\subsubsection{Proof of \Cref{prop:bica-gd-robustness}}\label{apx:bica-gd-robustness}

\begin{proof}
For ease of notation, we let $i=1, j=2$. To show that seller 1 gradient-dominates seller 2, it suffices to consider $\bp$ such that $p_1 > p_2$. We partition our analysis on the realization of $\nopriceconsiderationset$, and let $\mbox{$R_k(\bp\mid\nopriceconsiderationset)$}$ denote the expected revenue of seller $k \in \mathcal{N}$, given $\nopriceconsiderationset$.

\medskip

\noindent\textbf{Case 1:} $\nopriceconsiderationset = \{1\}$. Then,
$R_1(\bp\mid\nopriceconsiderationset) = p_1\bar{F}(p_1)$, $R_2(\bp\mid\nopriceconsiderationset) = 0.$

\medskip

\noindent\textbf{Case 2:} $\nopriceconsiderationset = \{2\}$. Then, $R_1(\bp\mid\nopriceconsiderationset) = 0,$ $R_2(\bp\mid\nopriceconsiderationset) = p_2\bar{F}(p_2).$

\medskip

\noindent\textbf{Case 3:} $\nopriceconsiderationset = \{1,2\}$. Let $\equalpref{2}{1}$ be the mass of customers who rank $2$ and $1$ equally on all attributes up until price, given $\nopriceconsiderationset = \{1,2\}$.

Since $p_1 > p_2$, the only customers who purchase the item from seller 1 are customers who strictly preferred it based on a non-price attribute ranked higher than price, and for whom $w_c \geq p_1$. Hence: $$R_1(\bp\mid\nopriceconsiderationset) = \npaplus{1}{2}p_1\bar{F}(p_1).$$

On the other hand, the customers who purchase the item from seller 2 are those who satisfy one of three conditions: (1) $w_c \geq p_2$, and seller 2 was strictly preferred to seller 1 on a non-price attribute ranked higher than price, (2) $w_c \geq p_2$, and seller 2 was ranked identically to seller 1 on all attributes up until price, (3) $p_2 \leq w_c < p_1$, and seller 1 was strictly strictly preferred to seller 2 based on a non-price attributed ranked higher than price. Then:
\begin{align*}
R_2(\bp\mid\nopriceconsiderationset)
&=p_2\left(\left(\bar{F}(p_2)-\bar{F}(p_1)\right)\npaplus{1}{2} + \bar{F}(p_2)\left(\npaplus{2}{1}+\equalpref{2}{1}\right)\right)= p_2\left(\bar{F}(p_2) - \bar{F}(p_1)\npaplus{1}{2}\right),
\end{align*}
where the second equality follows from the fact that $\npaplus{1}{2} + \npaplus{2}{1} + \equalpref{2}{1} = 1$.

Putting these three cases together, we obtain the following:
\begin{align}\label{eq:g1}
R_1(\bp) &= p_1\bar{F}(p_1)\left[\Prob{\nopriceconsiderationset = \{1\}} + \npaplus{1}{2}\Prob{\nopriceconsiderationset = \{1,2\}}\right] \notag \\
\implies g_1(p_1,p_2) &= \left(\bar{F}(p_1)-p_1f(p_1)\right)\left(\Prob{\nopriceconsiderationset = \{1\}} + \npaplus{1}{2}\Prob{\nopriceconsiderationset = \{1,2\}}\right),
\end{align}
and
\begin{align}\label{eq:g2}
R_2(\bp) &= p_2\left[\bar{F}(p_2)\Prob{\nopriceconsiderationset = \{2\}} + \Prob{\nopriceconsiderationset = \{1,2\}}\left(\bar{F}(p_2) - \bar{F}(p_1)\npaplus{1}{2}\right)\right] \notag \\ &= p_2\left[\bar{F}(p_2)\left(\Prob{\nopriceconsiderationset = \{2\}} + \Prob{\nopriceconsiderationset = \{1,2\}}\right) - \bar{F}(p_1)\npaplus{1}{2}\Prob{\nopriceconsiderationset = \{1,2\}}\right] \\
\implies g_2(p_1, p_2) &= \bar{F}(p_2)\left(\Prob{\nopriceconsiderationset = \{2\}} + \Prob{\nopriceconsiderationset = \{1,2\}}\right) - \bar{F}(p_1)\npaplus{1}{2}\Prob{\nopriceconsiderationset = \{1,2\}} \notag \\
&\quad -p_2f(p_2)\left(\Prob{\nopriceconsiderationset = \{2\}} + \Prob{\nopriceconsiderationset = \{1,2\}}\right) \notag \\
&= \left(\bar{F}(p_2)-p_2f(p_2)\right)\left(\Prob{\nopriceconsiderationset = \{2\}} + \Prob{\nopriceconsiderationset = \{1,2\}}\right) \notag \\&\qquad - \bar{F}(p_1)\npaplus{1}{2}\Prob{\nopriceconsiderationset = \{1,2\}} \label{eq:mhr2}\\
\implies \lim_{p_2\uparrow p_1}g_2(p_1, p_2) &= \left(\bar{F}(p_1)-p_1f(p_1)\right)\left(\Prob{\nopriceconsiderationset = \{2\}} + \Prob{\nopriceconsiderationset = \{1,2\}}\right) \notag \\&\qquad - \bar{F}(p_1)\npaplus{1}{2}\Prob{\nopriceconsiderationset = \{1,2\}} \label{eq:g2}
\end{align}
Then:
\begin{align*}
g_1(p_1,p_2) - \lim_{p_2\uparrow p_1}g_2(p_1,p_2) &=
\bar{F}(p_1)\npaplus{1}{2}\Prob{\nopriceconsiderationset = \{1,2\}}\\
&\quad+ 
\left(\bar{F}(p_1)-p_1f(p_1)\right)\Bigg(\Prob{\nopriceconsiderationset = \{1\}}-\Prob{\nopriceconsiderationset = \{2\}}\\
&\hspace{5cm}+ (\npaplus{1}{2}-1)\Prob{\nopriceconsiderationset = \{1,2\}}\Bigg).
\end{align*}
Dividing by $\bar{F}(p_1)$, we then have that, for all $p_1 \in (0,\pmax)$:
\begin{align*}
&g_1(p_1,p_2) - \lim_{p_2\uparrow p_1}g_2(p_1,p_2) > 0 \\\iff &\npaplus{1}{2}\Prob{\nopriceconsiderationset = \{1,2\}} > \left(1-\frac{p_1f(p_1)}{\bar{F}(p_1)}\right)\Bigg(\Prob{\nopriceconsiderationset = \{2\}}-\Prob{\nopriceconsiderationset = \{1\}} \\ &\hspace{10cm}+ (1-\npaplus{1}{2})\Prob{\nopriceconsiderationset = \{1,2\}}\Bigg).
\end{align*}

Suppose first that $\Prob{\nopriceconsiderationset = \{2\}}-\Prob{\nopriceconsiderationset = \{1\}}+ (1-\npaplus{1}{2})\Prob{\nopriceconsiderationset = \{1,2\}} \geq 0$. Since $F$ is MHR, $1-\frac{p_1f(p_1)}{\bar{F}(p_1)}$ is non-increasing. Thus, the maximum of the right-hand side is achieved at $p_1 = 0$. We must then have:
\begin{align*}
\npaplus{1}{2}\Prob{\nopriceconsiderationset = \{1,2\}} > \Prob{\nopriceconsiderationset = \{2\}}-\Prob{\nopriceconsiderationset = \{1\}}+ (1-\npaplus{1}{2})\Prob{\nopriceconsiderationset = \{1,2\}}
\end{align*}
for seller 1 to gradient-dominate seller 2.

Suppose now that $\Prob{\nopriceconsiderationset = \{2\}}-\Prob{\nopriceconsiderationset = \{1\}}+ (1-\npaplus{1}{2})\Prob{\nopriceconsiderationset = \{1,2\}} < 0$. Then, again by the MHR assumption, the supremum of the right-hand side is attained at $p_1 = \bar{v}$, with $$\lim_{p_1\uparrow\bar{v}} \left(1-\frac{p_1f(p_1)}{\bar{F}(p_1)}\right)\Bigg(\Prob{\nopriceconsiderationset = \{2\}}-\Prob{\nopriceconsiderationset = \{1\}}+ (1-\npaplus{1}{2})\Prob{\nopriceconsiderationset = \{1,2\}}\Bigg) = +\infty.$$
Thus, there exist values of $p_1\in(0,\bar{v})$ for which $g_1(p_1,p_2) \leq \lim_{p_2\uparrow p_1}g_2(p_1,p_2)$, and seller 1 does not gradient-dominate seller 2.

To establish that there exists a LNE $\bp^*$ under which $p_1^* > p_2^*$, it suffices to show that each seller's undercutting revenue curve is unimodal for price vectors $\bp$ such that $p_1 > p_2$. Inspecting \eqref{eq:g1} and \eqref{eq:mhr2}, unimodality follows from the MHR assumption. One can then use the same arguments as those used to established \Cref{prop:gd-implies-lne-existence-uniqueness} for these price configurations and obtain the result.
\hfill\Halmos
\end{proof}

\subsection{Proof of \Cref{ex:vop-not-lne}}\label{apx:vop-not-lne}

\begin{proof}
{Consider two sellers selling an identical {item}, with the only attributes being price and brand. All customers form their consideration set $\mathcal{N}_c$ based only on whether their willingness-to-pay $w_c$ exceeds the item's price. Moreover, customers are a mixture of loyal customers and greedy; they are partitioned equally into 3 types:
\begin{itemize}
\item Type 1 customers: rank brand over price and prefer {seller 1 to seller 2};
\item Type 2 customers: rank brand over price and prefer {seller 2 to seller 1};
\item Type 3 customers: rank price over brand and prefer {seller 1 to seller 2}.
\end{itemize}
Finally, suppose the willingness-to-pay $w_c$ of customer is independent of type, and $w_c \sim \textup{Unif}[0,1]$. 

In this case, both {sellers} setting a price of $0.5$ is the only VOP in this game, corresponding to the ordering $\pi = (2,1)$. To see this, consider first the ordering $\pi = (1,2)$ in which seller 1 comes first, assuming seller 2 prices at 0. In this case, the best seller 1 can do is price at 0.5. This is obtained by optimizing only over Type 1 customers who have $w_c\sim\textup{Unif}[0,1]$, since Type 2 and Type 3 customers will both choose seller 2 at a price of 0. For any $p < 0.5$, Type 2 and Type 3 customers choose seller 2; optimizing over these customers over $[0,0.5)$ yields an optimum of $0.5-\epsilon$ for seller 2, for arbitrarily small $\epsilon > 0$. If seller 2 prices at 0.5, on the other hand, it loses all Type 3 customers. Hence, the supremum of seller 2's undercutting revenue is not attained over $[0,0.5]$, and the valid order-price pair corresponding to $\pi = (1,2)$ does not exist.

Consider now the ordering where {seller 2} comes first, assuming {seller 1} prices at $0$, followed by {seller 1} pricing in the undercutting region. In this case, the best {seller 2} can do is to optimize over Type $2$ customers, since Type 1 and 3 customers always choose {seller 1} when it prices at 0. Then, {seller 2} sets a price of $0.5$. For any $p\leq 0.5$, Type 1 and Type 3 customers choose {seller 1}. Optimizing over these customers, {seller 1} has a local optimum of 0.5 over its undercutting region. 

This VOP is not a LNE since {seller 2} stands to gain all Type 3 customers from {seller 1} by marginally undercutting {seller 1} and setting a price of $0.5-\epsilon$ for some small $\epsilon>0$, for the same reason as described above. Since the only VOP is not a LNE, Theorem~\ref{thm:general-LNE} implies that there is no LNE in this game.}\hfill\Halmos
\end{proof}

\section{\Cref{sec:numerical-results} Omitted Proofs}\label{app:lne-and-ogd}

\subsection{Formal description of Distributed Gradient Ascent dynamics}\label{apx:alg-ogd}

We present the distributed gradient ascent algorithm in \cref{alg:ogd}. 

\begin{algorithm}[!h]
\begin{algorithmic}
\Require {Convex set $\cV$, time horizon $T$, learning rate schedule $(\eta_t)_{0\leq t\leq T-1}$, initial price $\bp_0\in\cV$}
\Ensure {Final price $\bp_{T}$}
\For{$t= 0,\ldots,T-1$}
\For{$i\in \cN$}
	\If{$p_{i,t} \neq p_{j,t}$ for all $j \neq i$}
	let $f_i(p_{i,t},\bp_{-i,t}) = g_i(p_{i,t},\bp_{-i,t})$
	\Else 
	\State Let $\cS = \{j \in \cN \setminus \{i\} : p_{i,t} = p_{j,t}\}$. 
	\State Define
	\begin{align}
	\underline{f}_i(p_{i,t},\bp_{-i,t}) &= \lim_{p_i\uparrow p_{i,t}}g_i(p_i ,p_j = p_{j,t} \, \forall \, j \in \cS, \bp_{-\{{\cS},i\},t}), \textrm{ and,}\label{eq:leftgrad}\\
	 \overline{f}_i(p_{i,t},\bp_{-i,t}) &= g_i(p_{i,t}, p_j = 0 \, \forall \, j \in \cS, \bp_{-\{{\cS},i\},t})\label{eq:rightgrad}.
	 \end{align} 
	 \State Then choose $f_i(p_{i,t},p_{i,t}) = \overline{f}_i(p_{i,t},\bp_{-i,t})$ with probability $1/2$ and  $f_i(p_{i,t},\bp_{-i,t}) = \underline{f}_i(p_{i,t},\bp_{-i,t})$ with probability $1/2$.
	\EndIf
	\State $p_{i,t+1} = \left[p_{i,t} + \eta_{t} {f_i(p_{i,t}, \bp_{-i,t})}\right]_{\cV}.$
\EndFor
\EndFor
\end{algorithmic}
	\caption{Distributed gradient ascent (GA)\label{alg:ogd}}
\end{algorithm}


\subsection{Proof of \cref{thm:ogd-converges-to-LNE-n-firms}} \label{ssec:n-firm-convergence}

We introduce some terminology that will be used throughout the proof. If $\bp$ is aligned with an ordering $\pi$ such that \eqref{eq:ordered-gradients-1} holds, we say that $\bp$ is in a {\it right} configuration. If $\bp$ is {\it strictly} ordered but \eqref{eq:ordered-gradients-1} does not hold, $\bp$ is said to be in a {\it wrong} configuration. {Finally, if there exist $i, j \in \cN$ such that $p_i = p_j$, then $\bp$ is in an {\it equal} configuration.}

\Cref{thm:ogd-converges-to-LNE-n-firms} follows from the following facts:
\begin{enumerate}
\item \Cref{asp:grad-dom} and \Cref{asp:ogd-rev-n-firms} imply that the LNE for which the prices are in a right configuration exists.
\item There exists $t^*_1$ such that, if sellers' prices are in a right configuration at $t^*_1$, then they remain in a right configuration for all $t > t^*_1$ (\cref{cor:right-config-remains-right-config-n-firms}).
\item If there exists $t^*_2$ such that sellers' prices are in the same strictly aligned configuration for all $t > t^*_2$, then the LNE aligned with the corresponding ordering exists, and sellers' prices converge to this LNE (\cref{lem:main-convergence-lem}).
\end{enumerate}

We put these facts together to prove \cref{thm:ogd-converges-to-LNE-n-firms}. 
\begin{proof}[Proof of \cref{thm:ogd-converges-to-LNE-n-firms}.] Existence of LNE follows from the fact that strong unimodality implies \Cref{asp:unimodality}. Putting this together with \Cref{asp:grad-dom} and applying \Cref{prop:gd-implies-lne-existence-uniqueness}, we obtain existence.

We now prove convergence to a LNE. Throughout the proof we assume that $t$ is large enough that $\mbox{$p_{i,t+1} = p_{i,t} + \eta_t g_i(p_{i,t},\bp_{-i,t})$ for all $i \in \cN$}$. This is without loss of generality, since by \cref{prop:never-at-boundary} below, the partial gradients of sellers' revenue functions point away from the boundaries of the decision space. We defer the straightforward proof of this fact to Appendix \ref{apx:never-at-boundary}.

\begin{lemma}\label{prop:never-at-boundary}
Under strong unimodality, the following holds:
\begin{enumerate}
\item if  $\nearestminprice > 0$, then $g_i(0, \bp_{-i}^\cS) > 0$;
\item if $\nearestminprice = \pmax$, then $\lim_{p\uparrow \pmax} g_i(p,\bp_{-i}^\cS) < 0$.
\end{enumerate}
\end{lemma}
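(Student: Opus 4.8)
The plan is to derive both statements directly from strong unimodality (Condition~3 of Assumption~\ref{asp:ogd-rev-n-firms}), using only the elementary facts that $R_i(p,\bp_{-i}^{\cS}) = p\,D_i(p,\bp_{-i}^{\cS})\ge 0$ for all $p$, with equality at $p=0$; that $R_i(\cdot,\bp_{-i}^{\cS})$ is continuously differentiable on $[0,\nearestminprice)$ with derivative $g_i(\cdot,\bp_{-i}^{\cS})$ (Conditions~1--2 of Assumption~\ref{asp:ogd-rev-n-firms}); and that the willingness-to-pay distribution is atomless (a standing assumption).

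For Part~1, I would first observe that $\min\{p_i^*(\bp_{-i}^{\cS}),\nearestminprice\}>0$ whenever $\nearestminprice>0$: if the undercutting maximizer is not attained then by convention $p_i^*(\bp_{-i}^{\cS})=\infty$ and the minimum equals $\nearestminprice>0$; and if it is attained but equals $0$, then $\max_{p\in[0,\nearestminprice)}R_i(p,\bp_{-i}^{\cS})=R_i(0,\bp_{-i}^{\cS})=0$, which by nonnegativity of revenue forces $R_i\equiv 0$ and hence $g_i\equiv 0$ on $[0,\nearestminprice)$, contradicting strong unimodality. Evaluating the strong unimodality bound at $p=0$ then yields $|g_i(0,\bp_{-i}^{\cS})|\ge K\min\{p_i^*(\bp_{-i}^{\cS}),\nearestminprice\}>0$, so $g_i(0,\bp_{-i}^{\cS})\ne 0$. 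Finally $g_i(0,\bp_{-i}^{\cS})<0$ is ruled out, since differentiability at $0$ together with $R_i(0,\bp_{-i}^{\cS})=0$ would give $R_i(p,\bp_{-i}^{\cS})<0$ for small $p>0$, again contradicting nonnegativity; hence $g_i(0,\bp_{-i}^{\cS})>0$.

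For Part~2, assuming $\nearestminprice=\pmax$, I would note that Part~1 applies (since $\pmax>0$), so $g_i(0,\bp_{-i}^{\cS})>0$ and in particular $R_i(p_0,\bp_{-i}^{\cS})>0$ for some $p_0\in(0,\pmax)$. Since any customer purchasing from $i$ at price $p$ must have $w_c\ge p$, we have $D_i(p,\bp_{-i}^{\cS})\le \Pr(w_c\ge p)$, and the right-hand side tends to $\Pr(w_c\ge\pmax)=0$ as $p\uparrow\pmax$ (using $w_c\le\pmax$ and atomlessness), so $R_i(p,\bp_{-i}^{\cS})\to 0$. Consequently $R_i(\cdot,\bp_{-i}^{\cS})$ must strictly decrease somewhere on $(p_0,\pmax)$, so by the mean value theorem there is $p_1\in(p_0,\pmax)$ with $g_i(p_1,\bp_{-i}^{\cS})<0$; unimodality then forces $g_i(p,\bp_{-i}^{\cS})<0$ for all $p\in(p_1,\pmax)$. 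Hence the undercutting maximizer is attained at some $p_i^*(\bp_{-i}^{\cS})\in(0,p_1]\subset(0,\pmax)$, and for $p\in(p_1,\pmax)$ strong unimodality gives $g_i(p,\bp_{-i}^{\cS})=-|g_i(p,\bp_{-i}^{\cS})|\le -K\bigl(p-p_i^*(\bp_{-i}^{\cS})\bigr)$; letting $p\uparrow\pmax$ shows $g_i(p,\bp_{-i}^{\cS})$ is eventually bounded above by $-K(\pmax-p_i^*(\bp_{-i}^{\cS}))<0$, which yields $\lim_{p\uparrow\pmax}g_i(p,\bp_{-i}^{\cS})<0$ (only negativity of this one-sided limit is needed downstream).

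The argument is routine apart from one point requiring care in Part~2: ruling out the degenerate possibility that $R_i(\cdot,\bp_{-i}^{\cS})$ is increasing all the way up to $\pmax$ (equivalently $p_i^*(\bp_{-i}^{\cS})=\infty$), in which case the strong unimodality bound degenerates at $\pmax$ and gives nothing. This is precisely where atomlessness of the willingness-to-pay distribution enters: without it one could have every customer with $w_c=\pmax$, yielding demand identically $1$ and revenue strictly increasing on $[0,\pmax)$.
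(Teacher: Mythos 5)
Your proof is correct and follows essentially the same route as the paper: rule out $p_i^*(\bp_{-i}^{\cS})=0$ via the zero-revenue contradiction, apply the strong unimodality inequality at $p=0$ (resp.\ near $\pmax$), and for Part~2 use that revenue vanishes as $p\uparrow\pmax$ to place the undercutting maximizer in the interior. You merely spell out two steps the paper leaves implicit --- the sign determination of $g_i(0,\bp_{-i}^{\cS})$ from nonnegativity of revenue, and the attainment of the supremum via $R_i(p,\bp_{-i}^{\cS})\to 0$ --- both of which are correct and in fact make the argument slightly tighter than the paper's.
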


 Consider $t > t^*_1$. We consider two cases.

\medskip 
\noindent\textbf{Case 1: Sellers' prices are in a right configuration at time $t$.}  By \cref{cor:right-config-remains-right-config-n-firms}, they remain in the right configuration for all $t' > t$. Then, by \cref{lem:main-convergence-lem}, the LNE for which sellers' prices are in this right configuration exists, and they converge to this LNE.

\medskip

\noindent\textbf{Case 2: Sellers' prices are in a wrong configuration at time $t$.} In this case, either:
\begin{enumerate}
    \item Sellers' prices remain in this wrong configuration for all $t' > t$. Then, by \cref{lem:main-convergence-lem} the LNE corresponding to this wrong configuration exists, and the sellers' prices converge to this LNE.
    \item There exist $t' > t$ and $i, j \in \cN$ such that $p_{i,t'} < p_{j,t'}$ and $p_{i,t'+1} > p_{j,t'+1}$. In this sub-case, either:
    \begin{enumerate}[(a)]
\item $\bp_{t'+1}$ is in a right configuration, in which case we have established that sellers' prices will remain in this right configuration and converge to the corresponding LNE. 
\item $\bp_{t'+1}$ is in another wrong configuration. Then, 
by \cref{lem:never-cross-back-over}, $p_{i,t''} > p_{j,t''}$ for all $t'' > t'$. As a result, sellers will never re-enter the wrong configuration they just exited.
\end{enumerate}
    \item There exist $t' > t$ and $i,j \in \cN$ such that $p_{i,t'} = p_{j,t'} =: p$. Again, by \cref{lem:correctly-ordered-stay-correctly-ordered} it must be that seller $i$ gradient-dominates seller $j$. Due to the randomized gradient oracle at equality (see \cref{alg:ogd}), the following events occur with strictly positive probability:
    \begin{enumerate}[(i)]
    \item sellers $i$ and $j$ observe different gradients,
    \item $i$ observes $g_i(p_i = p, p_j = 0, \bp_{-\{i,j\},t}^{\cS})$, and $j$ observes $\mbox{$\lim_{q\uparrow p} g_j(p_i = p, p_j = q,\bp_{-\{i,j\},t}^{\cS})$}$, for some $\cS \subseteq \cN \setminus \{i,j\}$.
    \end{enumerate}
Thus, with probability 1 there exists $t''  > t'$ such that $p_{i,t''} \neq p_{j,t''}$, and one of three facts holds at $t''$:
\begin{enumerate}[(a)]
\item sellers' prices are in a right configuration (Case 1),
\item sellers' prices are in a wrong configuration and remain in this wrong configuration (Case 2.1), or
\item sellers $i$ and $j$ become correctly ordered, and will remain correctly ordered from then on (Case 2.2(b)).
\end{enumerate}
\end{enumerate}

\medskip

\noindent\textbf{Case 3: Sellers' prices are in an equal configuration at time $t$.} See Case 2.3.

Thus, in the worst case, sellers keep exiting and entering into successive wrong configurations. However, since there are finitely many configurations and sellers never re-enter a wrong configuration they have exited, it must be that sellers' prices eventually enter a right configuration and converge to the corresponding LNE. 
\end{proof}

We now show each of these key component results.

\begin{lemma}\label{lem:correctly-ordered-stay-correctly-ordered}
There exists large enough $t^*_1 \in \mathbb{N}$ such that, for all $i, j \in \cN$, if sellers $i$ and $j$ are {correctly ordered} at time $t^*_1$, they remain correctly ordered for all $t \geq t^*_1$.
\end{lemma}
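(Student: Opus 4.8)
The plan is to reduce to a one-step claim and then induct: I will show that, for $t^*_1$ chosen large enough (depending only on the Lipschitz constant $L$ of Condition~2 of \Cref{asp:ogd-rev-n-firms}, the gradient bound $G$ of Condition~1, and the stepsize schedule), if some pair $i,j$ is correctly ordered at a time $t\ge t^*_1$ then it is correctly ordered at $t+1$; iterating from $t^*_1$ gives the lemma. By \Cref{asp:grad-dom}, for each pair exactly one seller dominates the other, so relabel so that $i$ gradient-dominates $j$; being correctly ordered at $t$ then means $p_{i,t}>p_{j,t}$. Exactly as in the proof of \Cref{thm:ogd-converges-to-LNE-n-firms}, \Cref{prop:never-at-boundary} lets us take $t^*_1$ large enough that for $t\ge t^*_1$ all iterates are interior and the projection in \Cref{alg:ogd} is inactive, so $p_{i,t+1}=p_{i,t}+\eta_t\,g_i(p_{i,t},\bp_{-i,t})$ and $p_{j,t+1}=p_{j,t}+\eta_t\,g_j(p_{j,t},\bp_{-j,t})$ (if $p_{i,t}$ or $p_{j,t}$ happens to coincide with a third seller's price, replace the relevant partial gradient by the realized one-sided gradient \eqref{eq:leftgrad}--\eqref{eq:rightgrad}; by the pseudo-competitive property this changes none of the estimates below).

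The heart of the argument is the gap recursion
\begin{equation*}
p_{i,t+1}-p_{j,t+1}=(p_{i,t}-p_{j,t})+\eta_t\big(g_i(p_{i,t},\bp_{-i,t})-g_j(p_{j,t},\bp_{-j,t})\big),
\end{equation*}
together with an upper bound on the ``crossing speed'' $g_j(p_{j,t},\bp_{-j,t})-g_i(p_{i,t},\bp_{-i,t})$. Since $p_{j,t}<p_{i,t}$, the pseudo-competitive property (\Cref{prop:clc-choice-implies-psg}) gives $g_i(p_{i,t},\bp_{-i,t})=g_i(p_i=p_{i,t},\,p_j=0,\,\bp_{-\{i,j\},t})$, and gradient dominance (\Cref{asp:grad-dom}, applied with $p=p_{i,t}\in(0,\bar v)$) yields $g_i(p_{i,t},\bp_{-i,t})>\ell$, where $\ell:=\lim_{q\uparrow p_{i,t}}g_j(p_i=p_{i,t},\,p_j=q,\,\bp_{-\{i,j\},t})$. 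If $p_{i,t}-p_{j,t}\ge 2G\eta_t$, Condition~1 alone already gives $p_{i,t+1}-p_{j,t+1}\ge(p_{i,t}-p_{j,t})-2G\eta_t>0$. Otherwise $p_{i,t}-p_{j,t}<2G\eta_t$; applying the Lipschitz estimate of Condition~2 along the segment obtained by raising $j$'s price from $p_{j,t}$ up to $p_{i,t}$ (all other prices frozen) gives $g_j(p_{j,t},\bp_{-j,t})-\ell\le L(p_{i,t}-p_{j,t})$, hence $g_i(p_{i,t},\bp_{-i,t})-g_j(p_{j,t},\bp_{-j,t})>-L(p_{i,t}-p_{j,t})$ and
\begin{equation*}
p_{i,t+1}-p_{j,t+1}>(p_{i,t}-p_{j,t})\,(1-\eta_t L)>0
\end{equation*}
once $\eta_t<1/L$. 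Since $\sum_t\eta_t^2<\infty$ forces $\eta_t\to0$, a single $t^*_1$ works simultaneously for all pairs.

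I expect the main obstacle to be the Lipschitz step in the small-gap case: Condition~2 only guarantees Lipschitzness of $g_j$ across price vectors aligned with a common strict ordering, whereas raising $p_j$ toward $p_{i,t}$ can sweep past the price of an intervening seller $k$, across which $g_j$ genuinely jumps (the undercutting discontinuity). The fix is to observe that when $p_{i,t}-p_{j,t}<2G\eta_t$ any such $k$ is itself pinned within $2G\eta_t$ of both $p_{j,t}$ and $p_{i,t}$, so one can chain the Lipschitz bound piece by piece over the $O(\eta_t)$-wide cluster of sellers lying between $j$ and $i$, controlling each jump via the pseudo-competitive property; the same bookkeeping covers the case where $p_{i,t}$ or $p_{j,t}$ ties a third price and the randomized one-sided oracle of \Cref{alg:ogd} is invoked. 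Given these estimates, everything else — the gap recursion, the reduction to interior iterates, and the final contraction — is routine.
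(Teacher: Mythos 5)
Your proof is correct and follows essentially the same route as the paper's: the identical gap recursion $p_{i,t+1}-p_{j,t+1}=(p_{i,t}-p_{j,t})+\eta_t(g_i-g_j)$, a case split on the size of the gap, bounded gradients in the large-gap case, and Lipschitzness plus gradient dominance in the small-gap case. One substantive difference is worth recording. In the small-gap case the paper first extracts a uniform constant $c=\min\{g_i(p_i=p,p_j=0,\bp_{-\{i,j\}}^{\cS})-\lim_{q\uparrow p}g_j(\cdots)\}>0$ over all $\cS$ and all $\bp\in\cV^N$, fixes $\delta<c/(2L)$, and concludes $p_{i,t+1}-p_{j,t+1}\geq (p_{i,t}-p_{j,t})+\eta_t c/2$; the positivity of this $c$ is not immediate, since \Cref{asp:grad-dom} is only a pointwise strict inequality over a continuum (and over the open interval $(0,\bar v)$), and the revenue gradients need not be continuous in $\bp$. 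Your version uses only the pointwise strict inequality $g_i>\ell$ together with the Lipschitz bound to obtain the multiplicative estimate $(p_{i,t}-p_{j,t})(1-\eta_t L)>0$, which sidesteps the uniformity question entirely --- a genuine (if small) improvement. Finally, the obstacle you flag --- that raising $p_j$ toward $p_{i,t}$ may sweep past an intervening seller's price, across which $g_j$ is not covered by the alignment-restricted Lipschitz condition --- is real, but it is equally present and unaddressed in the paper's proof, which quantifies over ``$\epsilon>0$ such that $(p_{i,t}-\epsilon,\bp_{-j,t})$ is aligned with $\bp_t$'' and then lets $\epsilon\to 0$, even though no such $\epsilon$ exists when a third seller prices strictly between $p_{j,t}$ and $p_{i,t}$. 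Be aware that your proposed repair is not obviously sound as stated: the pseudo-competitive property controls the \emph{higher}-priced seller's revenue under changes to \emph{lower} prices, and says nothing about the jump in $g_j$ as $p_j$ crosses an intervening price from below. So treat this as a shared loose end of both arguments rather than a defect unique to yours.
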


\begin{proof}
Without loss of generality, suppose $i$ gradient-dominates $j$. Let $c > 0$ be defined as follows:
$$ c = \min \left\{g_i(p_i = p, p_j = 0, \bp_{-\{i,j\}}^{\cS}) - \lim_{q\uparrow p} g_j(p_i = p, p_j = q,\bp_{-\{i,j\}}^{\cS}) \, \bigg{|} \, \cS \subseteq \cN \setminus \{i,j\}, \bp \in \cV^{N} \right\}.$$
Fix $0 < \delta < c/(2L)$. We first show that there exists large enough $t$ such that, if $|p_{i,t} - p_{j,t}| \geq \delta$, then $\mbox{$p_{i,t} > p_{j,t} \implies p_{i,t+1} > p_{j,t+1}$}$. We have:
\begin{align}\label{eq:i-j-diff}
    p_{i,t+1}-p_{j,t+1} &= \left(p_{i,t} + \eta_t g_{i}\left(p_{i,t},\bp_{-i,t}\right)\right) - \left(p_{j,t} + \eta_t g_{j}(p_{j,t},\bp_{-j,t})\right) \notag \\
    &= p_{i,t}-p_{j,t} + \eta_t \left(g_{i}\left(p_{i,t},\bp_{-i,t}\right) - g_{j}(p_{j,t},\bp_{-j,t})\right) \\
    &\geq \delta - 2G\eta_t > 0 \notag
\end{align}
for large enough $t$, since $\eta_t$ is decreasing and $\lim_{t\to\infty} \eta_t = 0$ by square-summability of $\eta_t$.  
 
We now argue that, if $|p_{i,t} - p_{j,t}| < \delta$, then $p_{i,t} > p_{j,t} \implies p_{i,t+1} > p_{j,t+1}$. By smoothness, for all $\epsilon > 0$ such that $(p_{i,t}-\epsilon,\bp_{-j,t})$ is aligned with $\bp_t$,
$g_{j}(p_{j,t},\bp_{-j,t}) \leq g_{j}(p_{i,t}-\epsilon,\bp_{-j,t}) + L|p_{j,t} - (p_{i,t}-\epsilon)|.$
Using $|p_{i,t}-p_{j,t}| < \delta$ and taking the limit as $\epsilon$ approaches 0, we obtain
$g_{j}(p_{j,t},\bp_{-j,t}) \leq  \lim_{p\uparrow p_{i,t}}g_{j}(p,\bp_{-j,t}) + L\delta$.
Plugging this into \eqref{eq:i-j-diff}, we have:
\begin{align*}
p_{i,t+1} - p_{j,t+1} &\geq (p_{i,t} - p_{j,t}) + \eta_t \left(g_{i}\left(p_{i,t},\bp_{-i,t}\right) - \lim_{p\uparrow p_{i,t}} g_{j}(p_{i,t},\bp_{-j,t}) - L\delta \right) \\
&\geq (p_{i,t} - p_{j,t}) + \eta_t \cdot c/2 > p_{i,t}-p_{j,t} > 0.
\end{align*}
%
Taking $t^*_1 = \min\{t \, \mid \, \delta - 2G\eta_t > 0 \, \forall \, i, j \in \cN\}$, we obtain the claim.\hfill\Halmos
\end{proof}

The following corollaries emerge as a direct consequence of \cref{lem:correctly-ordered-stay-correctly-ordered}. 
\begin{corollary}\label{cor:right-config-remains-right-config-n-firms}
If $\bp_t$ is in a right configuration for some $t \geq t_1^*$, then, for all $t' > t$, $\bp_{t'}$ remains in this configuration.
\end{corollary}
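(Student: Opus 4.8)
The plan is to deduce this immediately from \cref{lem:correctly-ordered-stay-correctly-ordered}. Let $\pi$ be the ordering with which $\bp_t$ is aligned and for which \eqref{eq:ordered-gradients-1} holds. The first step is to observe that, in a right configuration, every pair of sellers consecutive in $\pi$ is correctly ordered at time $t$. Indeed, since $\bp_t$ is aligned with $\pi$ we already have $p_{\pi(l),t} > p_{\pi(l+1),t}$ for every $l \in [N-1]$; and instantiating \eqref{eq:ordered-gradients-1} at a price vector whose sellers in $\pi(1:l-1)$ are held at their prices in $\bp_t$ and whose remaining sellers are set to $0$ yields exactly the gradient-dominance inequality \eqref{eq:i-dominates-j} with $i = \pi(l)$ and $j = \pi(l+1)$ (for that particular value of $\bp_{-\{i,j\}}$). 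By the dichotomy in \Cref{asp:grad-dom} --- for any pair, either one seller gradient-dominates the other, uniformly over configurations --- this rules out $\pi(l+1)$ gradient-dominating $\pi(l)$, so $\pi(l)$ gradient-dominates $\pi(l+1)$, and the pair is correctly ordered.

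Next I would apply \cref{lem:correctly-ordered-stay-correctly-ordered} to each of these $N-1$ consecutive pairs: since $t \ge t_1^*$, each pair remains correctly ordered for all $t' \ge t$, and in particular $p_{\pi(l),t'} > p_{\pi(l+1),t'}$ for every $l$ and every $t' \ge t$. Chaining these strict inequalities gives $p_{\pi(1),t'} > p_{\pi(2),t'} > \cdots > p_{\pi(N),t'}$, i.e.\ $\bp_{t'}$ is again aligned with $\pi$. Since \eqref{eq:ordered-gradients-1} is a property of $\pi$ that holds for every price vector aligned with $\pi$, this shows $\bp_{t'}$ remains in the same right configuration, completing the proof.

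There is no serious obstacle here; the corollary is essentially a repackaging of \cref{lem:correctly-ordered-stay-correctly-ordered}. The only two points that warrant a sentence of justification are (i) translating the statement ``$\bp_t$ is in a right configuration'' into the pairwise ``correctly ordered'' hypothesis needed by the lemma --- which is where \eqref{eq:ordered-gradients-1} and the dichotomy of \Cref{asp:grad-dom} are used --- and (ii) noting that preserving every \emph{consecutive} pair in $\pi$ already suffices to preserve the whole ordering, so one does not need gradient dominance to be transitive along non-consecutive pairs.
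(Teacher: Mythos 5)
Your proposal follows the route the paper intends: the corollary is stated there without proof as a ``direct consequence'' of \cref{lem:correctly-ordered-stay-correctly-ordered}, and your reduction --- show that each pair of consecutive sellers in $\pi$ is correctly ordered at time $t$, apply the lemma to each such pair, and chain the resulting strict inequalities to recover alignment with $\pi$ at every later time --- is exactly how that one-liner is meant to be unpacked. Your point (ii), that preserving consecutive pairs suffices to preserve the whole ordering, is also correct.

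The one step that does not follow as written is the claim that instantiating \eqref{eq:ordered-gradients-1} ``rules out $\pi(l+1)$ gradient-dominating $\pi(l)$.'' The inequality you extract is an instance of \eqref{eq:i-dominates-j} with $i=\pi(l)$, $j=\pi(l+1)$, i.e.\ a configuration with $p_j=0$ and $p_i=p$; the statement ``$\pi(l+1)$ gradient-dominates $\pi(l)$'' quantifies over configurations with the roles reversed ($p_i=0$, $p_j=p$), so the two are not formally contradictory, and a single instance of the former cannot refute the universally quantified latter. The dichotomy in \Cref{asp:grad-dom} guarantees that at least one direction holds uniformly, but it does not tell you which one from a sample inequality in the other direction. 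The clean repair is to read ``right configuration'' as alignment with the ordering produced by \cref{prop:grad-dom-implication}, whose swapping procedure terminates precisely when every consecutive pair satisfies ``$\pi(l)$ gradient-dominates $\pi(l+1)$''; with that reading the pairwise hypothesis of \cref{lem:correctly-ordered-stay-correctly-ordered} is available by construction, and the remainder of your argument goes through unchanged. (This looseness is arguably inherited from the paper's own definition of a right configuration via \eqref{eq:ordered-gradients-1}, which is formally weaker than pairwise gradient dominance of consecutive sellers.)
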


\begin{corollary}\label{lem:never-cross-back-over}
Fix $t \geq t_1^*$. If sellers $i$ and $j$ were incorrectly ordered at time $t$ but became correctly ordered at time $t+1$, then they remain correctly ordered for all $t' \geq t+1$.
\end{corollary}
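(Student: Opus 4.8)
The plan is to obtain the corollary directly from (the proof of) \cref{lem:correctly-ordered-stay-correctly-ordered}, rather than re-running any estimates. The key observation is that the argument proving that lemma establishes something slightly stronger than its statement literally asserts: it produces a \emph{single} threshold $t_1^*$ such that, for \emph{every} time $t \ge t_1^*$ and every pair $i,j \in \cN$ with (say) $i$ gradient-dominating $j$, the one-step implication $p_{i,t} > p_{j,t} \Rightarrow p_{i,t+1} > p_{j,t+1}$ holds. This uniformity over all $t \ge t_1^*$ is transparent from the proof: the only way time enters is through the requirement $\delta - 2G\eta_t > 0$ (and, in the near-tie case, $\eta_t \cdot c/2 > 0$), both of which persist for all larger $t$ because the stepsizes $\eta_t$ are non-increasing with $\eta_t \to 0$ by square-summability.

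\textbf{Key steps.} First I would invoke \cref{asp:grad-dom} to note that exactly one of $i,j$ gradient-dominates the other, so the phrase ``correctly ordered'' is unambiguous at any time; without loss of generality let $i$ gradient-dominate $j$, so the hypothesis of the corollary reads $p_{i,t+1} > p_{j,t+1}$. Next, since $t \ge t_1^*$ we have $t+1 > t_1^*$, so the uniform one-step preservation property recalled above applies at every time $s \ge t+1$. Finally, a one-line induction closes the argument: correctly ordered at $t+1$ forces correctly ordered at $t+2$, which forces correctly ordered at $t+3$, and so on, yielding $p_{i,t'} > p_{j,t'}$ for all $t' \ge t+1$, which is exactly the claim.

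\textbf{Main obstacle.} There is essentially no real obstacle; the substantive content lives entirely in \cref{lem:correctly-ordered-stay-correctly-ordered}. The only point that deserves a sentence of care is to make explicit that the threshold $t_1^*$ furnished by that lemma serves as a valid ``restart point'' at any later time --- i.e.\ that the preservation estimate is time-shift robust --- which, as noted, holds because the estimate depends on $t$ only through the decreasing quantity $\eta_t$. If one prefers to treat the lemma as a black box (stated only at the exact time $t_1^*$), then the small extra step is to re-derive the one-step bound at a generic $t \ge t_1^*$, which is identical verbatim to the computation in the proof of the lemma with $t_1^*$ replaced by $t$.
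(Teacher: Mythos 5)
Your proposal is correct and matches the paper, which presents this corollary as an immediate consequence of \cref{lem:correctly-ordered-stay-correctly-ordered} without further argument: the one-step preservation estimate in that lemma's proof is uniform over all $t \geq t_1^*$, so induction from time $t+1$ gives the claim. Your extra remark about time-shift robustness of the threshold is the right (and only) point needing care, and it holds for exactly the reason you state.
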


We now present our key lemma.
\begin{lemma}\label{lem:main-convergence-lem}
Suppose there exists  $t^* \in \mathbb{N}$ such that $\bp_t$ is aligned with the same ordering $\pi$ for all $t>t^*$. Then, there exists a valid order-price pair to which $\pi$ belongs, and $\bp_t$ converges to the corresponding LNE.
\end{lemma}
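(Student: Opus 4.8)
The plan is to exploit \Cref{prop:clc-choice-implies-psg}: once the ordering is frozen at $\pi$ with strictly separated prices, the gradient-ascent dynamics decouple into a triangular cascade. Indeed, for every $t>t^{*}$ the sellers $\pi(l+1),\dots,\pi(N)$ price strictly below $p_{\pi(l),t}$, so zeroing their prices leaves $\pi(l)$'s revenue unchanged; hence $g_{\pi(l)}(\bp_t)=g_{\pi(l)}\bigl(p_{\pi(l),t},\,\bp^{\pi(1:l-1)}_{-\pi(l),t}\bigr)$, i.e.\ seller $\pi(l)$'s update depends only on its own price and the (higher) prices of $\pi(1),\dots,\pi(l-1)$. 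Note also that ``aligned with $\pi$'' means strict separation, so no ties occur for $t>t^{*}$ and the randomized-gradient clause of \Cref{alg:ogd} is irrelevant; and by \Cref{prop:never-at-boundary} the partial gradients point into $\cV$ near its boundary, so for $t$ large the projection is inactive and $p_{\pi(l),t+1}=p_{\pi(l),t}+\eta_t\,g_{\pi(l)}(\bp_t)$.

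I would then prove, by induction on the rank $l$, that $p_{\pi(l),t}\to p^{*}_{\pi(l)}$, where the $p^{*}_{\pi(l)}$ are given by the valid-order-price recursion ($p^{*}_{\pi(l)}$ the unique maximizer of $R_{\pi(l)}(\,\cdot\,,(\bp^{*})^{\pi(1:l-1)}_{-\pi(l)})$ over $[0,p^{*}_{\pi(l-1)}]$), and moreover that $(\pi,(p^{*}_{\pi(1)},\dots,p^{*}_{\pi(l)}))$ is a partial valid-order-price pair with $\bar{v}>p^{*}_{\pi(1)}>\dots>p^{*}_{\pi(l)}>0$. For the base case $l=1$, seller $\pi(1)$ performs autonomous one-dimensional gradient ascent on $p\mapsto R_{\pi(1)}(p,\bp^{\phi}_{-\pi(1)})$; \Cref{prop:never-at-boundary}, continuity of the gradient (Condition~2 of \Cref{asp:ogd-rev-n-firms}), and compactness of $[0,\bar{v}]$ show this curve attains its maximum at an interior point $p^{*}_{\pi(1)}\in(0,\bar{v})$, which strong unimodality (Condition~3) makes unique and which yields $|g_{\pi(1)}(p,\bp^{\phi}_{-\pi(1)})|\ge K|p-p^{*}_{\pi(1)}|$ with the gradient pointing toward $p^{*}_{\pi(1)}$; bounded gradients, this lower bound, and the stepsize conditions $\sum_t\eta_t=\infty$, $\sum_t\eta_t^{2}<\infty$ then give $p_{\pi(1),t}\to p^{*}_{\pi(1)}$ by a standard one-dimensional gradient-ascent argument.

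The inductive step is the crux, and the step I expect to be the main obstacle. Assume $p_{\pi(l'),t}\to p^{*}_{\pi(l')}$ for $l'<l$, with the first $l-1$ coordinates forming a partial VOP. I would first rule out that the supremum of $R_{\pi(l)}(\,\cdot\,,(\bp^{*})^{\pi(1:l-1)}_{-\pi(l)})$ over $[0,p^{*}_{\pi(l-1)})$ fails to be attained: were it not attained, Condition~4 of \Cref{asp:ogd-rev-n-firms} (applied at rank $l-1$, legitimate since the first $l-1$ coordinates form a partial VOP) gives $\lim_{p\uparrow p^{*}_{\pi(l-1)}}g_{\pi(l)}(\,\cdot\,)=\gamma>0$, and strong unimodality (with the $p^{*}_{\pi(l)}=\infty$ convention) makes $g_{\pi(l)}$ bounded below by $K(p^{*}_{\pi(l-1)}-p)>0$ throughout the region; then, by continuity of the gradients and convergence of the higher prices, for $t$ large $g_{\pi(l)}(\bp_t)$ is uniformly positive, so $p_{\pi(l),t}$ climbs monotonically into a small left-neighborhood of $p_{\pi(l-1),t}$, where $g_{\pi(l)}(\bp_t)\ge\gamma/2$ while $g_{\pi(l-1)}(\bp_t)\to0$ (since $p^{*}_{\pi(l-1)}$ is an interior maximizer and gradients are continuous); comparing the two updates, the gap $p_{\pi(l-1),t}-p_{\pi(l),t}$ decreases by $\Omega(\eta_t)$ at each step, and since $\sum_t\eta_t=\infty$ seller $\pi(l)$ eventually overtakes seller $\pi(l-1)$, contradicting the frozen ordering. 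Hence the maximizer $p^{*}_{\pi(l)}$ is attained in $(0,p^{*}_{\pi(l-1)})$, and I would finish with a perturbed (moving-landscape) version of the base-case argument: by continuity and convergence of the higher prices, the landscapes $g_{\pi(l)}(\,\cdot\,,\bp^{\pi(1:l-1)}_{-\pi(l),t})$ converge uniformly on compact subsets of the undercutting region to $g_{\pi(l)}(\,\cdot\,,(\bp^{*})^{\pi(1:l-1)}_{-\pi(l)})$, which is strongly unimodal about $p^{*}_{\pi(l)}$, so the usual two-sided ``if the iterate is $\varepsilon$ away from the target its gradient pushes it back, up to an $O(\eta_t)$ overshoot'' estimate yields $p_{\pi(l),t}\to p^{*}_{\pi(l)}$.

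Finally I would assemble the pieces: $\bp^{*}=(p^{*}_{\pi(1)},\dots,p^{*}_{\pi(N)})$ has strictly decreasing, strictly positive coordinates and, by construction, each $p^{*}_{\pi(l)}$ is the unique maximizer of $\pi(l)$'s undercutting revenue curve relative to $\pi(1{:}l-1)$, so $(\pi,\bp^{*})$ is a valid order-price pair to which $\pi$ belongs, and $\bp_t\to\bp^{*}$. Since the prices of $\bp^{*}$ are distinct, \Cref{prop:clc-choice-implies-psg} shows that on a small neighborhood of $p^{*}_{\pi(l)}$ the curves $R_{\pi(l)}(\,\cdot\,,\bp^{*}_{-\pi(l)})$ and $R_{\pi(l)}(\,\cdot\,,(\bp^{*})^{\pi(1:l-1)}_{-\pi(l)})$ coincide (the same sellers price weakly above, at the same prices), so $p^{*}_{\pi(l)}$ is a local best response to $\bp^{*}_{-\pi(l)}$ and $\bp^{*}$ is a LNE, as claimed. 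Besides the overtaking argument in the inductive step, the only other point needing care is the moving-landscape convergence, which is routine but must account for the fact that the undercutting region $[0,p_{\pi(l-1),t})$ is itself $t$-dependent (it converges to $[0,p^{*}_{\pi(l-1)})$).
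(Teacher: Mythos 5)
Your proposal is correct and follows essentially the same route as the paper's proof: induction on rank, with the base case being autonomous gradient ascent for the top seller, the inductive step first ruling out non-attainment of the undercutting supremum via Condition~4 of \Cref{asp:ogd-rev-n-firms} together with an overtaking contradiction against the frozen ordering, and then a perturbed (Lyapunov-style) gradient-ascent argument using strong unimodality, smoothness, and the stepsize conditions. Your explicit closing step — that the limiting VOP has distinct prices and is therefore an LNE by the pseudo-competitive property — is a detail the paper's proof of this lemma leaves implicit, but it matches the paper's own observation elsewhere and is not a different approach.
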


\begin{proof}
Let $l^*$ be as follows:
\begin{align*}
l^* = \sup\left\{l \, \big{|} \, \exists \text{ a valid order-price pair } (\widetilde{\pi}, \widetilde{\bp}) \text{ such that } \widetilde{\pi}(l') = \pi(l')\quad \, \forall \, l' \leq l\right\}.
\end{align*}
If $l^*$ fails to exist, we let $l^* = 1$. We prove that $\lim\limits_{t\to\infty} p_{\pi(l'),t} = \widetilde{p}_{\pi(l')}$ for all $l' \leq l^*$ by induction.

\medskip

\noindent\textbf{Base case: $l' = 1$.} This simply follows from the fact that for all $t > t^*$, Firm $\pi(1)$'s revenue is independent of $\bp_{-\pi(1),t}$, and thus firm $\pi(1)$ is simply running gradient ascent {over $M_i(p)$, which under \Cref{asp:ogd-rev-n-firms} converges to the unique maximizer $p_{\pi(1)}^M$}~\citep{broadie2011general}. 

\medskip

\noindent\textbf{Inductive step.} Suppose $\lim\limits_{t\to\infty} p_{\pi(l'),t} = \widetilde{p}_{\pi(l')}$ for all $l' \leq l$, with $l < l^*$.  We first argue that $\sup_{p < p_{\pi(l)}} R_{\pi(l+1)}(p, \widetilde{\bp}_{-\pi(l+1)}^{\pi(1:l)})$ is attained. {By continuity of the partial gradients, Condition 4 of \Cref{asp:ogd-rev-n-firms} implies there exists $c > 0$ such that $\lim_{p \uparrow \widetilde{p}_{\pi(l)}}g_{\pi(l+1)}\left(p, \widetilde{\bp}_{-\pi(l+1)}^{\pi(1:l)}\right) \geq c$.
}

Let $\delta >0$ be such that 
\begin{align}\label{eq:delta-def}
\delta < \frac12 \min \left\{\frac{c}{2\left(1+clL/K+lL\right)},\min_{l'<N}\left\{ |\widetilde{p}_{\pi(l')} - \widetilde{p}_{\pi(l'+1)}|\right\}\right\}.
\end{align}
Define $t_{\delta} > t^*$ such that $|\widetilde{p}_{\pi(l')} - p_{\pi(l'),t}| < \delta$ for all $t > t_{\delta}$, $l' \leq l$. 
Consider $t \geq t_\delta$, and suppose $\mbox{$p_{\pi(l+1),t} < \widetilde{p}_{\pi(l)}-\epsilon$}$, for $\epsilon \geq \frac{c+lL\delta}{K}$. Since $\bp_t$ and $\widetilde{\bp}$ are both aligned with $\pi$ for all $l' \leq l$ and $t \geq t_\delta$, we have:
{
\begin{align*}
g_{\pi(l+1)}(p_{\pi(l+1),t}, \bp_{-\pi(l+1),t}^{\pi(1:l)}) \geq g_{\pi(l+1)}(p_{\pi(l+1),t},\widetilde{\bp}_{-\pi(l+1)}^{\pi(1:l)})-l\cdot L\cdot\delta
&\stackrel{(a)}\geq K |p_{\pi(l+1),t}-\widetilde{p}_{\pi(l)}| - l \cdot L \cdot \delta \\
&\geq K\epsilon - l\cdot L \cdot \delta 
\geq c,
\end{align*}
}
where $(a)$ follows from strong unimodality.

Consider now the case where $p_{\pi(l+1),t} \geq \widetilde{p}_{\pi(l)}-\epsilon$, where $\epsilon < \frac{clL\delta}{K}$. Since $\mbox{$p_{\pi(l+1),t} < p_{\pi(l),t} < \widetilde{p}_{\pi(l)} + \delta$}$, we have $\mbox{$|p_{\pi(l+1),t}-\widetilde{p}_{\pi(l)}| \leq \max\{\delta,\frac{clL\delta}{K}\}$}$. Then, by smoothness and continuity:
\begin{align*}
g_{\pi(l+1)}(p_{\pi(l+1),t}, \bp_{-\pi(l+1),t}^{\pi(1:l)}) \geq \lim_{p\uparrow \widetilde{p}_{\pi(l)}} g_{\pi(l+1)}(p,\widetilde{\bp}_{-\pi(l+1)}^{\pi(1:l)})-\delta\left(\max\{1,\frac{clL}{K}\}+l\cdot L\right) \geq c/2.
\end{align*}

By non-summability of $(\eta_t)$, then, there exists $t' > t_\delta$ such that $p_{\pi(l+1),t'} > \widetilde{p}_{\pi(l)} + \delta > p_{\pi(l),t'}$, a contradiction.

Having established that $\sup_{p < \widetilde{p}_{\pi(l)}} R_{\pi(l+1)}(p, \widetilde{\bp}_{-\pi(l+1)}^{\pi(1:l)})$ is attained, we now prove $\mbox{$\lim\limits_{t\to\infty} p_{\pi(l+1),t} = \widetilde{p}_{\pi(l+1)}$}$. Let $d_t = (\widetilde{p}_{\pi(l+1)} - p_{\pi(l+1),t})^2$. We have:
\begin{align*}
d_{t+1} &= d_t - 2(\widetilde{p}_{\pi(l+1)} - p_{\pi(l+1),t})(p_{\pi(l+1),t+1}-p_{\pi(l+1),t}) + (p_{\pi(l+1),t+1}-p_{\pi(l+1),t})^2
\end{align*}

Plugging in the fact that $p_{\pi(l+1),t+1} = p_{\pi(l+1),t} + \eta_t g_{\pi(l+1)}(p_{\pi(l+1),t},\bp_{-\pi(l+1),t})$, we obtain:
\begin{align}\label{eq:induc-n-firms}
d_{t+1}&= d_t- 2\eta_tg_{\pi(l+1)}(p_{\pi(l+1),t},\bp_{-\pi(l+1),t})(\widetilde{p}_{\pi(l+1)} - p_{\pi(l+1),t}) + \eta_t^2 g_{\pi(l+1)}(p_{\pi(l+1),t},\bp_{-\pi(l+1),t})^2 \notag \\
&\leq d_t- 2\eta_tg_{\pi(l+1)}(p_{\pi(l+1),t},\bp_{-\pi(l+1),t})(\widetilde{p}_{\pi(l+1)} - p_{\pi(l+1),t}) + \eta_t^2 G^2.
\end{align}

Fix $\epsilon > 0$, and suppose $d_t \geq \epsilon$. We show that, for large enough $t$, $d_{t+1} < d_t - \gamma_t$, with $\gamma_t > 0$ and $\lim_{T\to\infty}\sum_{t=1}^T \gamma_t = \infty$. 

By the inductive hypothesis, $\lim\limits_{t\to\infty} p_{\pi(l'),t} = \widetilde{p}_{{\pi}(l')}$ for all $l' \leq l$. 
{Let $\delta > 0$ be such that:
\begin{align*}
\delta < \frac12 \min\left\{\frac{K\epsilon}{lL\pmax} \, , \, \min_{l'< N}|\widetilde{p}_{\pi(l')}-\widetilde{p}_{\pi(l'+1)}|,-\frac{1}{(l+1)L}\lim_{p\uparrow \widetilde{p}_{\pi(l)}} g_{\pi(l+1)}(p, \widetilde{\bp}_{-\pi(l+1)}^{\pi(1:l)})\right\}.
\end{align*}
}
({The fact that $\sup_{p < \widetilde{p}_{\pi(l)}} R_{\pi(l+1)}(p,\widetilde{\bp}_{-\pi(l+1)}^{\pi(1:l)})$ is attained implies $\lim_{p\uparrow \widetilde{p}_{\pi(l)}} g_{\pi(l+1)}(p, \widetilde{\bp}_{-\pi(l+1)}^{\pi(1:l)}) = -c$, for some $c > 0$, and thus $\delta$ necessarily exists.})

Let $t_\delta > t^*$ be such that, for all $t > t_\delta$:
(i) $|p_{\pi(l'),t} - \widetilde{p}_{{\pi}(l')}| < \delta$ for all $l' \leq l$, and
(ii) $\eta_t < \delta / G$. ({Such a $t$ exists by the summability assumptions.})
{

\cref{lem:bounded-away} establishes that there exists $t' > t_\delta$ such that, for all $t > t'$, $p_{\pi(l+1),t} < \widetilde{p}_{\pi(l)}$. We formally state the lemma below, and defer its proof to the end of this section.

\begin{lemma}\label{lem:bounded-away}
Consider an ordering $\pi$ and rank $l \in [N]$, and the price vector $\widetilde{\bp}$ defined as follows: 
\begin{align*}
\begin{cases}
\widetilde{p}_{\pi(1)} &:= \arg\max_{p} M_{\pi(1)}(p) \\
\widetilde{p}_{\pi(l')} &:= \arg\max_{p < \widetilde{p}_{\pi(l'-1)}}R_{\pi(l')}(p, \widetilde{\bp}_{-\pi(l')}^{\pi(1:l'-1)}) \quad \textrm{for } l' \in \{2,\ldots,l+1\}.
\end{cases}
\end{align*}
(That is, assume each of these local suprema is attained.)
Let $\delta > 0$ be such that
\begin{align*}
\delta < \frac12 \min\left\{ \min_{l'< l}|\widetilde{p}_{\pi(l')}-\widetilde{p}_{\pi(l'+1)}|, -\frac{1}{(l+1)L}\cdot\lim_{p\uparrow \widetilde{p}_{\pi(l)}} g_{\pi(l+1)}(p, \widetilde{\bp}_{-\pi(l+1)}^{\pi(1:l)}) \right\}.
\end{align*}
Suppose moreover that there exists $t_{\delta} \in \mathbb{N}$ such that, for all $t> t_{\delta}$,
(i) $\bp_t$ is aligned with $\pi$,
(ii) $\mbox{$|p_{\pi(l'),t}-\widetilde{p}_{\pi(l')}| < \delta$}$ for $l' \leq l$, and
(iii) $\eta_t < \delta/G$.
Then, there exists $t' > t_\delta$ such that $p_{\pi(l+1),t} < \widetilde{p}_{\pi(l)}$ for all $t > t'$.
\end{lemma}

Thus, by strong unimodality:
\begin{align*}
g_{\pi(l+1)}(p_{\pi(l+1),t},\widetilde{\bp}_{-\pi(l+1)}^{\pi(1:l)})(\widetilde{p}_{\pi(l+1)}-p_{\pi(l+1),t}) \geq K(\widetilde{p}_{\pi(l+1)}-p_{\pi(l+1),t})^2 \geq K\epsilon.
\end{align*}
}
 Re-arranging, we obtain:
\begin{align}\label{eq:gd-cvg-n-firms}
-g_{\pi(l+1)}(p_{\pi(l+1),t},\widetilde{\bp}_{-\pi(l+1)}^{\pi(1:l)})(\widetilde{p}_{\pi(l+1)} - p_{\pi(l+1),t}) \leq -K\epsilon.
\end{align}

Since $\bp_t$ and $\widetilde{\bp}_t$ are aligned with the same ordering for $l' \leq l$, and $p_{\pi(l+1), t} < \widetilde{p}_{\pi(l)}$, by smoothness we have that, for $t > t_\delta$:
\begin{align*}
g_{\pi(l+1)}(p_{\pi(l+1),t},\widetilde{\bp}_{-\pi(l+1)}^{\pi(1:l)}) - l L\delta \leq g_{\pi(l+1)}(p_{\pi(l+1),t},{\bp}_{-\pi(l+1),t}^{\pi(1:l)}) \leq g_{\pi(l+1)}(p_{\pi(l+1),t},\widetilde{\bp}_{-\pi(l+1)}^{\pi(1:l)}) + l L\delta.
\end{align*}

Plugging this into~\eqref{eq:gd-cvg-n-firms}:
\begin{align*}
-g_{\pi(l+1)}(p_{\pi(l+1),t},{\bp}_{-\pi(l+1),t}^{\pi(1:l)})(\widetilde{p}_{\pi(l+1)} - p_{\pi(l+1),t}) \leq -K\epsilon + l L\delta \pmax \leq -K\epsilon/2.
\end{align*}
Thus, we obtain $d_{t+1} \leq d_t - \eta_tK\epsilon + \eta_t^2 G^2$.

Let $\gamma_t = \eta_tK\epsilon - \eta_t^2G^2$. For large enough $t$, $\gamma_t > 0$, and moreover $\sum_t \gamma_t = \infty$ by the summability assumptions on $(\eta_t)$. Let $t_{\epsilon} = \min \left\{t \, \mid \, \gamma_t > 0, t > t'\right\}$.

We complete the proof by showing that there exists $t'_{\epsilon}$ such that for all $t > t'_{\epsilon}$, if $d_t < \epsilon$, then $d_{t+1} < 2\epsilon$. We use the facts that $|g_{\pi(l+1)}(p_{\pi(l+1),t},\bp_{-\pi(l+1),t})| \leq G$ and $|\widetilde{p}_{\pi(l+1)} - p_{\pi(l+1),t}| < \sqrt{\epsilon}$ and plug into~\eqref{eq:induc-n-firms}, to obtain
$d_{t+1} < \epsilon +2\eta_tG\sqrt{\epsilon} + \eta_t^2 G^2 < 2\epsilon$
for large enough $t$, since $\eta_t$ is decreasing and $\lim\limits_{t\to\infty} \eta_t = 0$ by the square-summability assumption.

Thus, for $t > \max\{t_{\epsilon}, t'_{\epsilon}, t_{\delta}\}$, as long as $d_t \geq \epsilon$, $d_t$ strictly decreases by $\gamma_t$. Since $\sum_t \gamma_t = \infty$, there exists a point past which $d_t < \epsilon$. When this occurs, we have $d_{t+1} < 2\epsilon$. Taking $\epsilon$ to be arbitrarily small we obtain convergence of $p_{\pi(l+1),t}$ to $\widetilde{p}_{\pi(l+1)}$. This completes the proof of the inductive step. 

Rolling this argument out, we must have that $l^* = N$, and (i) the valid order-price pair to which $\pi$ belongs exists, and (ii) $\lim_{t\to\infty} p_{\pi(l),t} = \widetilde{p}_{\pi(l)}$ for all $l \in [N]$.\hfill\Halmos
\end{proof}


We conclude with a proof of \cref{lem:bounded-away}.

\begin{proof}[Proof of \cref{lem:bounded-away}.]
The proof is similar to that of \cref{lem:correctly-ordered-stay-correctly-ordered}. By assumption, for all $\mbox{$t > t_{\delta}$}$, $\mbox{$p_{\pi(l+1),t} < p_{\pi(l),t} < \widetilde{p}_{\pi(l)}+\delta$}$. We consider two cases.

\medskip 
\noindent\textbf{Case 1: $p_{\pi(l+1),t} \leq \widetilde{p}_{\pi(l)} - \delta$.}  As in the proof of \cref{lem:correctly-ordered-stay-correctly-ordered}, we have:
\begin{align*}
p_{\pi(l+1),t+1} = p_{\pi(l+1),t} + \eta_t g_{\pi(l+1)}(p_{\pi(l+1),t},\bp_{-\pi(l+1),t}^{\pi(1:l)}) \leq p_{\pi(l+1),t} + \eta_t G \stackrel{(a)}{<} p_{\pi(l+1),t} + \delta < \widetilde{p}_{\pi(l)},
\end{align*}
where $(a)$ follows from the fact that $\eta_t < \delta/G$.

\medskip

\noindent\textbf{Case 2: $|{p}_{\pi(l+1),t}-\widetilde{p}_{\pi(l)}| < \delta$.} We show that, for all $t > t_{\delta}$, $p_{\pi(l+1),t+1} \leq p_{\pi(l+1),t} - \gamma_t$, where $\gamma_t > 0$ for all $t$ and $\sum_t \gamma_t = \infty$. If this holds, there exists $t' > t_{\delta}$ such that $p_{\pi(l+1),t'} < \widetilde{p}_{\pi(l)} - \delta$, and we are back in Case 1. We have:
\begin{align}\label{eq:aux-lem-case-2}
p_{\pi(l+1),t+1} &= p_{\pi(l+1),t}  + \eta_t g_{\pi(l+1)}(p_{\pi(l+1),t},\bp_{-\pi(l+1),t}^{\pi(1:l)}) \\
&\leq p_{\pi(l+1),t} + \eta_t \cdot \left(\lim_{p\uparrow \widetilde{p}_{\pi(l)}} g_{\pi(l+1)}(p, \widetilde{\bp}_{-\pi(l+1)}^{\pi(1:l)}) + (l+1)L\delta \right) \notag \\
&\leq p_{\pi(l+1),t} + \eta_t \cdot \frac12 \cdot \lim_{p\uparrow \widetilde{p}_{\pi(l)}} g_{\pi(l+1)}(p, \widetilde{\bp}_{-\pi(l+1)}^{\pi(1:l)})
\end{align}
where \eqref{eq:aux-lem-case-2} follows from the definition of $\delta$. Taking $\gamma_t = - \frac12\eta_t \lim_{p\uparrow \widetilde{p}_{\pi(l)}} g_{\pi(l+1)}(p, \widetilde{\bp}_{-\pi(l+1)}^{\pi(1:l)})$, we obtain $\gamma_t > 0$ since $\lim_{p\uparrow \widetilde{p}_{\pi(l)}} g_{\pi(l+1)}(p, \widetilde{\bp}_{-\pi(l+1)}^{\pi(1:l)}) < 0$ by the assumption that the supremum is attained. Moreover, $\sum_t \gamma_t = \infty$, by non-summability of $\eta_t$.\hfill\Halmos
\end{proof}

\subsubsection{Proof of \Cref{prop:never-at-boundary}}\label{apx:never-at-boundary}

\begin{proof} Strong unimodality implies that $p_i^*(\bp_{-i}^\cS)$ is unique. We first argue that $p_i^*(\bp_{-i}^\cS) \neq 0$. Suppose for contradiction that $p_i^*(\bp_{-i}^\cS) = 0$. This implies that the optimal revenue is $0$. Since the revenue is non-negative, this means that $i$ obtains zero revenue for all $p \in [0, \nearestminprice)$, contradicting strong unimodality. Thus $p_i^*(\bp_{-i}^\cS) > 0$, implying $\min\{p_i^*(\bp_{-i}^\cS), \nearestminprice\}>0$. The strong unimodality condition then implies that $g_i(0, \bp_{-i}^\cS) > 0$, thus showing the first claim.

We next turn to the second claim. We have already argued that $p_i^*(\bp_{-i}^S) \neq 0$. We now argue that $\mbox{$\bar{v} \notin \argmax_{p\in [0,\bar{v}]}R_i(p,\bp_{-i}^S)$}$ if $\nearestminprice = \pmax$. Suppose again for contradiction that $\mbox{$\bar{v} \in \argmax_{p\in [0,\bar{v}]}R_i(p,\bp_{-i}^S)$}$. Similarly, since $R_i(\bar{v},\bp_{-i}^S) = 0$, this implies that seller $i$ obtains zero revenue for all $p \in [0, \nearestminprice)$, contradicting the strong unimodality of $R_i(p,\bp_{-i}^\phi)$ over $[0,\bar{v})$. Thus, $\mbox{$p_i^*(\bp_{-i}^\phi) \in (0, \bar{v})$}$. Since, $g_i(p_i^*(\bp_{-i}^\phi), \bp_{-i}^\phi)) = 0$, strong unimodality again implies the second result. \hfill\Halmos
\end{proof}

%



\end{document}